\theoremstyle{remark}
\newtheorem{remark}{Remark}
\renewcommand\footnotetextcopyrightpermission[1]{}
\begin{document}

%%
%% The "title" command has an optional parameter,
%% allowing the author to define a "short title" to be used in page headers.
% \title{Privacy-Preserving Federated Item Cold-Start Recommendation via Diffusion-Based Generation}
\title{MDiffFR: Modality-Guided Diffusion Generation for Cold-start Items in Federated Recommendation}

%%
%% The "author" command and its associated commands are used to define
%% the authors and their affiliations.
%% Of note is the shared affiliation of the first two authors, and the
%% "authornote" and "authornotemark" commands
%% used to denote shared contribution to the research.

\author{Kang Fu}
\authornote{Both authors contributed equally to this research.}
\email{kangfu@bjtu.edu.cn}
% \orcid{1234-5678-9012}
\affiliation{%
  \institution{Key Laboratory of Big Data \& Artificial Intelligence in Transportation, Ministry of Education}
  \city{Beijing}
  % \state{Beijing}
  \country{China}
}
\affiliation{%
  \institution{School of Computer Science and Technology, Beijing Jiaotong University}
  \city{Beijing}
  % \state{Beijing}
  \country{China}
}

\author{Honglei Zhang}
\authornotemark[1]
% \authornote{Both authors contributed equally to this research.}
\email{honglei.zhang@bjtu.edu.cn}
\affiliation{%
  \institution{Key Laboratory of Big Data \& Artificial Intelligence in Transportation, Ministry of Education}
  \city{Beijing}
  % \state{Beijing}
  \country{China}
}
\affiliation{%
  \institution{School of Computer Science and Technology, Beijing Jiaotong University}
  \city{Beijing}
  % \state{Beijing}
  \country{China}
}

\author{Xuechao Zou}
\email{xuechaozou@bjtu.edu.cn}
\affiliation{%
  \institution{School of Computer Science and Technology, Beijing Jiaotong University}
  \city{Beijing}
  % \state{Beijing}
  \country{China}
}

% \author{Yidong Li}
\author{Yidong Li}
\authornote{Corresponding author. Email: ydli@bjtu.edu.cn}
\email{ydli@bjtu.edu.cn}
\affiliation{%
  \institution{Key Laboratory of Big Data \& Artificial Intelligence in Transportation, Ministry of Education}
  \city{Beijing}
  % \state{Beijing}
  \country{China}
}
\affiliation{%
  \institution{School of Computer Science and Technology, Beijing Jiaotong University}
  \city{Beijing}
  % \state{Beijing}
  \country{China}
}

% \author{Aparna Patel}
% \affiliation{%
%  \institution{Rajiv Gandhi University}
%  \city{Doimukh}
%  \state{Arunachal Pradesh}
%  \country{India}}

% \author{Huifen Chan}
% \affiliation{%
%   \institution{Tsinghua University}
%   \city{Haidian Qu}
%   \state{Beijing Shi}
%   \country{China}}

% \author{Charles Palmer}
% \affiliation{%
%   \institution{Palmer Research Laboratories}
%   \city{San Antonio}
%   \state{Texas}
%   \country{USA}}
% \email{cpalmer@prl.com}

% \author{John Smith}
% \affiliation{%
%   \institution{The Th{\o}rv{\"a}ld Group}
%   \city{Hekla}
%   \country{Iceland}}
% \email{jsmith@affiliation.org}

% \author{Julius P. Kumquat}
% \affiliation{%
%   \institution{The Kumquat Consortium}
%   \city{New York}
%   \country{USA}}
% \email{jpkumquat@consortium.net}

%%
%% By default, the full list of authors will be used in the page
%% headers. Often, this list is too long, and will overlap
%% other information printed in the page headers. This command allows
%% the author to define a more concise list
%% of authors' names for this purpose.
\renewcommand{\shortauthors}{Kang Fu et al.}

%%
%% The abstract is a short summary of the work to be presented in the
%% article.
\begin{abstract}
Federated recommendations (FRs) provide personalized services while preserving user privacy by keeping user data on local clients, which has attracted significant attention in recent years. However, due to the strict privacy constraints inherent in FRs, access to user-item interaction data and user profiles across clients is highly restricted, making it difficult to learn globally effective representations for new (cold-start) items. Consequently, the item cold-start problem becomes even more challenging in FRs. Existing solutions typically predict embeddings for new items through the attribute-to-embedding mapping paradigm, which establishes a fixed one-to-one correspondence between item attributes and their embeddings. However, this one-to-one mapping paradigm often fails to model varying data distributions and tends to cause embedding misalignment, as verified by our empirical studies. To this end, we propose MDiffFR, a novel generation-based modality-guided diffusion method for cold-start items in FRs. In this framework, we employ a tailored diffusion model on the server to generate embeddings for new items, which are then distributed to clients for cold-start inference. To align item semantics, we deploy a pre-trained modality encoder to extract modality features as conditional signals to guide the reverse denoising process. Furthermore, our theoretical analysis verifies that the proposed method achieves stronger privacy guarantees compared to existing mapping-based approaches. Extensive experiments on four real datasets demonstrate that our method consistently outperforms all baselines in FRs. 
\end{abstract}

%%
%% The code below is generated by the tool at http://dl.acm.org/ccs.cfm.
%% Please copy and paste the code instead of the example below.
%%
\begin{CCSXML}
<ccs2012>
   <concept>
       <concept_id>10002951.10003227.10003351.10003269</concept_id>
       <concept_desc>Information systems~Collaborative filtering</concept_desc>
       <concept_significance>500</concept_significance>
       </concept>
   <concept>
       <concept_id>10002978.10003029.10011150</concept_id>
       <concept_desc>Security and privacy~Privacy protections</concept_desc>
       <concept_significance>500</concept_significance>
       </concept>
 </ccs2012>
\end{CCSXML}

\ccsdesc[500]{Information systems~Collaborative filtering}
\ccsdesc[500]{Security and privacy~Privacy protections}

%%
%% Keywords. The author(s) should pick words that accurately describe
%% the work being presented. Separate the keywords with commas.
\keywords{Federated Recommendation, Item Cold-Start, Diffusion Model, Privacy Guarantee}

% \received{20 February 2007}
% \received[revised]{12 March 2009}
% \received[accepted]{5 June 2009}

%%
%% This command processes the author and affiliation and title
%% information and builds the first part of the formatted document.
\maketitle

\section{Introduction}
Recommender Systems (RSs) aim to identify items of potential interest to users from massive data collections, which require accurately modeling user preferences and item characteristics based on historical interaction data. However, this becomes particularly challenging when new (cold-start) items arrive without any historical interactions.
% To improve the performance of RS, researchers have characterized the emergence of new items without interaction history as the item cold-start problem, recognizing it as a key factor that can hinder accurate recommendations.
Item cold-start problem is a persistent and widely recognized challenge in RSs, frequently encountered in practical applications across various domains \cite{JIIS_survey}, such as e-commerce platforms \cite{e-commerce}, social recommendation scenarios \cite{social}, and online content services \cite{online}. In centralized settings, recommender models typically leverage item modality data, along with the interaction history of items sharing similar modal characteristics, as auxiliary signals to compensate for the lack of interaction history for cold-start items, effectively alleviating this challenge to a significant extent \cite{CLCRec, CGRC}. However, centralized approaches require collecting all user data to the server for model training, which raises privacy concerns and potential risks of data leakage, especially after the implementation of regulations like the General Data Protection Regulation \cite{GDPR}. To address this challenge, Federated Recommendations (FRs) have emerged as a promising distributed collaborative training architecture \cite{FCF, FL}.

\begin{figure}[h]
\centering
\includegraphics[width=\columnwidth]{./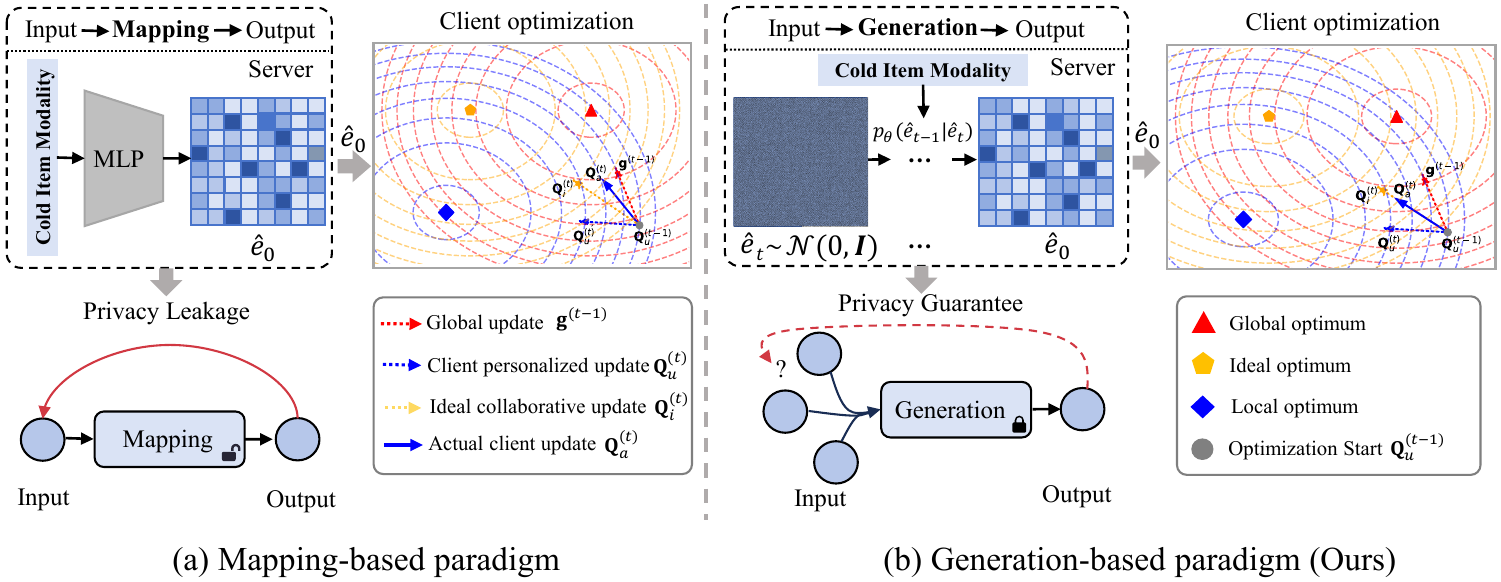}
\caption{Comparison between our generation-based and conventional mapping-based cold-start methods in federated settings. $\hat{e}_0$ and $\hat{e}_t$ denote predicted embeddings and noise samples, respectively. $p_\theta(\hat{e}_{t-1}|\hat{e}_t)$ represents the reverse denoising for reconstructing the data. Our method expands the solution space against inversion attacks, achieving enhanced privacy guarantees. On the client side, it ensures that the local optimization direction remains consistent with the ideal optimization trajectory.}
\label{fig:compare_framework}
\end{figure}

Specifically, FRs enable multiple clients to jointly train a global model without sharing their raw interaction data. Each client updates its local model using private data on the device side and uploads model parameters to a central server, where a global model is aggregated. The distributed learning framework ensures that raw user data never leaves local devices during training, effectively preserving user privacy \cite{FedAvg, LightFR}. Nevertheless, this privacy constraint makes the cold-start issue more challenging in FRs, since the server is unable to access all historical interactions to analyze the patterns in similar items, while the inherent sparsity of client-side data also limits the model’s ability to learn effective representations. IFedRec \cite{IFedRec} is an early attempt to address the item cold-start problem in FRs by deploying a multilayer perceptron (MLP) on the server to map item textual modalities (attributes) to item embeddings, as shown in Fig. \ref{fig:compare_framework}(a). The mapping function achieved by MLP is trained on the server using the global warm item embeddings as target representations to learn the relationship from item modalities to embeddings. Meanwhile, a regularization term is introduced on the client side to enforce the locally item embeddings to conform to the embeddings predicted by the server-side mapping function and distributed to clients, thereby ensuring that client-side models remain compatible with the embeddings of cold-start items. This design effectively alleviates the item cold-start problem under federated settings.

% \begin{figure}[t]
% \centering
% \includegraphics[width=\columnwidth]{./image/compare_framework_gradient.pdf}
% \caption{Comparison between our generation-based and conventional mapping-based cold-start methods in federated settings. $\hat{e}_0$ and $\hat{e}_t$ denote predicted embeddings and noise samples, respectively. $p_\theta(\hat{e}_{t-1}|\hat{e}_t)$ represents the reverse denoising for reconstructing the data. Our method expands the solution space against inversion attacks, achieving enhanced privacy guarantees. On the client side, it ensures that the client’s optimization direction is not perturbed by the global cold-start optimum.}
% \label{fig:compare_framework}
% \end{figure}

However, the mapping-based paradigm, which employs an MLP to map item modalities to their corresponding embeddings \cite{IFedRec}, has \textbf{three main limitations}: \textbf{(1)} the deterministic nature of the mapping makes the output highly sensitive to the input, exposing it to significant risks from inversion attacks. As illustrated in Fig. \ref{fig:compare_framework}(a), because the mapping is deterministic, an adversary can train a shadow model to infer the original input from the output, which poses a significant risk of privacy leakage; \textbf{(2)} in federated settings, IFedRec introduces a regularization term on the client side to enforce the locally learned item embeddings to conform to the distributed item embeddings predicted by the MLP on the server during the training phase, which ensures that the locally learned item embeddings remain compatible with the global predicted embedding space. However, this regularization term introduces a non-local objective into the loss function, causing the parameter update direction to deviate from the original optimization trajectory driven by each client’s local data and global collaborative information \cite{NoRegular, FedDC}. As shown in Fig. \ref{fig:compare_framework}(a), the local optimization direction $\bm{Q_a^{(t)}}$ should be determined jointly by the local data optimization $\bm{Q_u^{(t)}}$ and the global collaborative direction $\bm{g^{(t-1)}}$. Yet, the regularization term forces the local updates to lean excessively toward the global optimal direction, thereby severely undermining local personalized information and disrupting the model’s inherent optimization trajectory; \textbf{(3)} this approach merely learns a one-to-one static mapping function from modality features to item embeddings, and thus fails to capture the underlying data distribution. We conduct a preliminary experiment to examine the distribution of embeddings for items in the Food dataset \cite{NineRec} (experimental settings are provided in Section 6). We visualize the distributions of warm and cold-start item embeddings obtained by the mapping-based and generation-based methods in the final training round. As shown in Fig. \ref{fig:embedding_comparison}(a), the mapping-based method leads to embedding misalignment, resulting in the generated embeddings for cold-start items to cluster around limited regions and fail to capture the diversity of the embedding space.

% In addition, to ensure that the model's output remains consistent with warm item embeddings in FR, IFedRec introduces a regularization term during client training to align local embeddings with cold-start item embeddings, However, this regularization term introduces a non-local objective into the loss function, which causes the parameter update direction to deviate from the original optimization objective driven by local data \cite{NoRegular}.

\begin{figure}[t]
    \centering
    % \subfloat[Mapping-based embeddings]{
    \subfloat[Mapping-based method]{
        \includegraphics[width=0.45\columnwidth]{./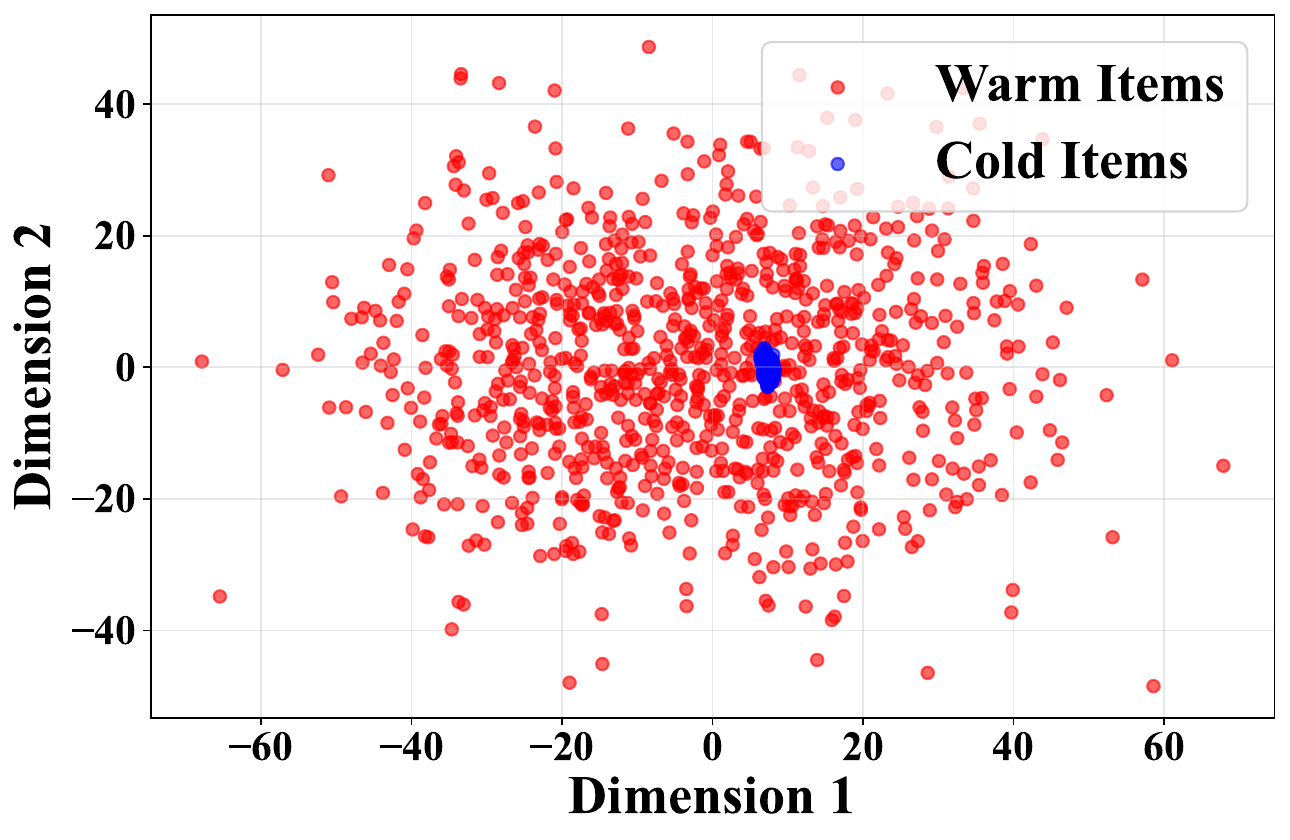}
        \label{fig:mlp_emb}
    }
    \hfill
    % \subfloat[Generation-based embeddings]{
    \subfloat[Generation-based method]{
        \includegraphics[width=0.45\columnwidth]{./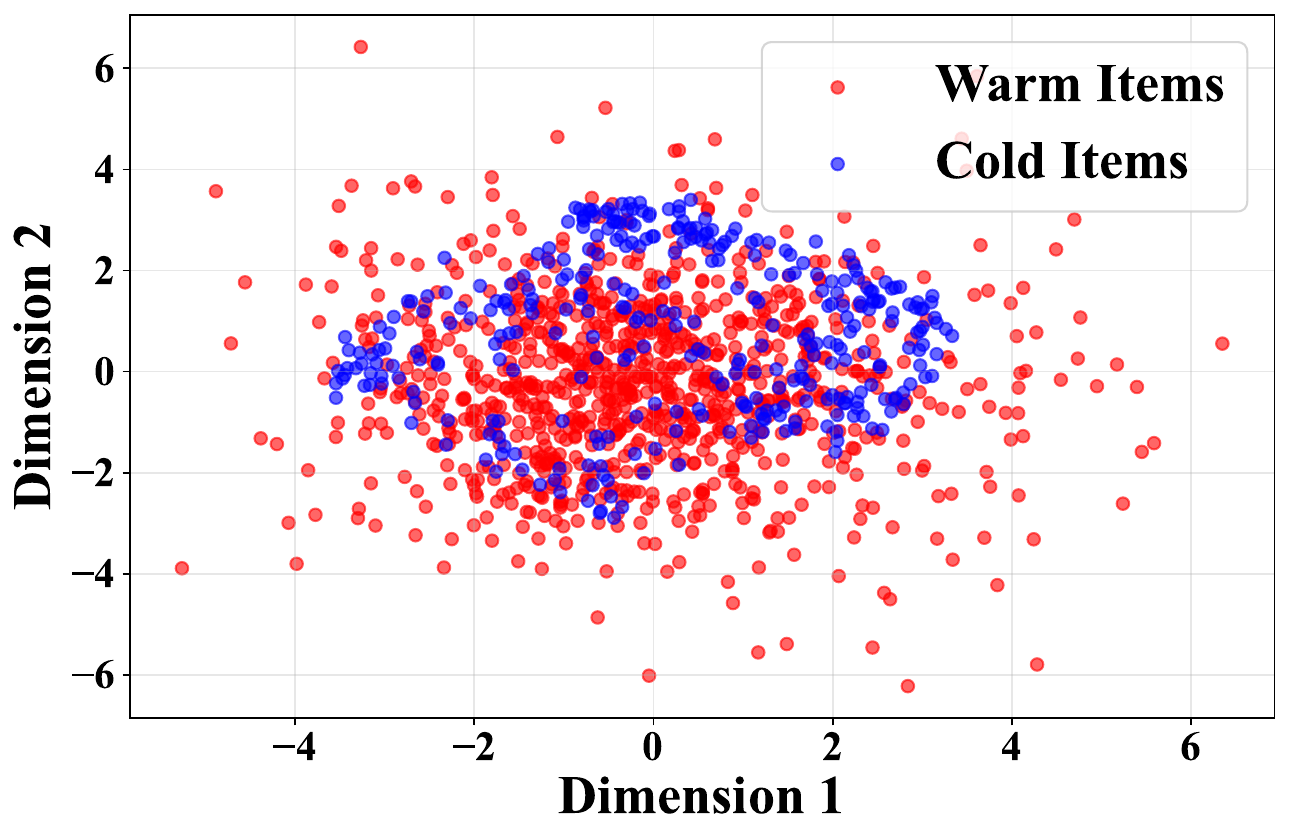}
        \label{fig:diff_emb}
    }
    \caption{Comparison of the embedding distributions generated by the Mapping-based and Generation-based methods. $\textbf{Warm items}$ denote those with user interactions, while $\textbf{Cold items}$ have no user interactions.}
    \label{fig:embedding_comparison}
\end{figure}

% Diffusion models, as a prominent class of generative methods, learn the underlying probability distribution of the training data to synthesize new samples \cite{DDPM}. They gradually add Gaussian noise in the forward process and reconstruct the data in the reverse process by iterative denoising. Such models have achieved remarkable success in image synthesis \cite{LiDAR-4D, selfguidance}. In contrast to the image domain, which focuses on reconstructing visual content at the pixel level, the FR primarily aims to reconstruct the distribution of item embeddings, which captures collaborative information about user preferences. \cite{FedRAP, embedding2preference}. Drawing on this analogy, we employ a tailored generation-based diffusion model to iteratively fit the data distribution of item embeddings, aiming to alleviate the embedding misalignment issue and ultimately generate better embeddings for cold-start items. 

Motivated by these limitations, we consider leveraging a generation-based diffusion model to tackle the item cold-start problem. Unlike mapping-based methods, diffusion models can capture the underlying data distribution\cite{DR_Survey}, thereby enabling a more effective modeling of user preferences.
Existing diffusion-based recommendation methods in centralized settings typically directly generate items by modeling users' historical interaction distributions to learn their preferences \cite{DiffRec}.
% , while others learn the distribution of item embeddings to better capture the underlying user preferences \cite{DiffuRec, Diff-MSR}. 
However, in federated settings, the inherent sparsity and decentralized nature of client-side data make it infeasible for diffusion models to learn reliable preference distributions solely from local interactions. 
Consequently, directly deploying diffusion models on the client side to generate cold-start items is impractical in FRs.

Based on the above analysis, we propose a \textbf{M}odality-guided \textbf{Diff}usion-based \textbf{F}ederated \textbf{R}ecomm\-endation model (\textbf{MDiffFR}) for the item cold-start problem in FRs, as illustrated in Fig. \ref{fig:compare_framework}(b). 
Our method deploys the model on the server to learn the distribution of global item embeddings, which implicitly reflects the users’ preference distribution\cite{embedding2preference}. 
During the training phase, modality features of items are incorporated as conditional guidance, enabling the diffusion model to not only capture the global preference distribution but also align the generated embeddings with the semantic features of items. During the inference phase, we sample noise from a prior distribution and progressively generate embeddings for cold-start items via the reverse denoising process. The generated item embeddings are then distributed to each client for personalized recommendations.

Compared to traditional mapping-based approaches, the MDiffFR possesses \textbf{three key advantages}: 
\textbf{(1)} it introduces stochasticity into the generation process by sampling noise at each step of the reverse denoising, thereby providing stronger privacy guarantees. As shown in Fig. \ref{fig:compare_framework}(b), the inherent stochasticity in the generation process leads to an extremely large solution space for any inverse mapping, making it practically infeasible for an adversary to recover the original input from the model’s output;
\textbf{(2)} since the diffusion-based generation process is trained solely on the globally aggregated item embeddings to ensure that the learned distribution approximates the distribution of item embeddings on the server, it does not impose any modifications or constraints on client-side training. As a result, the optimization of item embeddings on each client remains fully independent and is not influenced by the globally learned distribution. As shown in Fig. \ref{fig:compare_framework}(b), the local optimization direction $\bm{Q_a^{(t)}}$ is determined solely by the locally interactive data $\bm{Q_u^{(t-1)}}$ and the global collaborative information $\bm{g^{(t-1)}}$, thereby preserving the ideal local optimization direction. In contrast to Fig. \ref{fig:compare_framework}(a), where the optimization direction of the local item embeddings is forced to bias toward the distributed item predicted by the MLP on the server, our method enables the learned distribution to progressively approach the global item embeddings, thereby preserving the inherent optimization directions of the local items;
\textbf{(3)} it can capture the underlying data distribution of item embeddings by iteratively reconstructing the original data from noise, rather than merely learning a simple deterministic mapping. As illustrated in Fig. \ref{fig:embedding_comparison}(b), our generation-based method captures a more meaningful embedding distribution compared to the mapping-based method, generating embeddings that better align with the true distribution. Our main contributions are as follows:
% In addition, solely relying on the diffusion model to fit the data distribution and generate embeddings overlooks the influence of item-specific modality. Therefore, we introduce item modality as a condition to guide the diffusion process, to ensure that the generated item embeddings align with item semantics.

\begin{itemize}
    \item We propose MDiffFR, the first generation-based framework for cold-start items in FRs, leveraging iterative denoising to model the underlying distribution of item embeddings while preserving the optimization direction of client-side models.
    \item To ensure semantic alignment of the generated item embeddings, we introduce modality features as conditional guidance for the model’s reverse generation process.
    \item We discuss the inherent privacy limitations of existing mapping-based methods and theoretically analyze the strong privacy protection capability of MDiffFR.
    \item Extensive experiments on four real-world datasets show that our method achieves superior performance for item cold-start recommendation and stronger privacy guarantees. 
\end{itemize}

The rest of the paper is structured as follows. Section 2 reviews the existing studies related to federated recommendation and clarifies the position of our work within this research landscape. Section 3 introduces the preliminary knowledge required to understand MDiffFR. Section 4 presents the overall framework of MDiffFR and provides detailed descriptions of each component. Section 5 analyzes the privacy guarantees of MDiffFR under the inversion attack. Section 6 reports experimental results on four real-world datasets to validate the effectiveness of MDiffFR in terms of both recommendation performance and privacy protection. Finally, Section 7 concludes this paper.

\section{Related work}
In this section, we will briefly review the related work and outline the position of our work within the research landscape.
\subsection{Cold-start Recommendation}
Cold-start recommendation is prevalent in real-world applications and can be broadly categorized into item cold-start \cite{CB1-ItemCS, CB2-ItemCS, CF1-ItemCS, ItemCS, ALDI}, user cold-start \cite{UserCS, userCS3, userCS2}, user-item cold-start \cite{UserItemCS}, and system-level cold-start scenarios \cite{PromptRec}. Item cold-start is a critical component of the cold-start problem and significantly impacts the overall performance of recommendation systems. In centralized recommendation systems, item cold-start is primarily addressed through collaborative filtering and content-based recommendation \cite{shi2014collaborative,JIIS_survey}. Collaborative filtering methods construct similarities between users to facilitate recommendations for new items \cite{guan2024hybrid}. Content-based approaches describe cold-start items using their modality and establish similarity relationships with existing items \cite{IFedRec,CGRC}. However, in such centralized recommendation scenarios, all user information must be uploaded to the server, including sensitive data such as interaction histories, which poses serious privacy risks \cite{FedAvg, FCF}.

\subsection{Federated Recommendation}
FRs train models locally on client devices and aggregate the updates on a central server, ensuring that users' private data never leaves their local devices and thereby effectively preserving user privacy \cite{FedAvg, FCF, FedCA, Debiased, User-Governed}. 
FedMF \cite{FedMF} employs a matrix factorization approach under the federated setting to enable secure recommendations.
FedNCF \cite{FedNCF} introduces neural networks as the scoring function to capture nonlinear collaborative relationships.
FedPA \cite{FedPA} adopts a dual-tower architecture consisting of user-level and user-group-level models on the client side to achieve more personalized recommendation performance.
FedRAP \cite{FedRAP} utilizes additive personalization embeddings to preserve client-specific information and employs sparsity-constrained global item embeddings to retain non-personalized information, thereby improving both personalization and communication efficiency.
HeteFedRec \cite{HeteFedRec} proposes a heterogeneous federated recommendation framework that supports personalized model size assignment for participants. It further introduces a heterogeneous aggregation strategy to effectively integrate recommendation models of varying sizes, thereby improving overall recommendation performance. 
Despite these advances, existing FRs primarily focus on personalized recommendation \cite{PFedRec, FedPA, FedRAP}, efficient model aggregation \cite{Fedfast}, communication efficiency \cite{RFRec}, and model heterogeneity \cite{HeteFedRec}, while the cold-start problem has been relatively underexplored. 

New items without interaction history can substantially degrade overall recommendation performance, making it crucial to address this problem.
IFedRec \cite{IFedRec} is the first to address the item cold-start problem in FRs by deploying an attribute network on the server to predict embeddings for new items. Subsequently, FR-CSU \cite{FR-CSU} extends this line of research by investigating the user cold-start problem. In this paper, we further explore the item cold-start problem in FRs by addressing two critical issues in existing methods: embedding misalignment and privacy leakage.

\subsection{Diffusion-based Recommendation}

Diffusion models have achieved remarkable success in various domains such as image synthesis and text generation, owing to their strong ability to capture complex data distributions and generate high-quality samples \cite{DDPM, DDIM}. Motivated by this, DiffRec \cite{DiffRec} first introduced a diffusion-based recommendation paradigm. Specifically, it leverages an encoder and the sequential information of user–item interactions to derive two variants, L-DiffRec and T-DiffRec, which further enhance the model’s performance. To address the limitation of existing sequential recommendation methods that represent items with fixed vectors, DiffuRec \cite{DiffuRec} pioneers the use of diffusion models in sequential recommendation. By modeling item representations as probability distributions, it achieves remarkable recommendation performance. In contrast to DiffuRec, DiffuASR \cite{DiffuASR} explores diffusion models from the perspective of sequence augmentation, generating pseudo interaction sequences to improve recommendation performance.
Furthermore, MCDRec \cite{MCDRec} employs diffusion models to model and fuse multi-modal information, proposing a multi-conditioned representation diffusion module and a diffusion-guided graph denoising module to integrate multi-modal features into item representations while filtering noise from user interaction histories. DiffMM \cite{DiffMM} adopts a modality-aware graph diffusion framework to achieve more effective alignment between multimodal features and collaborative information.
RecDiff \cite{RecDiff} proposes a diffusion-based social recommendation model that exploits the denoising capability of diffusion models to effectively mitigate noise contamination in social recommendation.

Moreover, diffusion models have also been utilized in centralized settings to tackle the cold-start problem in multi-scenario recommendation and click-through rate prediction tasks, demonstrating strong effectiveness \cite{Diff-MSR, CSDM}.    
In our work, we leverage the uncertainty generation and feature distribution capturing capabilities of diffusion models to explore the privacy-preserving cold-start recommendation, and mitigate the embedding misalignment between warm and cold items as well as between the server and clients under the federated setting.

\section{Preliminary}
In this section, we introduce the workflow and architecture of FRs, and subsequently present the research problem of federated item cold-start recommendation explored in this work.
\subsection{Federated Recommendation}
Federated Recommendation is a distributed collaborative learning framework consisting of a central server and multiple clients. The clients are responsible for training local models using their own data, while the server coordinates the aggregation of these local models to enable collaborative learning across clients.  
Typically, the local model on client $u$ includes three main components: the user embedding $\mathbf{e}_u$, the item embedding $\mathbf{e}_i$, and a scoring function $f(\cdot)$.  
Among them, the user embeddings $\mathbf{e}_u$ and the scoring function $f(\cdot)$ are stored locally on the client side since they involve sensitive user information, whereas the item embeddings $\mathbf{e}_i$ are uploaded to the central server for aggregation, enabling global model optimization and collaborative training among multiple clients.

Let $\mathcal{U}$ and $\mathcal{I}$ denote the sets of users and items, respectively. Each user in $\mathcal{U}$ corresponds to an individual client. On each client, the local model is trained using its own historical interaction data. Assume that the historical interaction data of client 
$u$ is denoted as $\mathcal{D}_u = \{ i_1, i_2, \ldots, i_m \}$. where each $i \in \mathcal{I}$ represents an item that user $u$ has interacted with. For each interacted item $i$, we define $y_i = 1$, indicating the existence of an interaction between the user $u$ and the item $i$. 
Accordingly, the local optimization objective of client $u$ can be formulated as:  
\begin{equation}
    \mathcal{L}_u = - \sum_{i \in \mathcal{D}_u} \left( y_i \log \hat{y}_{ui} + (1 - y_i) \log (1 - \hat{y}_{ui}) \right),
\end{equation}
where $\hat{y}_{ui} = f(\mathbf{e}_u, \mathbf{e}_i)$ denotes the predicted interaction score computed by the scoring function $f(\cdot)$ based on the user embedding $\mathbf{e}_u$ and item embedding $\mathbf{e}_i$.  

Afterwards, the clients upload the updated item embedding parameters or gradients to the central server, which performs model aggregation and redistributes the aggregated model back to the clients for the next training and recommendation round.
Throughout this process, private data such as user historical interaction data and profiles remain on the client side, ensuring privacy preservation, while the server-side aggregation facilitates the sharing of common parameters and collaborative learning among clients.

\subsection{Federated Item Cold-start Recommendation}
In federated item cold-start recommendation, the user set $\mathcal{U}$ remains fixed, while new items are continuously added to the item set $\mathcal{I}$. Since these new items have no interaction history with any user in $\mathcal{U}$, traditional federated recommendation models fail to accurately represent them in the embedding space, thereby limiting their recommendation performance.
Given a cold-start item $i_{cold}$, the predicted rating score of user $u$ for this item can be formulated as:
\begin{equation}
    \hat{y}_{i_{cold}}=f(e_u,e_{i_{cold}}),
    \label{eq:score_function}
\end{equation}
where $e_{i_{cold}}$ represents the embedding of new item $i_{cold}$. The quality of $e_{i_{cold}}$ directly determines the accuracy of the predicted score.
Therefore, federated item cold-start recommendation aims to enhance the model’s capability to handle new items by effectively generating and aligning representations for new items without compromising user privacy. 

\section{Methodology}
In this section, we provide a detailed description of the proposed method and summarize its workflow through pseudocode for clearer exposition.
\subsection{Framework Overview}
Fig. \ref{fig:framework} presents the overall framework of the Modality-guided Diffusion-based Federated Recommendation (MDiffFR), which comprises two phases: training and inference. 
During the training phase, each client trains its local model and uploads the updated item embeddings to the server. The server aggregates these embeddings as global item embeddings, which are then redistributed to all clients for the next training round. Meanwhile, the diffusion model on the server learns the distribution of global item embeddings by performing a forward process to gradually add noise and a reverse process to iteratively denoise. To align the generated embeddings with the semantic information of items, we employ a modality encoder on the server to encode item modalities, enabling the modality features to serve as conditional guidance during the reverse denoising process.
During the inference phase, the model initially samples noise from a prior distribution and extracts modality features for cold-start items through the modality encoder. These modality features serve as conditions to guide the diffusion model to progressively reconstruct the embeddings for cold-start items from the noise. The generated embeddings are then distributed to clients to facilitate cold-start recommendations.

\begin{figure*}[t]
\centering
\includegraphics[width=\textwidth]{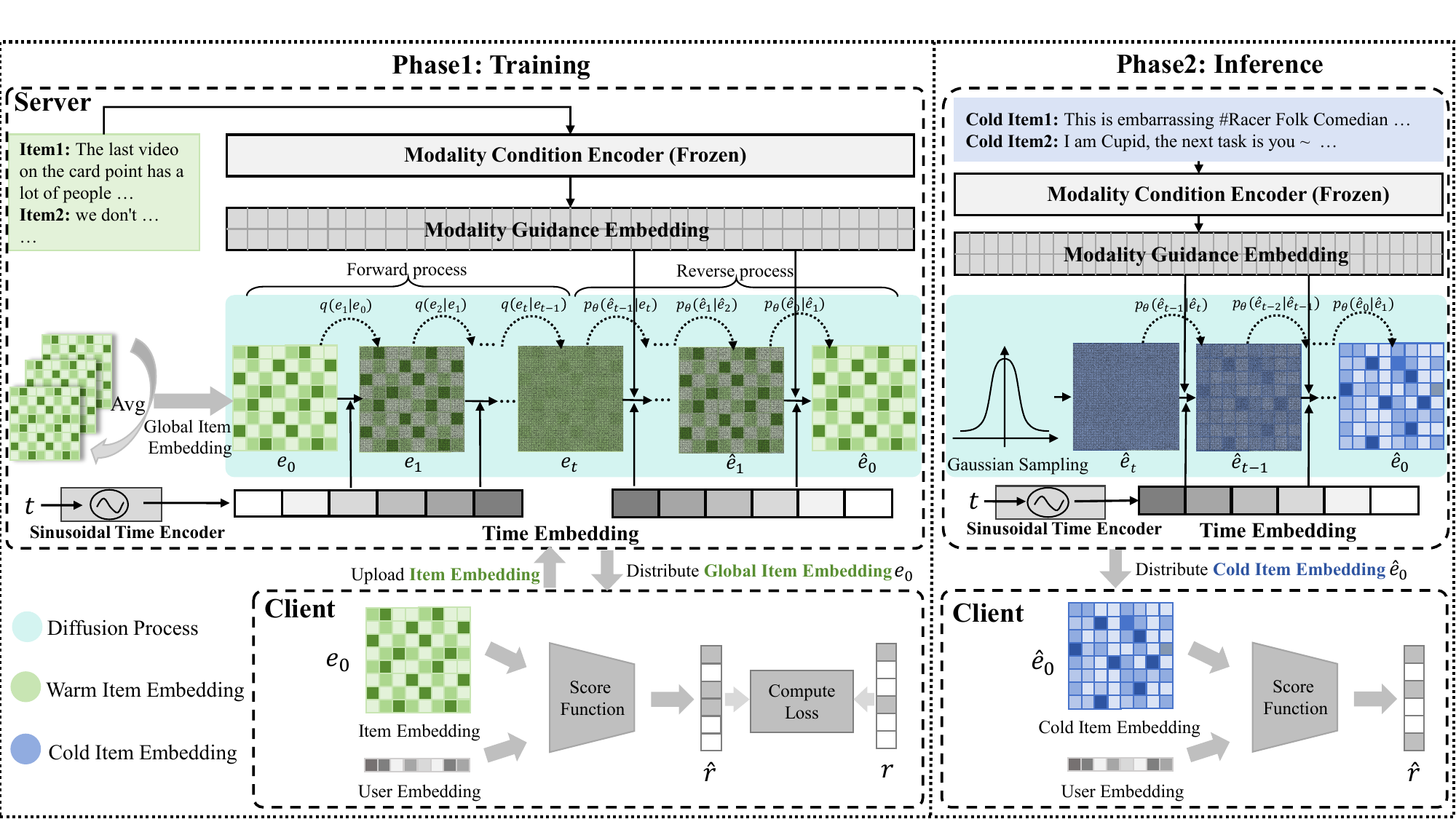} % Reduce the figure size so that it is slightly narrower than the column. Don't use precise values for figure width. This setup will avoid overfull boxes.
\caption{The overall framework of MDiffFR. We employ a modality encoder and a diffusion model on the server to extract modality features and learn the distribution of item embeddings, respectively. During the training phase, the aggregated global item embeddings are first distributed to each client for the next training round. Then, we add an iterative prior noise to the global item embeddings and encode modalities as conditions to guide the reverse denoising process of the diffusion model, ensuring alignment between the item embeddings and their semantics on the server. During the inference phase, we sample noise from a prior distribution and leverage encoded modality features to guide the reverse denoising process to generate semantically aligned item embeddings.}
\label{fig:framework}
\end{figure*}

\subsection{Modality-Guided Diffusion Processes}
During the training phase, the diffusion process on the server consists of two primary steps: the forward diffusion process injects noise into the embeddings, and the reverse denoising process leverages modality features as conditions to progressively reconstruct the original data.

\subsubsection{Forward Diffusion Process}
Given the global item embeddings \( \mathbf{E} \in \mathbb{R}^{m \times d} \), we set \( \bm{e}_0 = \mathbf{E} \) as the initial state, and iteratively add noise to \( \bm{e}_0 \) through the forward diffusion process. Under the Markov assumption, the state $\bm{e}_t$ at time step $t$ satisfies:
\begin{equation}
q(\bm{e}_t| \bm{e}_{t-1})=\mathcal{N}(\bm{e}_t;\sqrt{1-\beta_t} \bm{e}_{t-1},\beta_t \bm{I}),
\label{eq1}
\end{equation}
here, $t \in \{1, 2, \dots, T\}$, and $\beta_t$ denotes the variance of the Gaussian noise added at the $t$-th diffusion step. Leveraging the reparameterization trick and the additivity property of independent Gaussian distributions, the state $\bm{e}_t$ at any time step $t$ can be directly obtained from $\bm{e}_0$ as:
\begin{equation}
q(\bm{e}_t| \bm{e}_0)=\mathcal{N}(\bm{e}_t;\sqrt{\bar{\alpha}_t} \bm{e}_0,(1-\bar{\alpha}_t) \bm{I}),
\label{eq2}
\end{equation}
where $\alpha_t = 1 - \beta_t$, and $\bar{\alpha}_t = \prod_{s=1}^t \alpha_s$. Using the reparameterization trick \cite{DDPM}, we can express $\bm{e}_t$ as:
\begin{equation}
    \bm{e}_t = \sqrt{\bar{\alpha}_t} \, \bm{e}_0 + \sqrt{1 - \bar{\alpha}_t} \bm{\epsilon},
    \label{eq:compute_e_t}
\end{equation}
where $\bm{\epsilon} \sim \mathcal{N}(0, \bm{I})$. To control the amount of noise added at each step, we adopt the linear noise scheduler \cite{DiffRec} to regulate the noise scale: 
\begin{equation}
1-\bar{\alpha}_t=s\cdot\left[\alpha_{\min}+\frac{t-1}{T-1}(\alpha_{\max}-\alpha_{\min})\right],
\label{eq3}
\end{equation}
where $s$ denotes the noise scale factor, while $\alpha_{\max}$ and $\alpha_{\min}$ represent the maximum and minimum values of the noise scale, respectively.

\subsubsection{Reverse Denoising Process}
After obtaining $e_t$ through the forward diffusion process, the reverse denoising process starts from $e_t$ and iteratively reconstructs the original data by sampling the learned distributions. To enhance semantic alignment, MDiffFR leverages modality features to guide the reverse process, which is modeled as:
\begin{equation}
\small
p_\theta\left(\bm{e}_{t-1} | \bm{e}_t, \bm{m}\right)=\mathcal{N}\left(\bm{e}_{t-1};\mu_\theta\left(\bm{e}_t, t, \bm{m}\right),\Sigma_\theta\left(\bm{e}_t, t, \bm{m}\right)\right),
\label{eq4}
\end{equation}
where $\mu_\theta\left(\bm{e}_t, t, \bm{m}\right)$ and $\Sigma_\theta\left(\bm{e}_t, t, \bm{m}\right)$ denotes the mean the variance of the Gaussian distribution predicted by a neural network parameterized by $\theta$, which takes as input the noisy embedding $\bm{e}_t$, time step $t$, and modality condition $\bm{m}$. Accordingly, MDiffFR is trained by maximizing the marginal log-likelihood $\log p(\bm{e}_0 | \bm{m})$ of the target data, which is approximated by maximizing its evidence lower bound (ELBO):
\begin{equation}
\small
\begin{split}
    &\log p(\bm{e}_0 | \bm{m}) = \log \int p(\bm{e}_{0:T} | \bm{m}) \, d\bm{e}_{1:T} 
    \\&\geq \underbrace{\mathbb{E}_{q(\bm{e}_1 | \bm{e}_0)}\left[\log p_\theta(\bm{e}_0 | \bm{e}_1, \bm{m})\right]}_{\text{reconstruction term}}    \\
    & -\sum_{t=2}^T \underbrace{\mathbb{E}_{q(\bm{e}_t | \bm{e}_0)}\left[ D_{\text{KL}}\left( q(\bm{e}_{t-1} | \bm{e}_t, \bm{e}_0) \,\|\, p_\theta(\bm{e}_{t-1} | \bm{e}_t, \bm{m}) \right) \right]}_{\text{denoising matching term}}.
\end{split}
\label{eq:elbo}
\end{equation}
Here, the reconstruction term corresponds to the negative reconstruction error of recovering during the reverse denoising process. It measures how well the diffusion model can recover the embedding $\bm{e}_0$ from its noisy counterpart $\bm{e}_1$ under the conditional guidance of modality features $\bm{m}$. The denoising matching term reflects the iterative fitting of the true transition posterior $q(\bm{e}_{t-1}| \bm{e}_t,\bm{e}_0)$ at each step from $t=2$ to $T$ by the network parameterized by $\theta$ \cite{UnderstandDM}. It captures the gradual refinement of noisy embeddings across timesteps, leading to a smooth and stable generation process that produces semantically meaningful representations.

\subsection{Denoising Model Training}
The formulation of Eq. (\ref{eq:elbo}) shows that the training objective is primarily driven by two loss components: the denoising matching term $\mathcal{L}_{\text{dmt}}$ and the reconstruction term $\mathcal{L}_{\text{rt}}$.

\textbf{Estimation of denoising matching term.}
The form of the denoising matching term reveals that it encourages the distribution $ p_\theta(\bm{e}_{t-1} | \bm{e}_t,\bm{m})$ to approximate the true posterior $q(\bm{e}_{t-1} | \bm{e}_t, \bm{e}_0)$ via KL divergence. However, since directly modeling the true posterior $q(\bm{e}_{t-1} | \bm{e}_t, \bm{e}_0)$ is intractable, we reformulate the training objective using Bayes’ rule to obtain a tractable solution:
\begin{equation}
q\left(\bm{e}_{t-1}| \bm{e}_t, \bm{e}_0\right)\propto\mathcal{N}\left(\bm{e}_{t-1};\tilde{\mu}\left(\bm{e}_t, \bm{e}_0, t\right),\sigma^2(t) \bm{I}\right),
\label{eq6}
\end{equation}
here, $\tilde{\mu}\left(\bm{e}_t, \bm{e}_0, t\right)$ and $\sigma^2(t)\bm{I}$ represent the mean and standard deviation of the posterior distribution $q\left(\bm{e}_{t-1}| \bm{e}_t, \bm{e}_0\right)$, respectively, which can be further expressed as:
\begin{equation}
\begin{cases}
\begin{aligned}
&\tilde{\mu}(\bm{e}_t, \bm{e}_0, t)=\frac{\sqrt{\alpha_t}(1-\bar{\alpha}_{t-1})}{1-\bar{\alpha}_t} \bm{e}_t+\frac{\sqrt{\bar{\alpha}_{t-1}}(1-\alpha_t)}{1-\bar{\alpha}_t} \bm{e}_0, \\
&\sigma^2(t)=\frac{(1-\alpha_t)(1-\bar{\alpha}_{t-1})}{1-\bar{\alpha}_t}.
\end{aligned}
\end{cases}
\label{eq:miu_and_sigma}
\end{equation}
To further simplify the training process, we set $\Sigma_\theta(\bm{e}_t,t)=\sigma^2(t)\bm{I}$ directly \cite{DDPM}. Meanwhile, based on Eq. (\ref{eq:miu_and_sigma}), the posterior mean can be factorized as:
\begin{equation}
\small
\mu_\theta\left(\bm{e}_t, t, \bm{m}\right)=\frac{\sqrt{\alpha_t}\left(1-\bar{\alpha}_{t-1}\right)}{1-\bar{\alpha}_t} \bm{e}_t+\frac{\sqrt{\bar{\alpha}_{t-1}}\beta_t}{1-\bar{\alpha}_t}\hat{\bm{e}}_\theta\left(\bm{e}_t, t, \bm{m}\right).
\label{eq:miu_theta}
\end{equation}
By combining Eq. (\ref{eq:elbo}), Eq. (\ref{eq:miu_and_sigma}), and Eq. (\ref{eq:miu_theta}), the denoising matching term $\mathcal{L}_\text{dmt}$ is computed as:
\begin{equation}
\begin{aligned}
\mathcal{L}_{\text{dmt}} &= D_{\text{KL}}\left( q(\bm{e}_{t-1} | \bm{e}_t, \bm{e}_0) \,\|\, p_\theta(\bm{e}_{t-1} | \bm{e}_t, \bm{m}) \right)\\ 
&=\frac{1}{2}\left(\frac{\bar{\alpha}_{t-1}}{1-\bar{\alpha}_{t-1}}-\frac{\bar{\alpha}_t}{1-\bar{\alpha}_t}\right)\left\|\hat{\bm{e}}_\theta\left(\bm{e}_t, t,\bm{m}\right)-\bm{e}_0\right\|_2^2,
\label{eq:L_dmt}
\end{aligned}
\end{equation}
where $\bar{\alpha}_t$ denotes a predefined noise scheduling parameter, $\bm{e}_0$ is the target embedding, and $\hat{\bm{e}}_\theta(\bm{e}_t, t)$ represents the predicted embedding generated from the noisy input $\bm{e}_t$ at timestep $t$.

\textbf{Estimation of the reconstruction term.}
To simplify the optimization objective, we take the negative of the reconstruction term as follows $\mathcal{L}_{\text{rt}}$:
\begin{equation}
\begin{aligned}
    \mathcal{L}_{\text{rt}}&=-\mathbb{E}_{q(\bm{e}_1 | \bm{e}_0)}\left[\log p_\theta(\bm{e}_0 | \bm{e}_1,\bm{m})\right]\\
    &=E_{q\left(\bm{e}_1| \bm{e}_0\right)}\left[\left\|\hat{\bm{e}}_\theta\left(\bm{e}_1,1,\bm{m}\right)-\bm{e}_0\right\|_2^2\right].
\label{eq:L_rt}
\end{aligned}
\end{equation}
By combining Eq. (\ref{eq:elbo}), Eq. (\ref{eq:L_dmt}), and Eq. (\ref{eq:L_rt}), we obtain the final optimization objective, denoted as:
\begin{equation}
\begin{aligned}
\mathcal{L}_{\text{elbo}} &= - \mathcal{L_{\text{rt}}} - \Sigma_{t=2}^{T}{\mathcal{L}_{\text{dmt}}}  \\
&= \mathbb{E}_{t \sim \mathcal{U}(1, T)} \mathbb{E}_{q(\bm{e}_0)} \left[ \left\| \hat{\bm{e}}_{\theta}(\bm{e}_t, t,\bm{m}) - \bm{e}_0 \right\|_2^2 \right].
\label{eq:L_elbo}
\end{aligned}
\end{equation}

\subsection{Modality-Guided Strategy}
In the reverse denoising processes described above, we introduce a modality guidance condition $\bm{m}$ to ensure that the model not only captures the overall embedding distribution but also aligns the generated embeddings with the semantic information of items. Without such alignment, the generated embeddings fail to effectively represent cold-start items.

\textbf{Modality feature encoding.}
To enable modality features to serve as conditional guidance in the generative process, we deploy a modality condition encoder on the server to encode the modality information of items. In this work, we take textual modality as an example and adopt a pre-trained BERT model to extract modality features. Assuming that the textual modality of item $i$ consists of tokens $t_1^i,t_2^i,...,t_k^i$ , we obtain its modality feature representation as follows:
\begin{equation}
    {\bm{m}_i} = \mathcal{E}\left(\left\{ { \texttt{[CLS]};t_1^i,t_2^i,...,t_k^i} \right\} \right),
    \label{eq:extract_embedding}
\end{equation}
where $\mathcal{E}$ denotes the modality condition encoder implemented by a BERT model. The special token \texttt{[CLS]} is prepended to the token sequence, and its representation is used as the textual modality feature $\bm{m}_i$ of item $i$.

\textbf{Conditional dynamic fusion}
During the reverse denoising process, the importance of temporal and modality guidance conditions varies across different timesteps. In the early stage of reverse generation, the model needs to rapidly capture the overall embedding distribution from noise, making temporal information more crucial at this phase. As the generation progresses, the primary embedding distribution has already been captured, and the focus shifts toward achieving finer-grained semantic alignment between the embeddings and the corresponding items. Therefore, to dynamically capture the varying influence of guidance conditions during the reverse denoising process, we introduce a multi-head attention mechanism that adaptively fuses the guiding conditions across different timesteps. \cite{Attention}.
\begin{equation}
\mathbf{K} = \mathbf{V} = \text{Stack}\left[\mathcal{P}_t(\mathcal{T}_E(t)), \mathcal{P}_m(\bm{m})\right] 
\end{equation}
where the $\mathcal{T}_E(t)$ denotes a sinusoidal time encoding module; Its output is projected by a projection layer $\mathcal{P}_t$ to match the dimensionality of the guidance condition $\bm{m}$, which is processed by another projection layer $\mathcal{P}_m$. The projected representations are then stacked to construct the $\textbf{K}$ and $\textbf{V}$. Taking $\bm{e}_t$ as input to reconstruct $\bm{e}_{t-1}$, we have $head = \text{softmax}\left({\mathbf{Q}\mathbf{K}^\top}/{\sqrt{d/h}}\right),$
% \begin{equation}
%     \text{head} = \text{softmax}\left(\dfrac{\mathbf{Q}\mathbf{K}^\top}{\sqrt{D/h}}\right),
% \end{equation}
where $\mathbf{Q}=\bm{W}_q \cdot \bm{e}_t$ and $\bm{W}_q$ represent learnable parameters, $d$ is the dimensionality of the projected features, and $h$ represents the number of attention heads. The outputs from all attention heads are concatenated and projected to obtain the final fused representation, as shown in Eq. (\ref{eq:h_fused}).
\begin{equation}
    \bm{h}_\text{fused} = \text{Concat}(head_1,...,head_h)\cdot \bm{W}_o,
    \label{eq:h_fused}
\end{equation}
where $\bm{W}_o$ represents a learnable parameter matrix, and $\bm{h}_\text{fuse}$ denotes the fused representation, which serves as the final guidance signal for the reverse denoising process.

\subsection{Algorithm}
To better illustrate MDiffFR, we summarize the training and inference procedures procedures in Algorithm \ref{alg:algorithm_training} and \ref{alg:algorithm_inference}. 
 
As shown in Algorithm \ref{alg:algorithm_training}, we first utilize the modality encoder $\mathcal{E}$ to extract feature representations as conditional guidance during the training phase. Subsequently, the global item embeddings $\mathbf{E}$ are used as the initial state $\bm{e}_0$ to perform the forward and reverse diffusion processes, generating the predicted embedding $\hat{\bm{e}}_0$. The loss $\mathcal{L}_t$ is then computed and used to update the model parameters $\theta$.

\begin{algorithm}[h]
\centering
\caption{MDiffFR Training}
\label{alg:algorithm_training}
\begin{algorithmic}[1] %[1] enables line numbers
\Statex \textbf{Input}: participating clients $\mathcal{U}$; global rounds $R$; modality encoder $\mathcal{E}$; global parameters $\theta$; global item embedding $\mathbf{E}$;
\Statex \textbf{Server executes}:
\State Initial global embedding as $\mathbf{E}^1$;
\State Extract modality features $\mathbf{C}$ with Eq. (\ref{eq:extract_embedding}); 
\For{each round $r=1,2,...,R$}
    \If{$r > 1$}
        \State $\mathbf{E}^{r}  \leftarrow \text{Average}(\sum_{{u}=1}^n \mathbf{E}_{u}^{r-1})$;
    \EndIf
    \For{$i$ \textbf{in} server epoch}
        \State Initial $\bm{e}_0=\mathbf{E}^r$;
        \State Sample $t \sim \text{Uniform}(1,T)$;
        \State Compute $\bm{e}_t$ given $\bm{e}_0,t$ with Eq. (\ref{eq:compute_e_t});
        \State Compute $\mathcal{L}_{\text{elbo}}$ by Eq. (\ref{eq:L_elbo});
        \State Update $\theta$ based on $\nabla_\theta \mathcal{L}_t$;
    \EndFor
    \State Send $\mathbf{E}^r_{u}$ to each client ${u}$;
    \For{each client $u \in \mathcal{U}_p$ \textbf{in parallel}}
        \State $\mathbf{E}^{r+1}_{u} \leftarrow \mathrm{LocalTraining}   \left( \mathbf{E}_{u}^r, u \right)$ ;
    \EndFor
\EndFor
\end{algorithmic}
\end{algorithm}

During the inference phase, we first sample the data $\hat{\bm{e}}_t$ from a prior distribution, as illustrated in Algorithm \ref{alg:algorithm_inference}. Meanwhile, the cold-start item modality is encoded as a guidance condition for the generative process. MDiffFR then progressively reconstructs the data from noise through a denoising trajectory $\hat{\bm{e}}_t \rightarrow \hat{\bm{e}}_{t-1} \rightarrow \ldots \rightarrow \hat{\bm{e}}_0$. After obtaining the generated embedding $\hat{\bm{e}}_0$ for the new item, it is distributed to each client. Client $u$ then utilizes its local model to predict the rating score based on Eq. (\ref{eq:score_function}) and make recommendation decisions accordingly. The detailed procedure is provided in Algorithm \ref{alg:algorithm_inference}.

\begin{algorithm}[t]
\centering
\caption{MDiffFR Inference}
\label{alg:algorithm_inference}
\begin{algorithmic}[1]
\Statex \textbf{Input}: item modality.
\State Compute $\textbf{C}$ via Eq. (\ref{eq:extract_embedding})
\State Sample $\boldsymbol{\hat{\mathbf{\bm{e}}}_t} \sim \mathcal{N}(0, \mathbf{I})$
\For{$t = T, \dots, 1$}
    \State $\hat{\bm{e}}_{t-1} = \mu_\theta(\hat{\bm{e}}_t, t)$ calculated from $\hat{\bm{e}}_t$ via Eq. (\ref{eq:miu_theta})
\EndFor
\State Sends $\hat{\bm{e}}_0$ to each client $u$;
\end{algorithmic}
\end{algorithm}

\subsection{Light-MDiffFR}
To improve training efficiency, we reduce the number of training iterations by updating MDiffFR once every two rounds instead of training it in every round. This design provides a lightweight optimization that significantly improves the training efficiency of the model.

% \subsection{Algorithm}
% As illustrated in Algorithm 2, we also first encode the modality features of new items with Eq. (\ref{eq:extract_embedding}) during the inference phase. Then, we sample noise $\hat{\bm{e}}_t$ from a Gaussian prior distribution $\mathcal{N}$. Starting from the noisy input $\hat{\bm{e}}_t$, MDiffFR iteratively generates the target embedding $\hat{\bm{e}}_0$ through the reverse denoising process. Finally, we distribute the generated embeddings of cold-start items to each client for personalized recommendations.

\section{Privacy Guarantee Analysis}
In this section, we provide a theoretical comparison between mapping-based and generation-based models in terms of privacy leakage under adversarial inversion. 

In the presence of adversarial inversion attacks, we assume that an attacker can access a limited set of paired samples $\{(\bm{x}_i,\bm{y}_i)\}$, where $\bm{x}_i$ denotes the original modality feature and $\bm{y}_i$ is the corresponding item embedding.
Using these pairs, the attacker can train a shadow model to approximate the inverse mapping from embeddings $\bm{y}_i$ to modality features $\bm{x}_i$. Once the attacker obtains additional item embeddings from the system, the shadow model can be exploited to reconstruct the corresponding modality features, thereby recovering sensitive information from the embedding space. \cite{he2019model}.
For private modalities, such as expert-annotated commercial attributes, inversion attacks may lead to substantial commercial data leakage \cite{IFedRec}. For public modalities, the corresponding item embeddings often encode rich user–item interaction patterns across the entire system. As a result, unauthorized access to these embeddings can still reveal sensitive behavioral information, leading to significant privacy leakage \cite{FedMF,transFR}. Therefore, safeguarding the modality features is critical to preserving user privacy and preventing the risk of sensitive system-level information leakage.

% \subsection{Privacy Gap}
% To isolate the effect of stochasticity on privacy, we assume comparable marginal entropies between diffusion-based and deterministic embeddings, i.e., $H(\mathbf{E}{\mathrm{diff}}) \approx H(\mathbf{E}{\mathrm{map}})$.

\begin{lemma}[Inversion Risk Lower Bound via Mutual Information]
\label{lem:inversion_risk}
Let $\mathbf{X}$ be a discrete random variable taking $M$ possible values, and let $\mathbf{Y}$ denote its released embedding observed by an adversary. The adversary constructs an estimator $\hat{\mathbf{X}}(\mathbf{Y})$ to infer the original input $\mathbf{X}$, and the corresponding reconstruction error probability is defined as:
\begin{equation}
    P_e = \Pr(\hat{\mathbf{X}} \neq \mathbf{X}).
\end{equation}

Then, the reconstruction error $P_e$ is lower-bounded by the mutual information $I(\mathbf{X}; \mathbf{Y})$ as follows:
\begin{equation}
P_e \gtrapprox 1 - \frac{I(\mathbf{X}; \mathbf{Y}) + \log 2}{\log M}.
\label{eq:fano_approx}
\end{equation}
\end{lemma}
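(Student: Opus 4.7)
The statement is essentially a restatement of Fano's inequality, so the plan is to invoke Fano in its classical form and then absorb two mild approximations into the $\gtrapprox$ symbol appearing in the claim. First I would introduce the error indicator $E = \mathbb{1}[\hat{\mathbf{X}} \neq \mathbf{X}]$ and expand the joint conditional entropy $H(E, \mathbf{X} \mid \mathbf{Y})$ in two different ways using the chain rule. This is the standard derivation of Fano's inequality and I would not re-derive it in detail; the outcome is
\begin{equation}
H(P_e) + P_e \log(M-1) \;\geq\; H(\mathbf{X} \mid \mathbf{Y}).
\end{equation}

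Next I would apply two elementary inequalities to convert this into the claimed mutual-information form. The binary entropy of the error is bounded as $H(P_e) \leq \log 2$, and the identity $H(\mathbf{X} \mid \mathbf{Y}) = H(\mathbf{X}) - I(\mathbf{X}; \mathbf{Y})$ converts the right-hand side into a quantity involving $I(\mathbf{X}; \mathbf{Y})$. Treating $\mathbf{X}$ as approximately uniform over its $M$ possible values (the worst case for the adversary, and the natural assumption in the inversion-attack threat model where the defender wants a bound uniform over modality feature distributions) gives $H(\mathbf{X}) \approx \log M$. Substituting and rearranging yields
\begin{equation}
P_e \;\geq\; \frac{\log M - I(\mathbf{X}; \mathbf{Y}) - \log 2}{\log(M-1)}.
\end{equation}

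Finally, approximating $\log(M-1) \approx \log M$ for moderately large $M$ and moving the negative term out front produces exactly the bound stated in Eq.~(\ref{eq:fano_approx}). This last step, together with the near-uniform prior, justifies the use of $\gtrapprox$ rather than a strict $\geq$.

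The main obstacle is not any calculation but the modelling assumption that $\mathbf{X}$ is (approximately) uniform. Strictly speaking, the sharp inequality one obtains from Fano is $P_e \geq (H(\mathbf{X}) - I(\mathbf{X};\mathbf{Y}) - \log 2)/\log(M-1)$, and replacing $H(\mathbf{X})$ by $\log M$ is an upper bound on $H(\mathbf{X})$ rather than an equality. I would therefore flag in the proof that the claimed bound is tight in the worst case over priors on $\mathbf{X}$, which is the relevant regime for privacy analysis: the adversary's reconstruction error is lower bounded whenever the distribution of modality features is nontrivially spread, and the bound can be read as a worst-case guarantee controlled entirely by the mutual information $I(\mathbf{X}; \mathbf{Y})$ that the released embedding carries about the input.
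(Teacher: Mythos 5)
Your proposal follows essentially the same route as the paper's proof: Fano's inequality, the bound $H_b(P_e)\le\log 2$, the identity $H(\mathbf{X}\mid\mathbf{Y})=H(\mathbf{X})-I(\mathbf{X};\mathbf{Y})$, the near-uniform assumption $H(\mathbf{X})\approx\log M$, and the approximation $\log(M-1)\approx\log M$. Your explicit caveat that substituting $\log M$ for $H(\mathbf{X})$ goes in the wrong direction for a strict lower bound (hence the $\gtrapprox$) is a slightly more careful statement of the same assumption the paper makes, so the argument matches.
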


\begin{proof}
Let $\mathbf{X}$ be a discrete random variable supported on $M$ categories. Given the embedding $\mathbf{Y}$, the adversary attempts to reconstruct $\mathbf{X}$ through an estimator $\hat{\mathbf{X}}(\mathbf{Y})$. The probability of reconstruction error is:
\begin{equation}
    P_e = \Pr(\hat{\mathbf{X}} \neq \mathbf{X}).
\end{equation}

By Fano’s inequality, the conditional entropy of $\mathbf{X}$ given $\mathbf{Y}$ satisfies:
\begin{equation}
    H(\mathbf{X} \mid \mathbf{Y}) \leq H_b(P_e) + P_e \log(M - 1),
    \label{eq:fano_ineq}
\end{equation}
where $H_b(p)$ denotes the binary entropy function.  
From the definition of mutual information,
\begin{equation}
I(\mathbf{X}; \mathbf{Y}) = H(\mathbf{X}) - H(\mathbf{X} \mid \mathbf{Y}),
\end{equation}
which can be substituted into \eqref{eq:fano_ineq} yields:
\begin{equation}
I(\mathbf{X}; \mathbf{Y}) \geq H(\mathbf{X}) - H_b(P_e) - P_e \log(M - 1).
\label{eq:fano_MI}
\end{equation}

Rearranging terms gives an implicit lower bound on $P_e$:
\begin{equation}
    P_e \geq \frac{H(\mathbf{X})}{\log(M - 1)} - \frac{H_b(P_e) + I(\mathbf{X}; \mathbf{Y})}{\log(M - 1)}.
    \label{eq:implicit_bound}
\end{equation}

Since $H_b(P_e) \leq \log 2$, and assuming $\mathbf{X}$ follows a near-uniform distribution ($H(\mathbf{X}) \approx \log M$), we obtain the explicit form:
\begin{equation}
P_e \geq \frac{\log M}{\log(M - 1)} - \frac{I(\mathbf{X}; \mathbf{Y}) + \log 2}{\log(M - 1)}.
\end{equation}

For large $M$, we can approximate $\log(M-1) \approx \log M$, leading to:
\begin{equation}
P_e \gtrapprox 1 - \frac{I(\mathbf{X}; \mathbf{Y}) + \log 2}{\log M}.
\end{equation}

This result quantitatively connects the inversion risk with the mutual information between the original data $X$ and the corresponding output $Y$: lower $I(\mathbf{X}; \mathbf{Y})$ implies higher theoretical reconstruction error, thus indicating stronger privacy preservation.
\end{proof}

\begin{lemma}[Decomposition of Reverse Diffusion Process]
\label{lem:reverse_diffusion}
The reverse diffusion process can be approximated as the sum of a deterministic function of the modality input and an aggregated stochastic noise term, that is:
\begin{equation}
    \mathbf{E}_{\text{diff}} = s_T(\mathbf{x}_T, \mathbf{M}) + \mathbf{N}.
    \label{eq:diffusion_approx}
\end{equation}
where \( s_T(\mathbf{x}_T, \mathbf{M}) \) represents the deterministic generation trajectory conditioned on both modality \(\mathbf{M}\) and the initial input \(\mathbf{x}_T\), and \( \mathbf{N} \) denotes the accumulated Gaussian noise induced by the reverse diffusion process.

\end{lemma}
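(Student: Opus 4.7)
The plan is to establish the decomposition by unrolling the reverse Markov chain step by step, applying the reparameterization trick at each step, and then separating the deterministic contribution of $(\mathbf{x}_T, \mathbf{M})$ from the aggregated stochastic contribution of the injected noises. Concretely, starting from Eq.~(\ref{eq4}) and using $\Sigma_\theta(\bm{e}_t,t) = \sigma^2(t)\bm{I}$, each reverse transition admits the sample-level form
\begin{equation}
\bm{e}_{t-1} \;=\; \mu_\theta\!\bigl(\bm{e}_t, t, \bm{m}\bigr) \;+\; \sigma(t)\, \bm{z}_t, \qquad \bm{z}_t \sim \mathcal{N}(0,\bm{I}),
\end{equation}
with $\bm{e}_T = \mathbf{x}_T$ and $\bm{m} = \mathbf{M}$. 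I would then recursively substitute this expression from $t=T$ down to $t=1$, giving an explicit, though nested, functional expression $\bm{e}_0 = F\bigl(\mathbf{x}_T, \mathbf{M}, \{\bm{z}_t\}_{t=1}^{T}\bigr)$.

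Next, I would define the noise-free deterministic trajectory $s_T(\mathbf{x}_T, \mathbf{M}) := F\bigl(\mathbf{x}_T, \mathbf{M}, \{\bm{0}\}_{t=1}^{T}\bigr)$, i.e., the output obtained by clamping all $\bm{z}_t$ to zero and iterating the mean predictor $\mu_\theta$ alone. To obtain the additive decomposition, I would perform a first-order Taylor expansion of $F$ around the noise-free path, yielding
\begin{equation}
\bm{e}_0 \;\approx\; s_T(\mathbf{x}_T, \mathbf{M}) \;+\; \sum_{t=1}^{T} \bm{J}_t(\mathbf{x}_T, \mathbf{M})\,\sigma(t)\,\bm{z}_t,
\end{equation}
where $\bm{J}_t$ denotes the Jacobian of the tail of the reverse process with respect to $\bm{e}_t$ evaluated on the noise-free trajectory. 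Since $\{\bm{z}_t\}$ are mutually independent standard Gaussians, the linear combination on the right is itself Gaussian, which I would identify with the aggregated noise term $\mathbf{N}$, whose covariance is $\sum_t \sigma^2(t)\,\bm{J}_t \bm{J}_t^{\top}$. This produces exactly the form in Eq.~(\ref{eq:diffusion_approx}).

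The main obstacle will be rigorously justifying the linearization, because $\mu_\theta$ is a nonlinear neural network and the noise injected at large $t$ is propagated through many nonlinear layers before reaching $\bm{e}_0$. I would handle this either by (i) assuming $\mu_\theta$ is locally Lipschitz along the deterministic trajectory, so that the residual is controlled by $O(\sigma^2)$ and captured in the ``approximately'' relation in the lemma statement, or (ii) appealing to the small-variance regime of the linear noise scheduler in Eq.~(\ref{eq3}), under which per-step perturbations are small and the higher-order terms of the Taylor expansion are negligible relative to the first-order Gaussian term. Either route makes the decomposition hold as a first-order approximation, which is precisely what the lemma asserts and what is needed downstream to bound the mutual information $I(\mathbf{M}; \mathbf{E}_{\text{diff}})$ via Lemma~\ref{lem:inversion_risk}.
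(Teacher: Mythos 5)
Your proposal follows the same skeleton as the paper's proof: write each reverse transition in reparameterized form as a deterministic mean update plus an injected Gaussian $\sigma_t\bm{z}_t$, unroll the chain from $t=T$ to $t=1$, and read off $\mathbf{E}_{\text{diff}}$ as a deterministic function of $(\mathbf{x}_T,\mathbf{M})$ plus an aggregated noise term. The difference is in how the aggregation is justified. The paper simply asserts that recursive substitution yields $\mathbf{x}_0 = s_T(\mathbf{x}_T,\mathbf{M}) + \sum_{t=1}^{T}\sigma_t\mathbf{z}_t$, which implicitly commutes every injected $\bm{z}_t$ past all subsequent applications of the nonlinear denoiser $\boldsymbol{\epsilon}_\theta$; strictly, the true unrolled expression is a nested composition $s_1(s_2(\cdots)+\sigma_2\bm{z}_2,\mathbf{M})+\sigma_1\bm{z}_1$, so the additive form with isotropic noise is only heuristic. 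You identify exactly this gap and close it with a first-order Taylor expansion around the noise-free trajectory, which replaces the paper's $\sum_t\sigma_t\bm{z}_t$ by $\sum_t \sigma(t)\,\bm{J}_t\bm{z}_t$ with covariance $\sum_t\sigma^2(t)\,\bm{J}_t\bm{J}_t^{\top}$. This is the more honest statement of the lemma, and it still supports the downstream use in Lemma~\ref{lem:diffusion}: the final-step noise enters with $\bm{J}_1=\bm{I}$, so the aggregate covariance remains bounded below by $\sigma^2(1)\mathbf{I}$ and the entropy lower bound $h_{\min}$ survives (with $\sigma_{\min}$ reinterpreted accordingly). Two caveats: your $s_T$ (the zero-noise iterated mean) is not literally the paper's $s_T$ (the single last-step deterministic map applied to $\mathbf{x}_T$), though both are deterministic in $(\mathbf{x}_T,\mathbf{M})$ so nothing downstream changes; and your linearization argument should acknowledge that the per-step variances here are not uniformly small (only the lemma's ``approximated as'' saves either proof), so route (i), a Lipschitz/bounded-Jacobian assumption on $\mu_\theta$ along the trajectory, is the safer justification than the small-variance appeal of route (ii).
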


\begin{proof}
Consider the reverse diffusion process starting from an initial Gaussian noise sample \(\mathbf{x}_T \sim \mathcal{N}(0, \mathbf{I})\). At each step \(t = T, \dots, 1\), a denoising network \(\boldsymbol{\epsilon}_\theta(\mathbf{x}_t, t, \mathbf{M})\) predicts the noise component conditioned on the current state \(\mathbf{x}_t\) and the auxiliary input \(\mathbf{M}\). The transition from \(\mathbf{x}_t\) to \(\mathbf{x}_{t-1}\) is given by:
\begin{equation}
    \mathbf{x}_{t-1} = \frac{1}{\alpha_t}
    \left( 
        \mathbf{x}_t - \frac{1 - \alpha_t}{\sqrt{1 - \bar{\alpha}_t}} 
        \boldsymbol{\epsilon}_\theta(\mathbf{x}_t, t, \mathbf{M})
    \right) 
    + \sigma_t \mathbf{z}_t,
    \quad \mathbf{z}_t \sim \mathcal{N}(0, \mathbf{I}),
    \label{eq:reverse_diff}
\end{equation}
where \(\alpha_t\), \(\bar{\alpha}_t\), and \(\sigma_t\) are diffusion coefficients defined by the noise schedule.

Then, the reverse process can be decomposed into a deterministic component and a stochastic noise component:
\begin{equation}
    \mathbf{x}_{t-1} = s_t(\mathbf{x}_t, \mathbf{M}) + \sigma_t \mathbf{z}_t,
    \label{eq:reverse_step}
\end{equation}
where \(s_t(\mathbf{x}_t, \mathbf{M})\) denotes the deterministic mapping defined as:
\begin{equation}
    s_t(\mathbf{x}_t, \mathbf{M}) = 
    \frac{1}{\alpha_t}
    \left( 
        \mathbf{x}_t - \frac{1 - \alpha_t}{\sqrt{1 - \bar{\alpha}_t}} 
        \boldsymbol{\epsilon}_\theta(\mathbf{x}_t, t, \mathbf{M})
    \right).
\end{equation}

By recursively applying \eqref{eq:reverse_step} from \(t = T\) down to \(t = 1\), the final generated sample \(\mathbf{x}_0\) can be expressed as:
\begin{equation}
    \mathbf{x}_0 = s_T(\mathbf{x}_T, \mathbf{M}) + \sum_{t=1}^T \sigma_t \mathbf{z}_t.
    \label{eq:diffusion_final}
\end{equation}
Here, \(s_T(\mathbf{x}_T, \mathbf{M})\) represents the deterministic component of the initial input \(\mathbf{x}_T\) and the modality \(\mathbf{M}\), and \(\sum_{t=1}^T \sigma_t \mathbf{z}_t\) represents the accumulated stochastic noise.

Thus, the diffusion-generated embedding \(\mathbf{E}_{\text{diff}}\) can be approximated as:
\begin{equation}
    \mathbf{E}_{\text{diff}} = s_T(\mathbf{x}_T, \mathbf{M}) + \mathbf{N},
    \label{eq:diffusion_approx}
\end{equation}
where \(\mathbf{N} = \sum_{t=1}^T \sigma_t \mathbf{z}_t\) is the accumulated stochastic noise component.
\end{proof}

\begin{lemma}[Mutual Information of Mapping-based Models]
\label{lem:mapping}
Let $\mathcal{F}_\theta : \mathbf{X} \to \mathbf{E}$ be a mapping-based model and denote
$\mathbf{E}_{\mathrm{map}} := \mathcal{F}_\theta(\mathbf{X})$ on the training domain $\mathbf{X}_{\mathrm{train}}$.  
Then, for any $\mathbf{X} \in \mathbf{X}_{\mathrm{train}}$, the mutual information between the input and the corresponding embedding satisfies:
\begin{equation}
    I(\mathbf{X}; \mathbf{E}_{\mathrm{map}}) = H(\mathbf{E}_{\mathrm{map}}).
    \label{eq:mapping_mi}
\end{equation}
\end{lemma}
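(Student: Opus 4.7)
The plan is to exploit the deterministic nature of the mapping $\mathcal{F}_\theta$ and invoke the standard decomposition of mutual information into entropy terms. The entire argument rests on a single observation: once $\mathbf{X}$ is given, the embedding $\mathbf{E}_{\mathrm{map}} = \mathcal{F}_\theta(\mathbf{X})$ is fully determined, so it carries no residual randomness conditional on $\mathbf{X}$.

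Concretely, I would first write down the canonical identity
\begin{equation}
I(\mathbf{X}; \mathbf{E}_{\mathrm{map}}) = H(\mathbf{E}_{\mathrm{map}}) - H(\mathbf{E}_{\mathrm{map}} \mid \mathbf{X}),
\end{equation}
and then argue that the second term vanishes. Since $\mathcal{F}_\theta$ is a fixed (non-stochastic) function trained on the server, the conditional distribution $p(\mathbf{E}_{\mathrm{map}} \mid \mathbf{X}=\mathbf{x})$ is a Dirac delta concentrated at $\mathcal{F}_\theta(\mathbf{x})$ for every $\mathbf{x} \in \mathbf{X}_{\mathrm{train}}$. Hence $H(\mathbf{E}_{\mathrm{map}} \mid \mathbf{X}=\mathbf{x}) = 0$ pointwise, and by averaging over $\mathbf{X}$ we obtain $H(\mathbf{E}_{\mathrm{map}} \mid \mathbf{X}) = 0$. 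Substituting this back into the identity yields the claimed equality $I(\mathbf{X}; \mathbf{E}_{\mathrm{map}}) = H(\mathbf{E}_{\mathrm{map}})$.

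There is essentially no technical obstacle here; the result is a direct consequence of determinism, and the main point is conceptual rather than computational. The only subtlety worth flagging explicitly is the restriction to $\mathbf{X}_{\mathrm{train}}$, which justifies treating $\mathcal{F}_\theta$ as a well-defined deterministic map with negligible prediction noise on the training domain. Combined with Lemma~\ref{lem:inversion_risk}, this equality will serve as the key lever for the subsequent privacy comparison: because $H(\mathbf{E}_{\mathrm{map}})$ typically attains the maximal possible mutual information, the Fano-based lower bound on the adversary's reconstruction error $P_e$ is driven toward its minimum, exposing mapping-based models to the worst-case inversion risk. By contrast, the additive noise term $\mathbf{N}$ identified in Lemma~\ref{lem:reverse_diffusion} will strictly reduce $I(\mathbf{X}; \mathbf{E}_{\mathrm{diff}})$, which is precisely the gap that the privacy analysis in the next lemma will formalize.
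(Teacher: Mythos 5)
Your proof is correct and uses the same decomposition as the paper: both write $I(\mathbf{X};\mathbf{E}_{\mathrm{map}}) = H(\mathbf{E}_{\mathrm{map}}) - H(\mathbf{E}_{\mathrm{map}}\mid\mathbf{X})$ and argue that the conditional term vanishes. Where you differ is in the hypothesis you invoke, and your version is the more economical and logically tighter one: $H(\mathbf{E}_{\mathrm{map}}\mid\mathbf{X})=0$ follows purely from $\mathcal{F}_\theta$ being a deterministic function of $\mathbf{X}$ (your Dirac-delta argument), with no injectivity needed. The paper instead assumes $\mathcal{F}_\theta$ is approximately injective on $\mathbf{X}_{\mathrm{train}}$ and justifies the vanishing of $H(\mathcal{F}_\theta(\mathbf{X})\mid\mathbf{X})$ by saying that ``each embedding uniquely determines its corresponding input'' --- but that statement is the justification for $H(\mathbf{X}\mid\mathbf{E}_{\mathrm{map}})=0$ (which would give $I=H(\mathbf{X})$), not for the term actually being discarded. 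Injectivity does earn its keep in the surrounding privacy narrative, since it is what makes the inverse map well defined and inversion attacks feasible, and under injectivity $H(\mathbf{E}_{\mathrm{map}})=H(\mathbf{X})$ so the two readings coincide; but for the identity as stated, determinism alone suffices, and your proof correctly isolates that as the operative assumption.
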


\begin{proof}
By assumption, the mapping $\mathcal{F}_\theta$ is approximately injective over the training domain, i.e., for any distinct inputs $\bm{x}_1, \bm{x}_2 \in \mathbf{X}_{\mathrm{train}}$,
\begin{equation}
    \| \mathcal{F}_\theta(\bm{x}_1) - \mathcal{F}_\theta(\bm{x}_2) \| \ge \delta, \quad \delta > 0.
\end{equation}

This implies that the mapping produces nearly unique embeddings for each training input. Therefore, there exists an approximate inverse $\mathcal{F}_\theta^{-1}$ such that:
\begin{equation}
    \mathcal{F}_\theta^{-1}(\mathcal{F}_\theta(\bm{x})) \approx \bm{x}, \quad \forall \bm{x} \in \mathbf{X}_{\mathrm{train}}.
\end{equation}

Consequently, the conditional entropy of the input given the embedding is:
\begin{equation}
    H(\mathcal{F}_\theta(\mathbf{X}) \mid \mathbf{X} ) = 0,
\end{equation}
because the injective nature of the mapping ensures that each embedding uniquely determines its corresponding input.
For simplicity of notation, we denote the model output $\mathcal{F}\theta(\mathbf{X})$ as $\mathbf{E}_{\mathrm{map}}$.
By the definition of mutual information, we have:
\begin{equation}
    I(\mathbf{X}; \mathcal{F}_\theta(\mathbf{X})) 
    = I(\mathbf{X}; \mathbf{E}_{\mathrm{map}}) 
    = H(\mathbf{E}_{\mathrm{map}}) - H(\mathbf{E}_{\mathrm{map}} \mid \mathbf{X}) 
    = H(\mathbf{E}_{\mathrm{map}}).
\end{equation}

Note that in this Lemma, the input $\mathbf{X}$ refers to the modality representation $\mathbf{M}$ of items. For consistency with the notations used in other Lemmas, we denote the model input as $\mathbf{X}$.
\end{proof}

\begin{lemma}[Mutual Information of Diffusion-based Models]
\label{lem:diffusion}
Let $\mathbf{M}$ denote the modality input, $p$ represents the dimension of item embeddings, and let $\mathbf{E}_{\mathrm{diff}}$ be the embedding generated by a diffusion model. Then, we have:
\begin{equation}
    I(\mathbf{M}; \mathbf{E}_{\mathrm{diff}})
    % = H(\mathbf{E}_{\mathrm{diff}}) - H(\mathbf{E}_{\mathrm{diff}} \mid \mathbf{M}) 
    \le H(\mathbf{E}_{\mathrm{diff}}) - h_{\min}, \quad \text{where } h_{\min} = \frac{p}{2} \log(2\pi e \, \sigma_{\min}^2) > 0.
\end{equation}
\end{lemma}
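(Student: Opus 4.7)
The plan is to bound $I(\mathbf{M}; \mathbf{E}_{\mathrm{diff}})$ by controlling the conditional (differential) entropy $h(\mathbf{E}_{\mathrm{diff}} \mid \mathbf{M})$ from below using the stochastic noise component isolated in Lemma~\ref{lem:reverse_diffusion}. Concretely, I will start from the identity
\begin{equation}
    I(\mathbf{M}; \mathbf{E}_{\mathrm{diff}}) = h(\mathbf{E}_{\mathrm{diff}}) - h(\mathbf{E}_{\mathrm{diff}} \mid \mathbf{M}),
\end{equation}
so the claim reduces to showing $h(\mathbf{E}_{\mathrm{diff}} \mid \mathbf{M}) \ge h_{\min}$. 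This rephrases the whole lemma as a lower bound on residual Gaussian uncertainty, which is exactly the object that Lemma~\ref{lem:reverse_diffusion} hands us.

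Next, I would invoke Lemma~\ref{lem:reverse_diffusion} to write $\mathbf{E}_{\mathrm{diff}} = s_T(\mathbf{x}_T, \mathbf{M}) + \mathbf{N}$, where $\mathbf{N} = \sum_{t=1}^{T} \sigma_t \mathbf{z}_t$ with $\mathbf{z}_t \sim \mathcal{N}(0,\mathbf{I})$ drawn independently of $\mathbf{M}$ and $\mathbf{x}_T$. By the additivity of independent Gaussians, $\mathbf{N} \sim \mathcal{N}\!\bigl(0, \sigma_{\mathrm{sum}}^2 \mathbf{I}\bigr)$ with $\sigma_{\mathrm{sum}}^2 = \sum_{t=1}^T \sigma_t^2 \ge \sigma_{\min}^2$, where $\sigma_{\min}^2 := \min_t \sigma_t^2 > 0$. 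Since $\mathbf{N}$ is independent of $(\mathbf{x}_T, \mathbf{M})$, conditioning on $\mathbf{M}$ preserves the independence between $\mathbf{N}$ and $s_T(\mathbf{x}_T, \mathbf{M})$.

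I would then apply the standard ``adding independent noise does not decrease differential entropy'' argument. Letting $\mathbf{S} := s_T(\mathbf{x}_T, \mathbf{M})$, for every realization $\mathbf{M}=\mathbf{m}$ we have $\mathbf{S} \perp\!\!\!\perp \mathbf{N}$, so
\begin{equation}
    h(\mathbf{S}+\mathbf{N} \mid \mathbf{M}=\mathbf{m}) \ge h(\mathbf{S}+\mathbf{N} \mid \mathbf{M}=\mathbf{m}, \mathbf{S}) = h(\mathbf{N}) = \tfrac{p}{2}\log\!\bigl(2\pi e \, \sigma_{\mathrm{sum}}^2\bigr).
\end{equation}
Averaging over $\mathbf{M}$ and using $\sigma_{\mathrm{sum}}^2 \ge \sigma_{\min}^2$ gives $h(\mathbf{E}_{\mathrm{diff}} \mid \mathbf{M}) \ge \tfrac{p}{2}\log(2\pi e\,\sigma_{\min}^2) = h_{\min}$. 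Substituting back into the mutual information identity yields exactly $I(\mathbf{M}; \mathbf{E}_{\mathrm{diff}}) \le h(\mathbf{E}_{\mathrm{diff}}) - h_{\min}$, matching the statement with $H$ read as differential entropy.

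The only subtle point, and the place I would be most careful, is justifying the independence step: the stochastic component $\mathbf{N}$ is built from the injected noise samples $\mathbf{z}_t$, while $\mathbf{S}$ also depends on intermediate states $\mathbf{x}_t$ that were themselves produced using those very $\mathbf{z}_t$. Under the Lemma~\ref{lem:reverse_diffusion} decomposition, however, $\mathbf{N}$ is treated as the aggregated \emph{additive} noise of the final sample, so the clean decomposition $\mathbf{S}+\mathbf{N}$ with $\mathbf{S} \perp\!\!\!\perp \mathbf{N}$ is exactly what that lemma provides; I would cite it explicitly at that step rather than reprove it. A brief remark should also note that $H$ throughout is interpreted as differential entropy (consistent with the continuous nature of $\mathbf{E}_{\mathrm{diff}}$), so that $h_{\min}>0$ whenever $2\pi e\,\sigma_{\min}^2 > 1$, which is the regime implicitly assumed by the noise schedule.
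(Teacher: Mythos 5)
Your proof follows essentially the same route as the paper's: both invoke the decomposition $\mathbf{E}_{\mathrm{diff}} = s_T(\mathbf{x}_T,\mathbf{M}) + \mathbf{N}$ from Lemma~\ref{lem:reverse_diffusion}, lower-bound the conditional entropy by the entropy of the Gaussian noise term, and plug into $I(\mathbf{M};\mathbf{E}_{\mathrm{diff}}) = H(\mathbf{E}_{\mathrm{diff}}) - H(\mathbf{E}_{\mathrm{diff}}\mid\mathbf{M})$. Your version is marginally more careful in one spot — you use the inequality $h(\mathbf{S}+\mathbf{N}\mid\mathbf{M}) \ge h(\mathbf{N})$ to account for the residual randomness of $\mathbf{x}_T$ and the independence subtlety, whereas the paper asserts the equality $H(\mathbf{E}_{\mathrm{diff}}\mid\mathbf{M}) = H(\mathbf{N})$ outright — but this is a refinement of the same argument, not a different one, and both yield the stated bound.
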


\begin{proof}
Due to the Gaussian noise injected at each reverse diffusion step (as shown in Lemma~\ref{lem:reverse_diffusion}), the final embedding can be written as:
\begin{equation}
    \mathbf{E}_{\mathrm{diff}} = s_T(x_T,\mathbf{M}) + \mathbf{N}, \quad \mathbf{N} \sim \mathcal{N}(0, \boldsymbol{\Sigma}), \quad \boldsymbol{\Sigma} \succeq \sigma_{\min}^2 \mathbf{I},
\end{equation}
where $s_T(x_T,\mathbf{M})$ is the deterministic component of the reverse process, and $\mathbf{N}$ represents the noise component. Therefore, the conditional entropy of the diffusion embedding given $\mathbf{M}$ is:
\begin{equation}
    H(\mathbf{E}_{\mathrm{diff}} \mid \mathbf{M}) = H(\mathbf{N}),
\end{equation}
and since $\mathbf{N}$ follows a Gaussian distribution with covariance $\boldsymbol{\Sigma} \succeq \sigma_{\min}^2 \mathbf{I}$, it is well-known from the maximum entropy property of Gaussian distributions that:
\begin{equation}
    H(\mathbf{N}) \ge h_{\min} := \frac{p}{2} \log(2\pi e \, \sigma_{\min}^2) > 0,
\end{equation}
where $h_{\min}$ represents the lower bound on the conditional entropy of $\mathbf{E}_{\mathrm{diff}}$ given $\mathbf{M}$, and $p$ is the embedding dimension. 

Substituting this lower bound into the definition of mutual information, we obtain:
\begin{equation}
    I(\mathbf{M}; \mathbf{E}_{\mathrm{diff}}) = H(\mathbf{E}_{\mathrm{diff}}) - H(\mathbf{E}_{\mathrm{diff}} \mid \mathbf{M}) \le H(\mathbf{E}_{\mathrm{diff}}) - h_{\min}.
\end{equation}

Thus, the mutual information between the input $\mathbf{M}$ and the diffusion-generated embedding $\mathbf{E}_{\mathrm{diff}}$ is upper-bounded by $H(\mathbf{E}_{\mathrm{diff}}) - h_{\min}$.
\end{proof}

\begin{remark}[Privacy Advantage of Diffusion-based Models over Deterministic Mapping]
\label{rem:privacy_advantage}
To isolate the impact of stochasticity, we assume comparable marginal entropies between the two embeddings, i.e., $H(\mathbf{E}_{\text{diff}}) \approx H(\mathbf{E}_{\text{map}})$. 
By combining Lemma~\ref{lem:mapping}–\ref{lem:diffusion}, we obtain the comparative privacy property between deterministic mapping models and diffusion-based generative models:
\begin{equation}
    I(\mathbf{M}; \mathbf{E}_{\text{diff}}) \le I(\mathbf{M}; \mathbf{E}_{\text{map}}) - h_{\min}.
\end{equation}

According to Lemma~\ref{lem:inversion_risk}, a smaller mutual information implies a lower inversion risk and thus stronger privacy preservation. 
Therefore, diffusion-based generative models inherently achieve better privacy protection than deterministic mapping models, as the stochastic diffusion process introduces irreducible uncertainty into the generated embeddings.
\end{remark}

\section{Experiments}
In this section, we conduct extensive experiments on four real-world datasets to answer the following research questions.
\begin{enumerate}
    \item Does MDiffFR achieve superior performance compared to state-of-the-art baselines?
    \item Can the generation-based diffusion method effectively alleviate the embedding misalignment between cold-start and warm items?
    \item Does modality guidance enhance the generative capability of the model?
    \item Does MDiffFR provide stronger privacy guarantees under practical attack scenarios?
    \item How does the key parameter affect the performance of the model?
\end{enumerate}

\subsection{Datasets}
In this paper, we conduct experiments on four real-world datasets collected from two online video platforms, KuaiShou and Bilibili, namely KU, Food, Dance, and Movie \cite{NineRec}. Table \ref{tab:data} shows the detailed statistics of these datasets. Here, Users correspond to different clients, Items represent different videos, and Interactions denote the number of interactions between users and items. Modality represents the number of modalities associated with the items. The datasets differ significantly in scale, with the number of users ranging from 2,034 to 16,525. In addition, to adapt to item cold-start scenarios and ensure a fair comparison, we partition these datasets into training, validation, and test sets using a 6:1:3 ratio, following the approaches outlined in \cite{Heater, IFedRec}. 

\begin{table}[t]
\label{tab:datasets}
\centering
\caption{The statistics of the used datasets.}
\begin{tabular}{lllll}
\toprule
Dataset & Users & Items & Interactions & Modality \\
\midrule
KU & 2,034 & 5,370 & 18,519 & 5,370 \\
Food & 6,549 & 1,579 & 39,740 & 1,579 \\
Dance & 10,715 & 2,307 & 83,392 & 2,307 \\
Movie & 16,525 & 3,509 & 115,576 & 3,509 \\
\bottomrule
\label{tab:data}
\end{tabular}
\end{table}

\subsection{Baselines}
For more fair comparison, we select four baselines in both federated and centralized recommendation settings, respectively. For federated recommendations (FRs), we choose pioneering works IFedRec addressing the item cold-start problem in FRs, including IPFedRec and IFedNCF. Additionally, we adapt two typical federated methods, FCF and FedNCF, to address the item cold-start problem, which are referred to as CS\_FCF and CS\_FedNCF. For centralized recommendations (CRs), we select three representative methods that address the cold-start problem: Heater, GAR, and ALDI. In addition, we include a diffusion-based model, DiffRec, which we adapt to the item cold-start scenario, referred to as CS\_DiffRec.

\begin{itemize}
    \item \textbf{CS\_FCF. \cite{FCF}} It is a first study in recommendation under federated settings, where the model is trained locally on clients and the parameters are uploaded to a central server for aggregation. In our experiment, we adapt it to address the item cold-start problem.

    \item \textbf{CS\_FedNCF. \cite{FedNCF}} It employs neural networks to capture the non-linear relationships between users and items on the client, effectively improving model performance in federated recommendation. In our experiments, we also adapt this approach to the item cold-start scenario.
    
    \item \textbf{IFedNCF. \cite{IFedRec}} It is the first solution to the item cold-start problem in federated recommendation, which integrates IFedRec into the FedNCF to address this challenge.
    
    \item \textbf{IPFedRec. \cite{IFedRec}} It is another variant of IFedRec designed to address the item cold-start problem in federated recommendation.
    
    \item \textbf{Heater. \cite{Heater}} It leverages a mixture-of-experts transformation mechanism and a randomized training strategy to enhance the effectiveness and accuracy of the model for cold-start recommendation.

    \item \textbf{GAR. \cite{GAR}} It is a generative model based on a generative adversarial network, which solves the cold-start problem by training a generator in an adversarial manner.

    \item \textbf{ALDI. \cite{ALDI}} It treats warm items as “teachers” and transfers their behavioral information to cold-start items, acting as “students”.

    \item \textbf{CS\_DiffRec. \cite{DiffRec}} It is one of the pioneering studies that employs a diffusion model to generate data for recommendations. We adapt it to address the item cold-start problem.
    
\end{itemize}

\begin{table*}[t]
\centering
\small
\setlength{\tabcolsep}{2pt} % 压缩列间距（默认6pt）
\renewcommand{\arraystretch}{1.1}
\caption{Performance comparison on four datasets with baselines in federated recommendation. The \textbf{best} performance is highlighted in bold, and the \underline{second} is underlined.}
\begin{tabular}{c|c|ccc|ccc|ccc|ccc}
\hline
\multirow{2}{*}{\textbf{Model}} &  \multirow{2}{*}{\textbf{Metric}} & \multicolumn{3}{c|}{\textbf{KU}} & \multicolumn{3}{c|}{\textbf{Food}} & \multicolumn{3}{c|}{\textbf{Dance}} & \multicolumn{3}{c}{\textbf{Movie}} \\
\cline{3-14}
& & @20 & @50 & @100 & @20 & @50 & @100 & @20 & @50 & @100 & @20 & @50 & @100 \\

\hline
\multirow{3}{*}{CS\_FedNCF} 
& Recall    & 1.12 & 3.92 & 8.12 & 4.96 & 16.08 & 29.61 & 4.26 & 12.63 & 24.75 & 2.53 & 8.20 & 15.00 \\
& Precision & 0.18 & 0.22 & 0.21 & 0.42 & 0.55 & 0.52 & 0.45 & 0.53 & 0.50 & 0.25 & 0.30 & 0.28 \\
& NDCG      & 0.88 & 1.82 & 2.97 & 2.00 & 5.67 & 8.29 & 2.42 & 5.23 & 8.32 & 1.41 & 3.19 & 4.82 \\
\cline{1-14}

 \multirow{3}{*}{CS\_FCF} 
& Recall    & 1.32 & 3.88 & 8.41 & 6.77 & 16.70 & 32.51 & 4.66 & 11.92 & 23.24 & 3.11 & 8.29 & 15.99 \\
& Precision & 0.19 & 0.19 & 0.21 & 0.59 & 0.58 & \textbf{0.57} & 0.47 & 0.48 & 0.47 & 0.29 & 0.30 & 0.29 \\
& NDCG      & 1.04 & 1.89 & 3.18 & 3.49 & 6.20 & 9.59 & 2.69 & 5.03 & 8.11 & 1.75 & 3.33 & 5.20 \\
\cline{1-14}

 \multirow{3}{*}{IPFedRec}
& Recall    & 7.05 & 14.51 & \underline{22.84} & 5.76 & 14.28 & 30.29 & \underline{8.57} & 17.69 & 31.73 & 4.28 & 10.31 & \underline{19.81} \\
& Precision & 0.72 & 0.58 & 0.48 & 0.45 & 0.45 & 0.47 & \underline{0.83} & 0.70 & \underline{0.63} & 0.39 & \underline{0.38} & 0.37 \\
& NDCG      & \underline{3.48} & 5.38 & \underline{7.49} & 3.15 & 5.62 & 9.48 & \underline{4.89} & 7.36 & 10.56 & 2.27 & 4.15 & 6.31 \\
\cline{1-14}

 \multirow{3}{*}{IFedNCF}
& Recall    & 3.77 & 8.39 & 14.78 & 6.47 & 16.18 & 32.25 & 5.95 & 15.18 & 28.50 & 3.54 & 8.45 & 16.73 \\
& Precision & 0.44 & 0.42 & 0.38 & 0.50 & 0.51 & 0.51 & 0.61 & 0.61 & 0.57 & 0.33 & 0.31 & 0.31 \\
& NDCG      & 2.29 & 4.08 & 6.01 & 3.55 & 6.59 & \underline{10.48} & 3.71 & 6.51 & 9.84 & 1.89 & 3.34 & 5.32 \\
\midrule \midrule

\multirow{3}{*}{\textbf{MDiffFR}}
& Recall    & \textbf{10.04} & \textbf{19.46} & \textbf{30.18} & \underline{7.50} & \textbf{19.23} & \textbf{35.92} & \textbf{10.29} & \textbf{21.37} & \textbf{35.19} & \underline{4.67} & \textbf{11.05} & \textbf{20.87} \\
& Precision & \textbf{1.08}  & \textbf{0.85}  & \textbf{0.67}  & \underline{0.60} & \textbf{0.62}  & \textbf{0.57}  & \textbf{1.05}  & \textbf{0.87}  & \textbf{0.71}  & \textbf{0.44} & \textbf{0.42} & \underline{0.39} \\
& NDCG      & \textbf{4.37}  & \textbf{7.07}  & \textbf{9.24}  & \underline{4.00} & \textbf{8.08}  & \textbf{11.76} & \textbf{5.94}  & \textbf{8.74}  & \textbf{11.76} & \textbf{2.60} & \textbf{4.49} & \textbf{6.82} \\
\cline{1-14}

 \multirow{3}{*}{\textbf{Light-MDiffFR}}
& Recall    & \underline{8.24} & \underline{18.24} & 22.21 & \textbf{8.78} & \underline{19.11} & \underline{34.16} & 6.30 & \underline{18.98} & \underline{34.98} & \textbf{4.80} & \underline{10.59} & \underline{19.15} \\
& Precision & \underline{0.91} & \underline{0.83} & \underline{0.51} & \textbf{0.69} & \underline{0.60} & \underline{0.54} & 0.67 & \underline{0.78} & \textbf{0.71} & \underline{0.41} & \underline{0.38} & \textbf{0.35} \\
& NDCG      & 3.35 & \underline{6.94} & 6.97 & \textbf{4.27} & \underline{7.26} & 10.27 & 4.17 & \underline{8.38} & \underline{11.85} & \underline{2.34} & \underline{4.29} & 6.11 \\
\bottomrule
\end{tabular}
\label{tab:FR_results}
\end{table*}

% \subsection{Experiment Settings}
% We conduct experiments on four real-world datasets \cite{NineRec}. We select several federated cold-start algorithms (IPFedRec, IFedNCF) \cite{IFedRec}, and additionally adapt two classical methods to the cold-start scenario in FR \cite{FCF, FedNCF}. To enable a more comprehensive comparison, we also select three advanced cold-start algorithms from Centralized Recommendation (\textbf{CR}) \cite{ALDI, GAR, Heater} and a diffusion-based model adapted to the cold-start task \cite{DiffRec}. In addition, we evaluate model performance using three metrics commonly used in FR: Recall, Precision, and NDCG. In all tables and figures, we report values using 1e-2 as the unit unless otherwise specified. 

\subsection{Implementation Details}
We conduct all experiments on four NVIDIA RTX A5000 GPUs using Python 3.8 and PyTorch v2.4.1 with CUDA 12.1. We set the maximum sequence length to 512 for BERT, truncating longer inputs and padding shorter ones with $\texttt{[PAD]}$ tokens. For both MDiffFR and all baseline models, we fix the embedding dimension to 64 and use a batch size of 256. During training, five negative items are sampled per interaction to evaluate model performance. In the federated setting, to ensure a fair comparison, we set the local learning rate to 0.1 for all models so that the global item embeddings remain consistent. For server learning, we use each method’s default value. Additionally, we configure the number of server epochs and the server batch size to 1 and 256, respectively. The client sampling ratio is set to 1.0, and the number of federated training rounds is fixed at 100. In addition, we evaluate model performance using three metrics commonly used in FRs: Recall, Precision, and NDCG. In all tables and figures, we report values using 1e-2 as the unit unless otherwise specified.

\subsection{Overall Performance (RQ1)}

\textbf{Compared to FRs.} Table \ref{tab:FR_results} presents a comparison of MDiffFR with baselines in FRs, and the results demonstrate that MDiffFR outperforms all baselines across all metrics. This validates the effectiveness of diffusion-based methods in capturing the underlying distribution of item embeddings, enabling the generation of higher-quality representations for cold-start items. Additionally, our method effectively mitigates the embedding misalignment issue, as detailed in the next section. Furthermore, Light-MDiffFR shows performance comparable to MDiffFR, outperforming most baselines and even surpassing MDiffFR in some cases, which further supports the potential applicability of our model in practical scenarios.

\begin{table*}[t]
\centering
\small
\setlength{\tabcolsep}{2pt} % 压缩列间距（默认6pt）
\renewcommand{\arraystretch}{1.1}
\caption{Performance comparison on four datasets with baselines in centralized recommendation. }
\begin{tabular}{c|c|ccc|ccc|ccc|ccc}
\hline
\multirow{2}{*}{\textbf{Model}} &  \multirow{2}{*}{\textbf{Metric}} & \multicolumn{3}{c|}{\textbf{KU}} & \multicolumn{3}{c|}{\textbf{Food}} & \multicolumn{3}{c|}{\textbf{Dance}} & \multicolumn{3}{c}{\textbf{Movie}} \\
\cline{3-14}
& & @20 & @50 & @100 & @20 & @50 & @100 & @20 & @50 & @100 & @20 & @50 & @100 \\

\hline

\multirow{3}{*}{Heater}
& Recall    & 2.22 & 4.53 & 8.98 & 6.89 & 12.08 & 32.72 & \underline{8.62} & 15.66 & 29.19 & 3.42 & 10.69 & \textbf{23.99} \\
& Precision & 0.28 & 0.23 & 0.23 & 0.56 & 0.48 & 0.52 & \underline{0.91} & 0.64 & \underline{0.60} & 0.34 & 0.41 & \textbf{0.45} \\
& NDCG      & 1.14 & 1.70 & 2.59 & 2.51 & 4.34 & 7.48 & 3.17 & 4.73 & 7.38 & 1.45 & 3.12 & 5.64 \\
\cline{1-14}

\multirow{3}{*}{GAR}
& Recall    & 0.30 & 1.48 & 2.82 & 2.68 & 9.27 & 17.77 & 3.42 & 6.11 & 12.59 & 1.42 & 3.36 & 8.74 \\
& Precision & 0.05 & 0.09 & 0.08 & 0.20 & 0.27 & 0.26 & 0.36 & 0.26 & 0.28 & 0.14 & 0.13 & 0.16 \\
& NDCG      & 0.14 & 0.45 & 0.71 & 0.92 & 2.32 & 3.88 & 1.14 & 1.78 & 3.06 & 0.42 & 0.86 & 1.86 \\
\cline{1-14}

\multirow{3}{*}{ALDI}
& Recall    & 0.74 & 8.64 & 10.25 & 2.21 & 6.32 & 11.75 & 3.43 & 6.02 & 11.29 & 1.21 & 5.93 & 11.97 \\
& Precision & 0.10 & 0.38 & 0.23 & 0.17 & 0.19 & 0.17 & 0.37 & 0.26 & 0.25 & 0.11 & 0.23 & 0.23 \\
& NDCG      & 0.54 & 2.69 & 3.00 & 0.78 & 1.66 & 2.60 & 1.10 & 1.69 & 2.72 & 0.51 & 1.57 & 2.75 \\
\cline{1-14}

\multirow{3}{*}{CS\_DiffRec}
& Recall    & 0.03 & 0.15 & 0.46 & 0.07 & 0.53 & 1.74 & 0.16 & 0.32 & 0.96 & 0.21 & 0.59 & 1.25 \\
& Precision & 0.01 & 0.02 & 0.02 & 0.01 & 0.02 & 0.03 & 0.02 & 0.02 & 0.02 & 0.02 & 0.02 & 0.02 \\
& NDCG      & 0.02 & 0.05 & 0.12 & 0.02 & 0.12 & 0.33 & 0.07 & 0.11 & 0.24 & 0.08 & 0.17 & 0.29 \\

\midrule \midrule

\multirow{3}{*}{\textbf{MDiffFR}}
& Recall    & \textbf{10.04} & \textbf{19.46} & \textbf{30.18} & \underline{7.50} & \textbf{19.23} & \textbf{35.92} & \textbf{10.29} & \textbf{21.37} & \textbf{35.19} & \underline{4.67} & \textbf{11.05} & \underline{20.87} \\
& Precision & \textbf{1.08}  & \textbf{0.85}  & \textbf{0.67}  & \underline{0.60} & \textbf{0.62}  & \textbf{0.57}  & \textbf{1.05}  & \textbf{0.87}  & \textbf{0.71}  & \textbf{0.44} & \textbf{0.42} & \underline{0.39} \\
& NDCG      & \textbf{4.37}  & \textbf{7.07}  & \textbf{9.24}  & \underline{4.00} & \textbf{8.08}  & \textbf{11.76} & \textbf{5.94}  & \textbf{8.74}  & \underline{11.76} & \textbf{2.60} & \textbf{4.49} & \textbf{6.82} \\
\cline{1-14}

\multirow{3}{*}{\textbf{Light-MDiffFR}}
& Recall    & \underline{8.24} & \underline{18.24} & 22.21 & \textbf{8.78} & \underline{19.11} & \underline{34.16} & 6.30 & \underline{18.98} & \underline{34.98} & \textbf{4.80} & \underline{10.59} & 19.15 \\
& Precision & \underline{0.91} & \underline{0.83} & \underline{0.51} & \textbf{0.69} & \underline{0.60} & \underline{0.54} & 0.67 & \underline{0.78} & \textbf{0.71} & \underline{0.41} & \underline{0.38} & 0.35 \\
& NDCG      & \underline{3.35} & \underline{6.94} & \underline{6.97} & \textbf{4.27} & \underline{7.26} & \underline{10.27} & \underline{4.17} & \underline{8.38} & \textbf{11.85} & \underline{2.34} & \underline{4.29} & \underline{6.11} \\
\bottomrule
\end{tabular}
\label{tab:CR_results}

\end{table*}
\textbf{Compared to CRs.} As shown in Table \ref{tab:CR_results}, our method outperforms the baselines in CRs for most cases, while ensuring privacy guarantees through federated settings. The only exception occurs in Recall@100 and Precision@100 of the Movie dataset, where the performance of MDiffFR is marginally lower than the Heater. We attribute this to Heater's centralized design that leverages user personal information as auxiliary data for cold-start recommendations. However, such personal information constitutes protected privacy under federated learning frameworks, making it inherently inaccessible in our method. Notably, CS\_DiffRec, which directly generates cold-start item interactions, exhibits the worst performance. This demonstrates that such direct interaction generation is infeasible.

\begin{figure*}[t]
    \centering
    % 第一行
    \begin{subfigure}{0.19\textwidth}
        \includegraphics[width=\linewidth]{./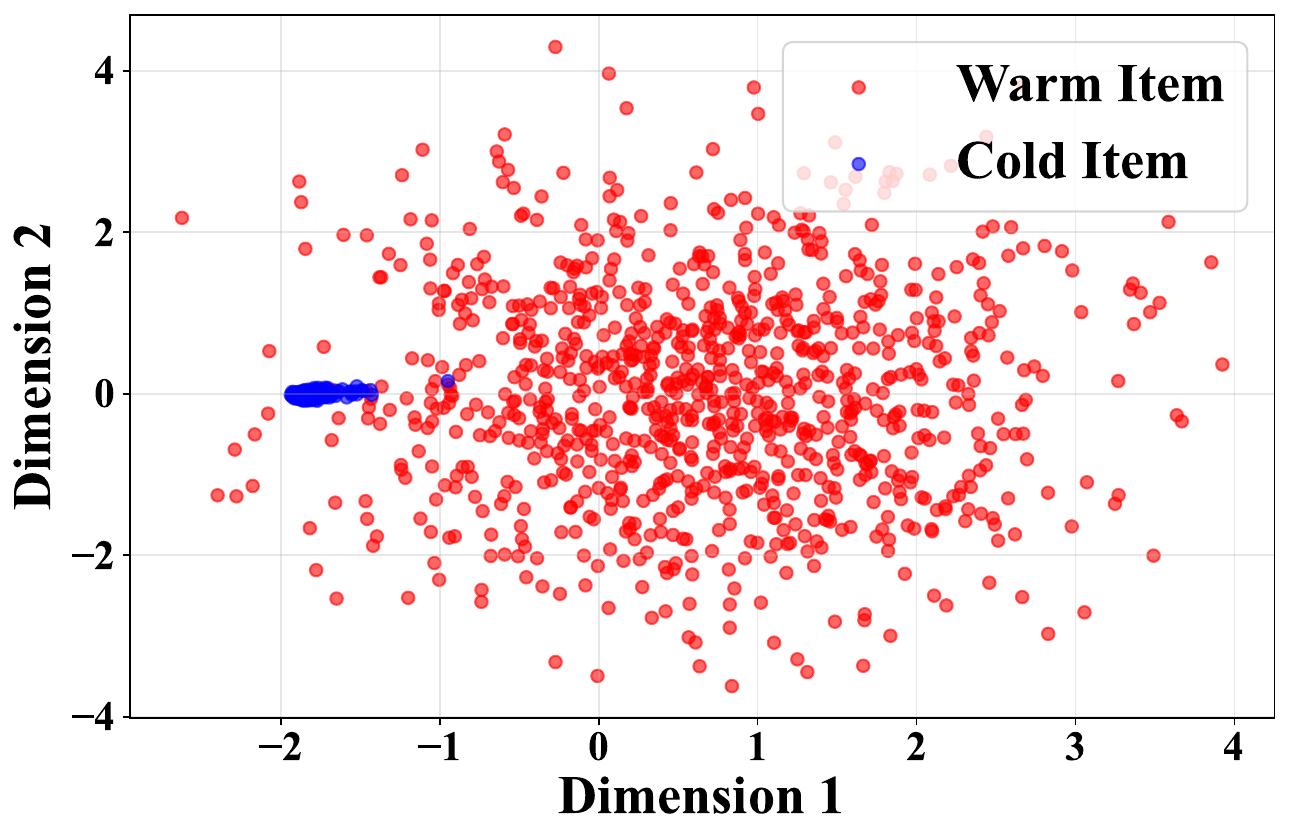}
        \caption{Round 0}
    \end{subfigure}
    \begin{subfigure}{0.19\textwidth}
        \includegraphics[width=\linewidth]{./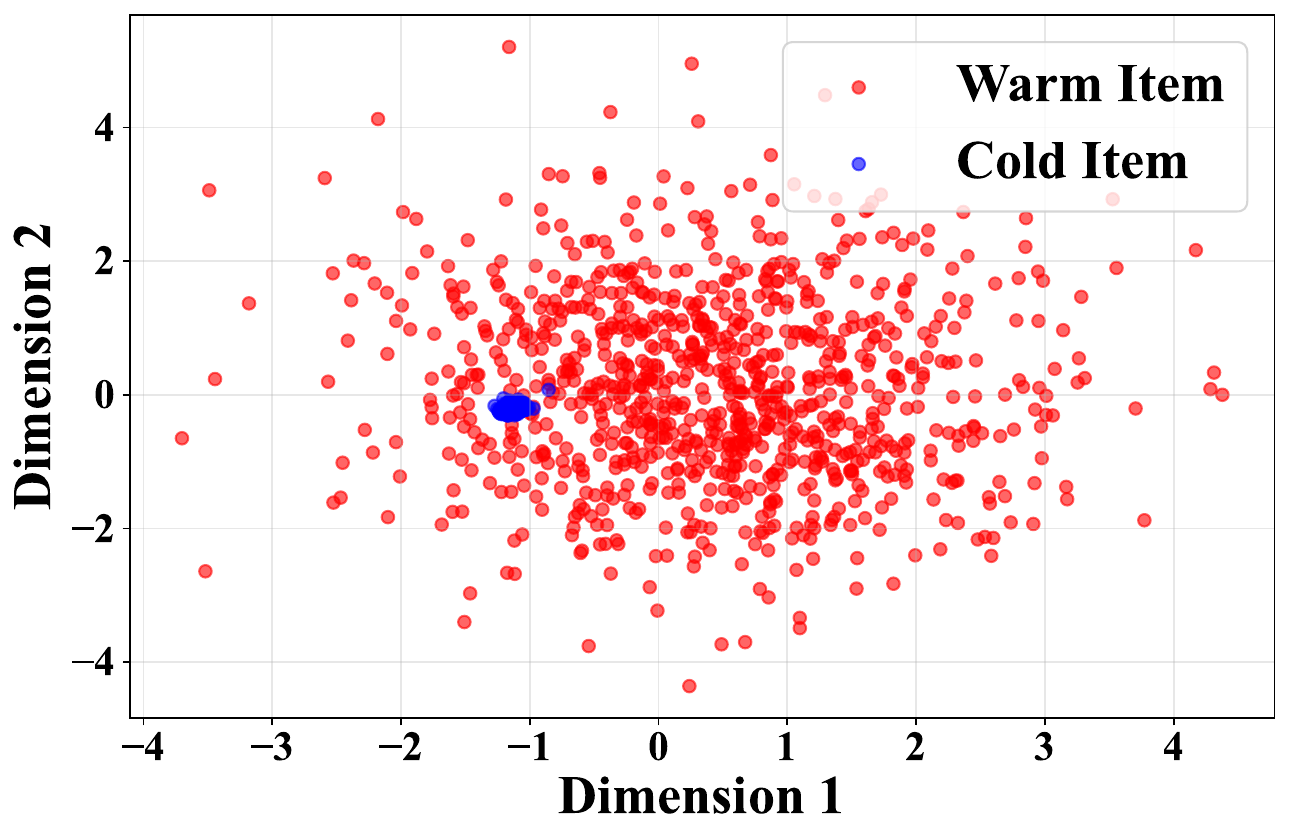}
        \caption{Round 10}
    \end{subfigure}
    \begin{subfigure}{0.19\textwidth}
        \includegraphics[width=\linewidth]{./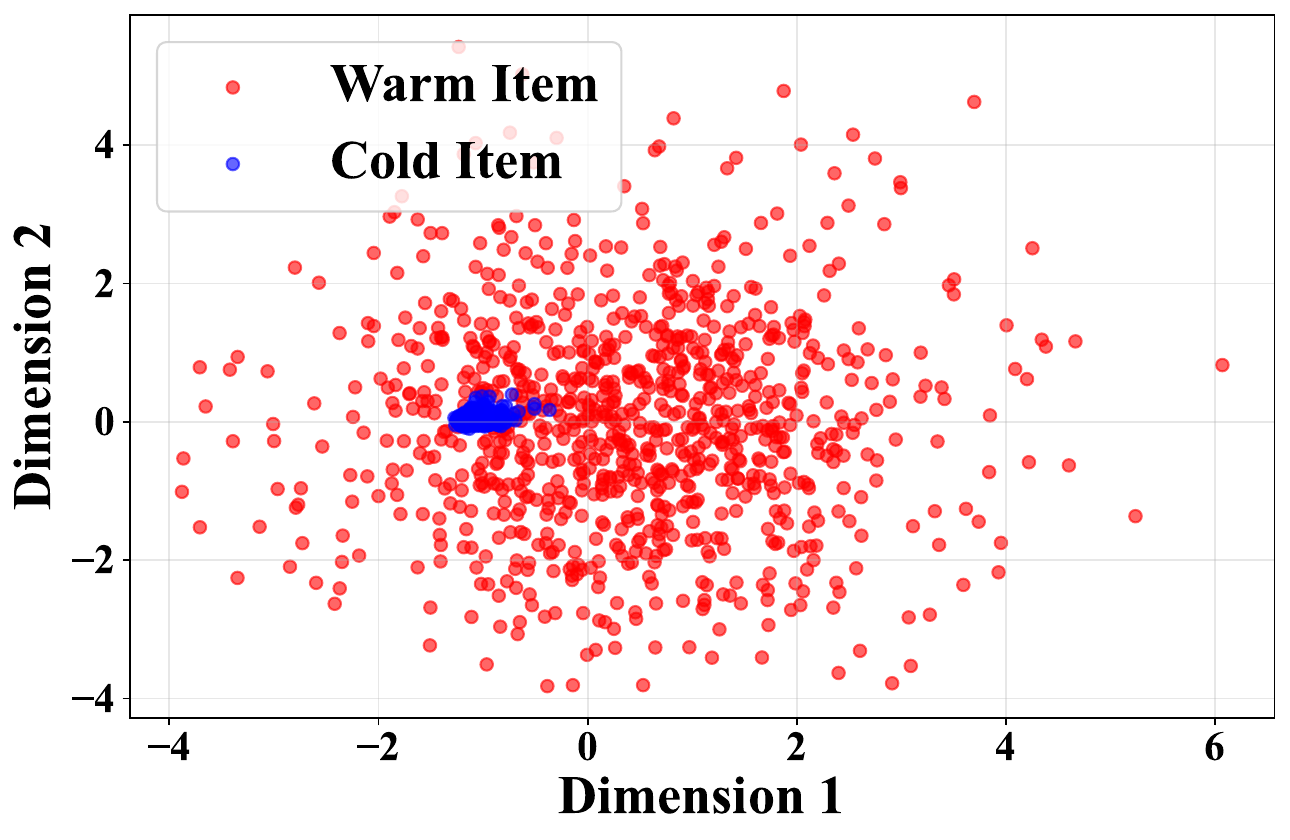}
        \caption{Round 20}
    \end{subfigure}
    \begin{subfigure}{0.19\textwidth}
        \includegraphics[width=\linewidth]{./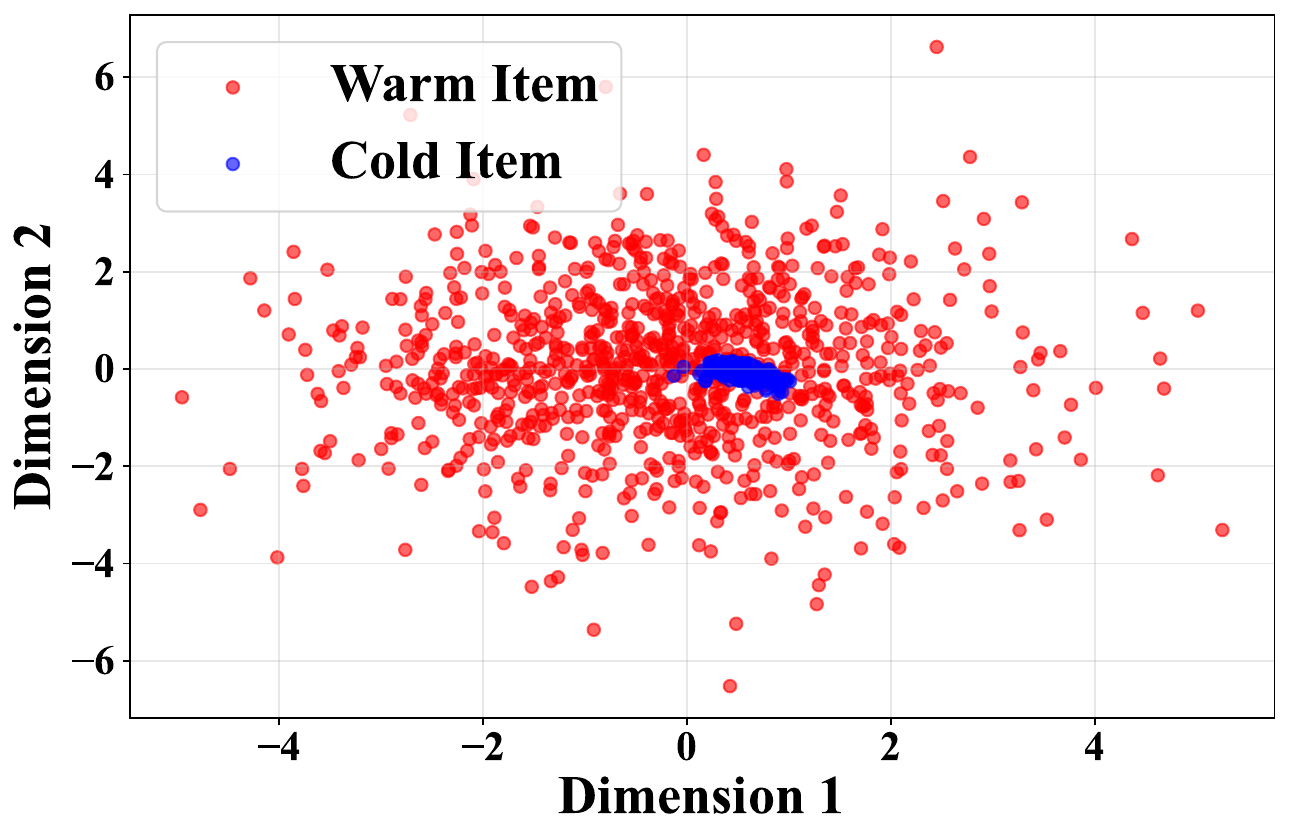}
        \caption{Round 30}
    \end{subfigure}
    \begin{subfigure}{0.19\textwidth}
        \includegraphics[width=\linewidth]{./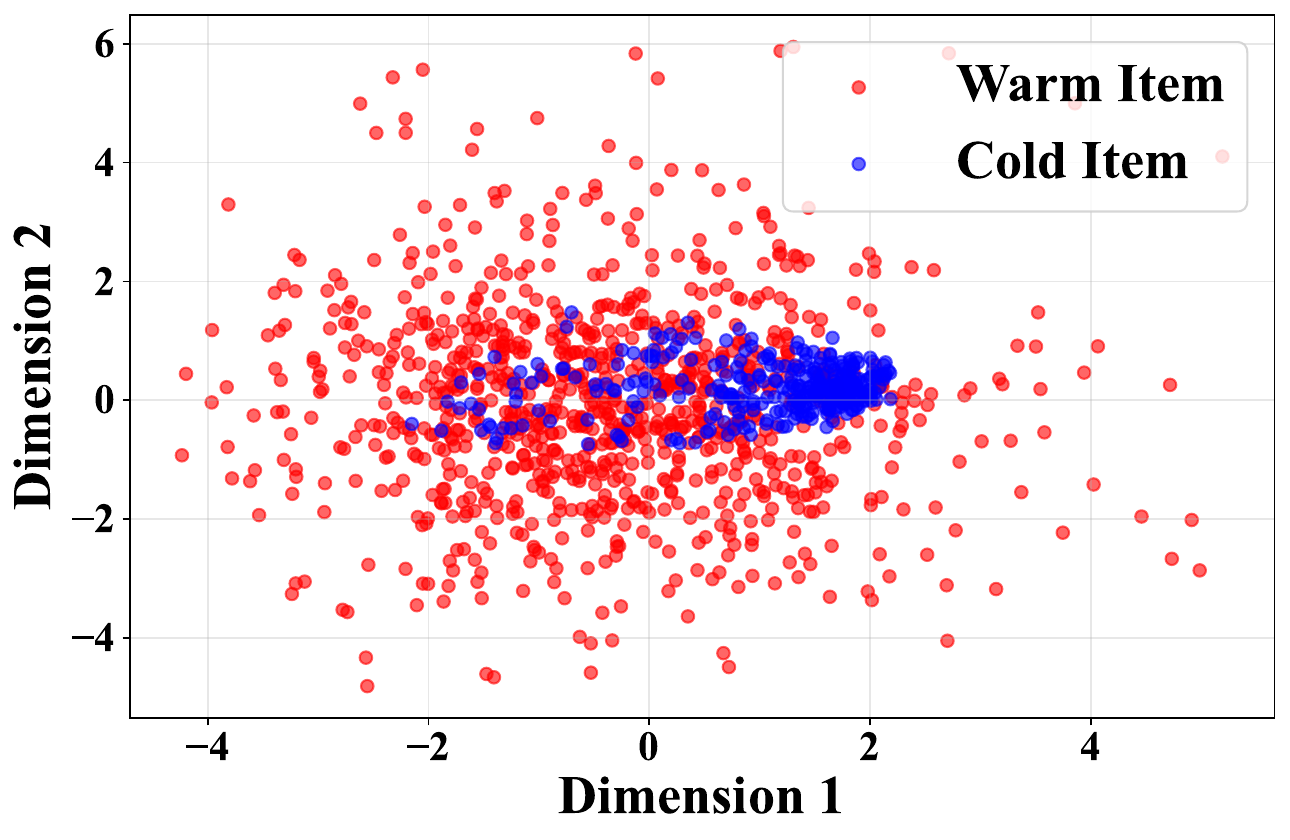}
        \caption{Round 40}
    \end{subfigure}

    % 第二行
    \begin{subfigure}{0.19\textwidth}
        \includegraphics[width=\linewidth]{./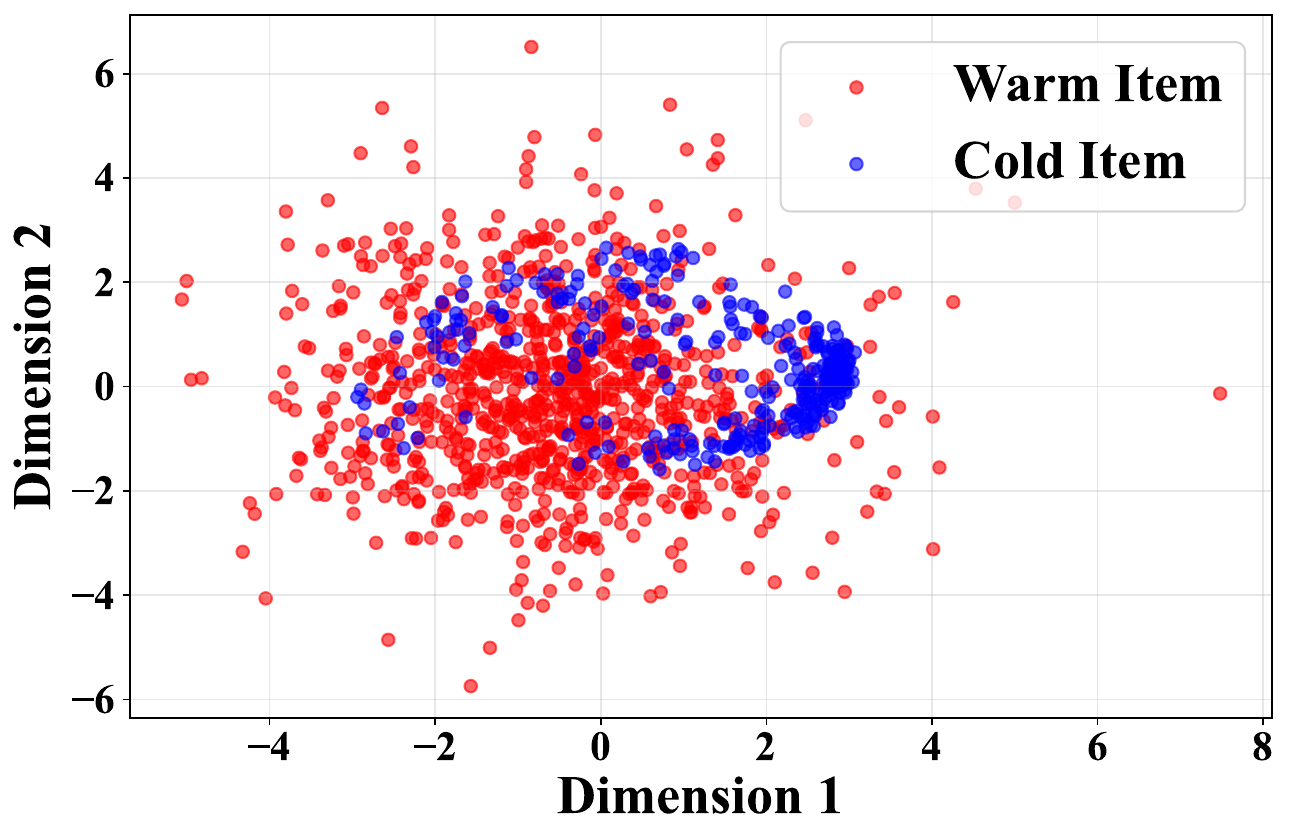}
        \caption{Round 50}
    \end{subfigure}
    \begin{subfigure}{0.19\textwidth}
        \includegraphics[width=\linewidth]{./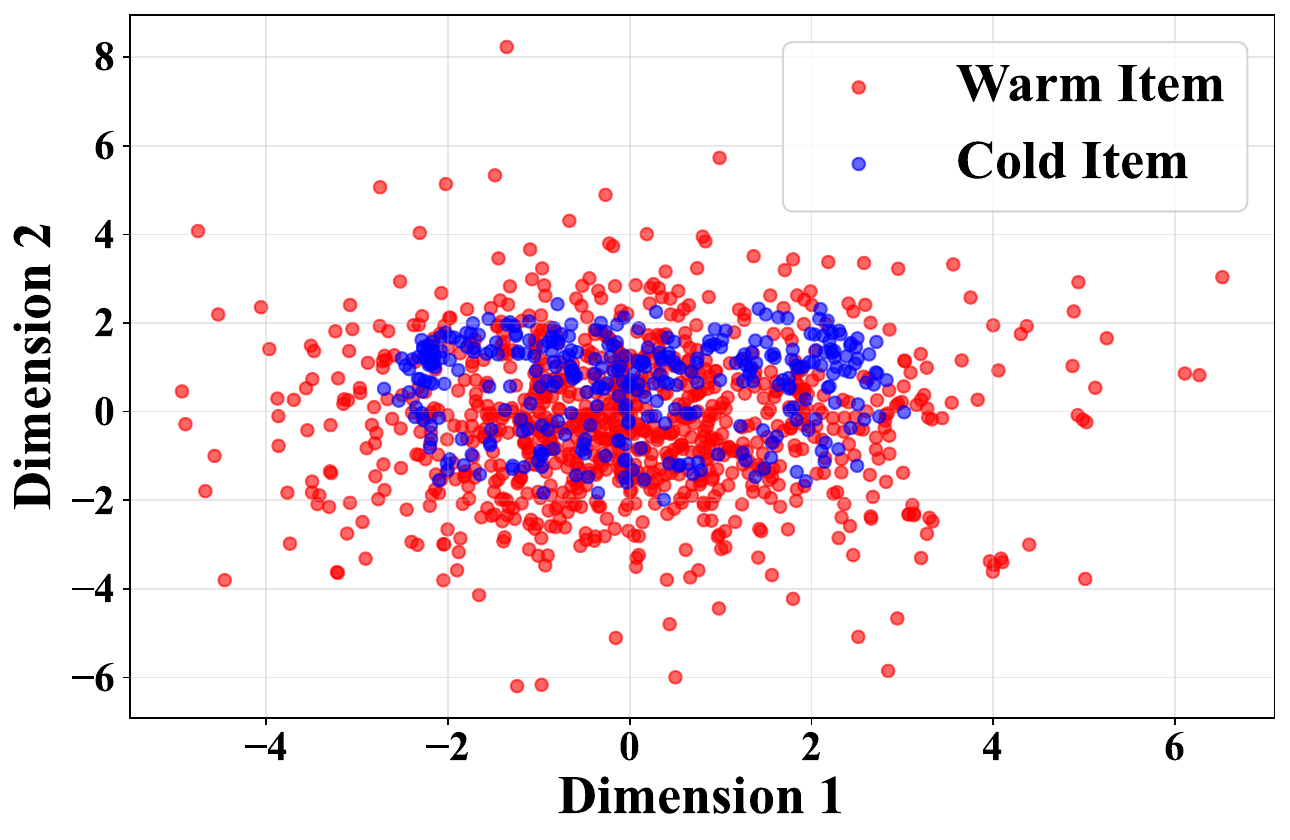}
        \caption{Round 60}
    \end{subfigure}
    \begin{subfigure}{0.19\textwidth}
        \includegraphics[width=\linewidth]{./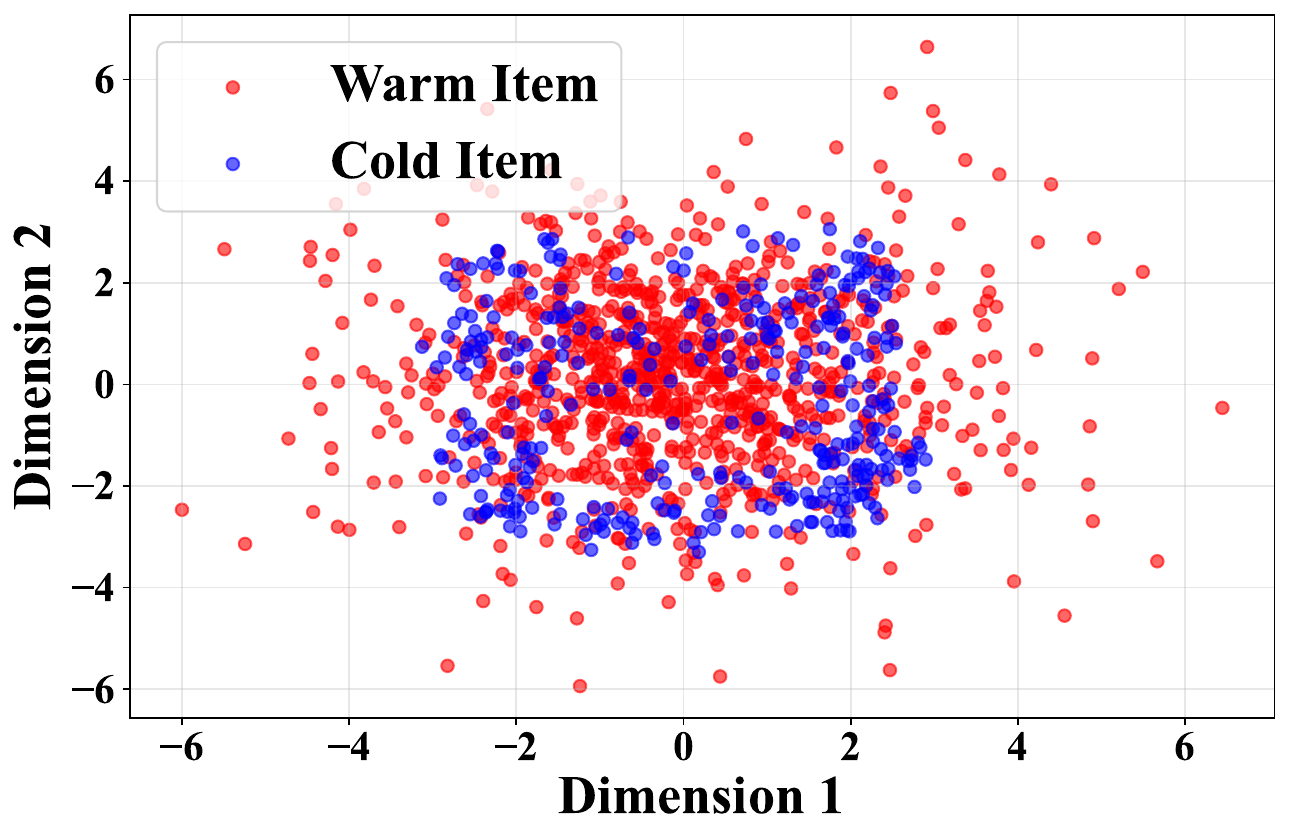}
        \caption{Round 70}
    \end{subfigure}
    \begin{subfigure}{0.19\textwidth}
        \includegraphics[width=\linewidth]{./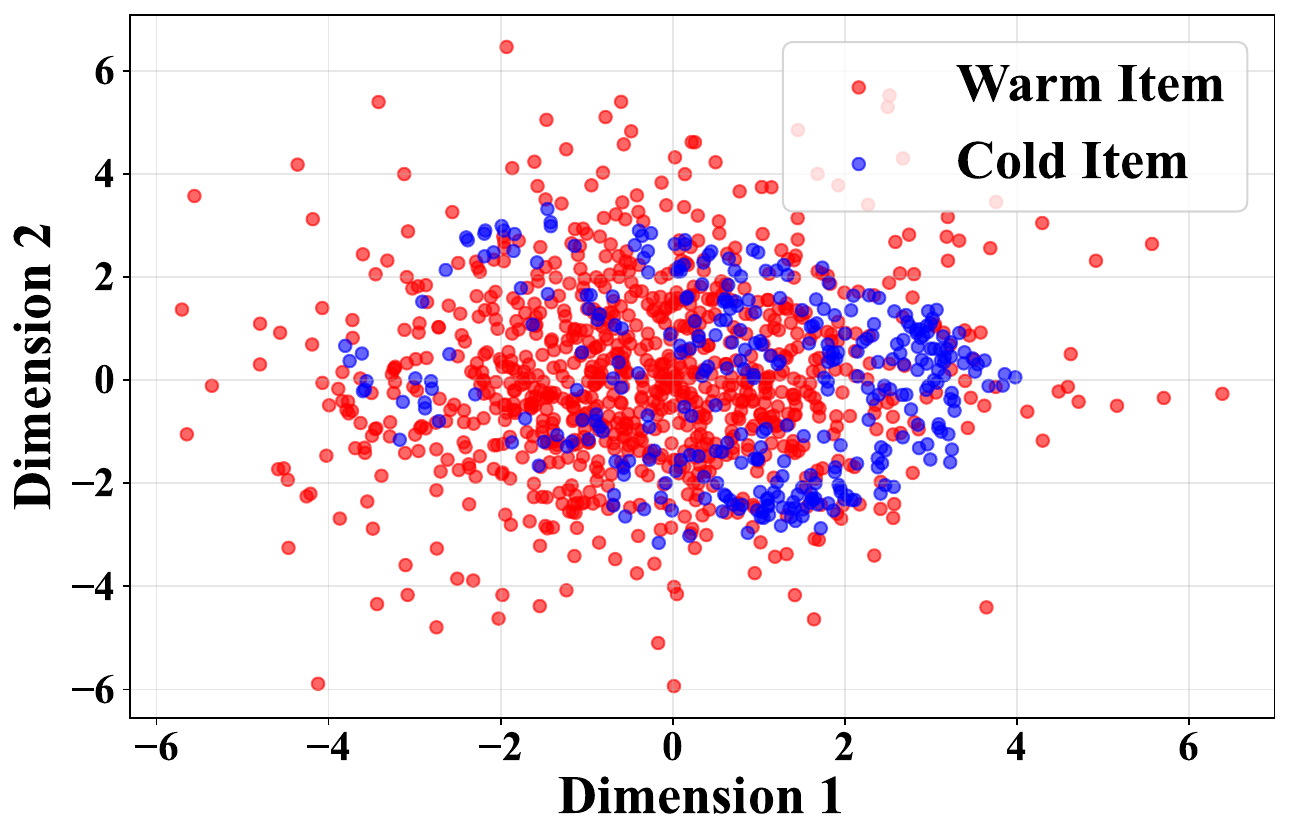}
        \caption{Round 80}
    \end{subfigure}
    \begin{subfigure}{0.19\textwidth}
        \includegraphics[width=\linewidth]{./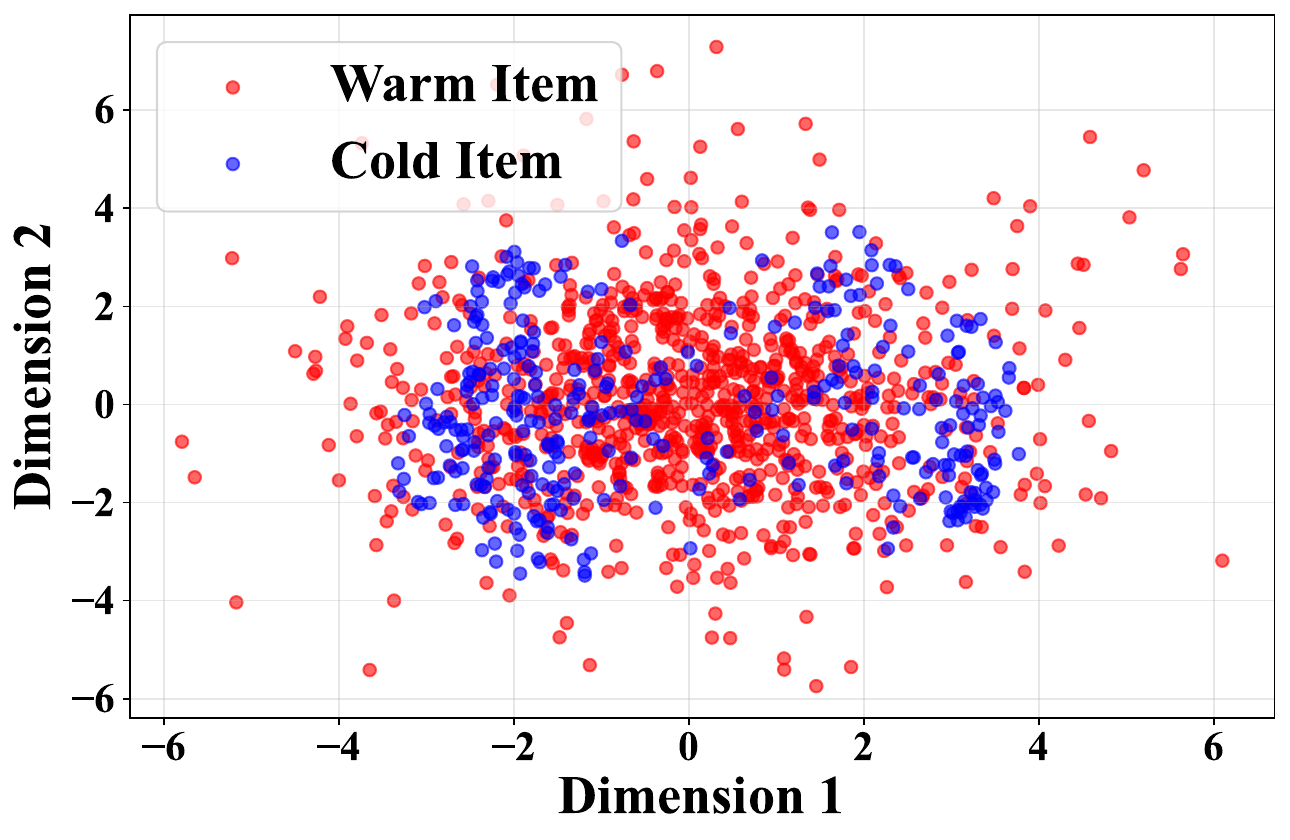}
        \caption{Round 90}
    \end{subfigure}

    \caption{The evolution of cold-start item embeddings during training in the Food dataset. Early in training, these embeddings deviate from the warm item distribution, but progressively shift toward and eventually align with the overall warm item embedding distribution.}
    \label{fig:embedding_distribution}
\end{figure*}
\subsection{Embedding Distribution (RQ2)} 
As illustrated in Fig. \ref{fig:embedding_distribution}, we visualize the distribution of embeddings for warm and cold-start items across training rounds. Initially, the embeddings generated by MDiffFR for cold-start items deviate significantly from the distribution center of warm items. As training progresses, these embeddings gradually approach the warm item distribution. By round 30, the generated embeddings are centered around the true distribution, and from round 40 onward, the cold-start embeddings increasingly align with the overall distribution of warm items. By round 60, the generated cold-start embeddings effectively capture the underlying distribution represented by the warm item embeddings, thereby alleviating the embedding misalignment issue observed in mapping-based methods.

\begin{table*}[t]
\centering
\small
\setlength{\tabcolsep}{2pt}
\renewcommand{\arraystretch}{1.1}
\caption{Ablation study evaluating the effectiveness of modality guidance in MDiffFR. \textbf{w/o Guidance} and \textbf{w/o Attention} denote removing the guidance condition and attention-based dynamic fusion mechanism, respectively, while \textbf{w/ Zero} and \textbf{w/ Rand} indicate substituting the guidance condition with zero and random vectors, respectively.}
\begin{tabular}{l|c|ccc|ccc|ccc|ccc}
\hline
\multirow{2}{*}{\textbf{Model}} & \multirow{2}{*}{\textbf{Metric}} & \multicolumn{3}{c|}{\textbf{KU}} & \multicolumn{3}{c|}{\textbf{Food}} & \multicolumn{3}{c|}{\textbf{Dance}} & \multicolumn{3}{c}{\textbf{Movie}} \\
\cline{3-14}
& & @20 & @50 & @100 & @20 & @50 & @100 & @20 & @50 & @100 & @20 & @50 & @100 \\
\hline

\multirow{3}{*}{w/o Guidance}
& Recall    & 2.13 & 4.78 & 8.54  & 5.26 & 14.55 & 30.75 & 4.90 & 11.25 & 28.57 & 3.82 & 9.88 & 19.38 \\
& Precision & 0.26 & 0.24 & 0.22  & 0.41 & 0.46  & 0.49  & 0.50 & 0.48  & 0.59  & 0.38 & 0.38 l& 0.37 \\
& NDCG      & 1.42 & 2.30 & 3.31  & 2.89 & 5.68  & 9.50  & 2.59 & 4.85  & 9.79  & 2.20 & 4.06 & 6.37 \\
\cline{1-14}

\multirow{3}{*}{w/o Attention}
& Recall    & 1.38 & 10.09 & 13.20 & 6.57 & 16.35 & 33.56 & 8.25 & 15.41 & 28.37 & 2.94 & 8.12 & 17.22 \\
& Precision & 0.20 &  0.49 &  0.34 & 0.53 &  0.52 &  0.53 & 0.80 &  0.61 &  0.58 & 0.28 & 0.31 &  0.32 \\
& NDCG      & 0.88 &  3.68 &  4.37 & 4.42 &  6.90 & 10.89 & 4.48 &  6.02 &  9.15 & 1.74 & 3.31 &  5.60 \\
\cline{1-14}

\multirow{3}{*}{w/ Zero}
& Recall    & 8.44 & 10.71 & 16.81 & 7.08 & 17.24 & 34.27 & 6.22 & 14.46 & 27.20 & 4.48 & 10.02 & 18.61 \\
& Precision & 0.94 & 0.50  & 0.42  & 0.58 & 0.56  & 0.55  & 0.65 & 0.61  & 0.56  & 0.42 & 0.37 & 0.35 \\
& NDCG      & 4.53 & 4.51  & 6.33  & 3.61 & 6.83  & 10.61 & 3.44 & 6.09  & 9.03  & 2.42 & 4.08 & 6.01 \\
\cline{1-14}

\multirow{3}{*}{w/ Rand}
& Recall    & 1.54 & 12.30 & 29.95 & 6.81 & 17.49 & 34.69 & 8.12 & 19.15 & 33.31 & 3.55 & 9.47 & 18.80 \\
& Precision & 0.24 & 0.58  & 0.67  & 0.55 & 0.56  & 0.56  & 0.80 & 0.76  & 0.67  & 0.34 & 0.36 & 0.35 \\
& NDCG      & 1.08 & 4.43  & 8.95  & 3.45 & 6.81  & 11.11 & 4.79 & 8.00  & 11.16 & 1.70 & 3.47 & 5.74 \\
\cline{1-14}

\multirow{3}{*}{MDiffFR}
& Recall    & \textbf{10.04} & \textbf{19.46} & \textbf{30.18} & \textbf{7.50} & \textbf{19.23} & \textbf{35.92} & \textbf{10.29} & \textbf{21.37} & \textbf{35.19} & \textbf{4.67} & \textbf{11.05} & \textbf{20.87} \\
& Precision & \textbf{1.08}  & \textbf{0.85}  & \textbf{0.67}  & \textbf{0.60} & \textbf{0.62}  & \textbf{0.57}  & \textbf{1.05}  & \textbf{0.87}  & \textbf{0.71}  & \textbf{0.44} & \textbf{0.42} & \textbf{0.39} \\
& NDCG      & \textbf{4.37}  & \textbf{7.07}  & \textbf{9.24}  & \textbf{4.00} & \textbf{8.08}  & \textbf{11.76} & \textbf{5.94}  & \textbf{8.74}  & \textbf{11.76} & \textbf{2.60} & \textbf{4.49} & \textbf{6.82} \\

\bottomrule
\end{tabular}
\label{tab:ablation-results}

\end{table*}

\subsection{Ablation Study (RQ3)}

We conducted ablation experiments to explore the effectiveness of modality-guided learning. As shown in Table \ref{tab:ablation-results}, the absence of guidance conditions results in a substantial decline in performance (e.g., Recall@20 drops by 79\% on the KU dataset), highlighting the importance of semantic alignment in embedding generation. Additionally, removing the attention-based dynamic fusion mechanism also results in a significant performance degradation, which empirically establishes that the integration of adaptive modality guidance is critical for achieving semantically consistent embedding generation. Furthermore, using non-informative conditions as guidance (e.g., zero or random vectors) degrades model performance, demonstrating that the quality of modality guidance directly impacts the effectiveness of the diffusion-based embedding reconstruction.

\subsection{Privacy Analysis (RQ4)}
To evaluate the privacy guarantee of MDiffFR, we design an inversion attack experiment to reconstruct the original modality embeddings. Consistent with the privacy analysis setting, we assume that the attacker has access to partial original modality information as well as all distributed embeddings of cold-start items. In our experiment, we assume that 20\% of the original modality embeddings are leaked, and we construct a multilayer neural network as the attack function $f^{-1}_\theta(y)$, where $y=f_\theta(x)$ and $x$ denotes the embedding of the input modality. The attacker aims to reconstruct an approximation \(\hat{\bm{x}}\) of the original input \(\bm{x}\) by minimizing the reconstruction error between the estimated and true inputs. The corresponding optimization objective is formulated as:
\begin{equation}
    \arg\min_{\theta} \| f^{-1}_\theta(\bm{y}) - \bm{x} \|^2,
\end{equation}

We adopt Mean Squared Error (MSE), Mean Absolute Error (MAE), Cosine Similarity (Cosine), and Pearson Correlation Coefficient (Pearson) as evaluation metrics to assess the effectiveness of the privacy guarantee. Specifically, MSE and MAE measure the reconstruction error between the recovered embeddings and the original modality information, where higher values indicate stronger privacy protection. In contrast, Cosine and Pearson quantify the similarity between the reconstructed and original embeddings, where larger values reflect higher privacy leakage. As shown in Fig. \ref{fig:Radar_Chart}, MDiffFR consistently achieves nearly zero Pearson and Cosine similarities across all four datasets, suggesting that the vectors generated by inversion attacks are almost uncorrelated with the true modality features. Only the KU dataset shows a marginally positive Cosine similarity, but it remains considerably lower than that of IPFedRec. In terms of MSE and MAE, MDiffFR yields higher values than IPFedRec, indicating greater discrepancy between the reconstructed vectors and the original inputs. 
These results demonstrate that MDiffFR offers significantly stronger privacy protection and effectively mitigates information leakage under inversion attacks.

\begin{figure}[h]
    \centering
    \begin{minipage}{0.85\columnwidth} % 必须用 minipage 包住
    \centering
    \begin{subfigure}[t]{0.24\columnwidth}
        \includegraphics[width=\linewidth]{./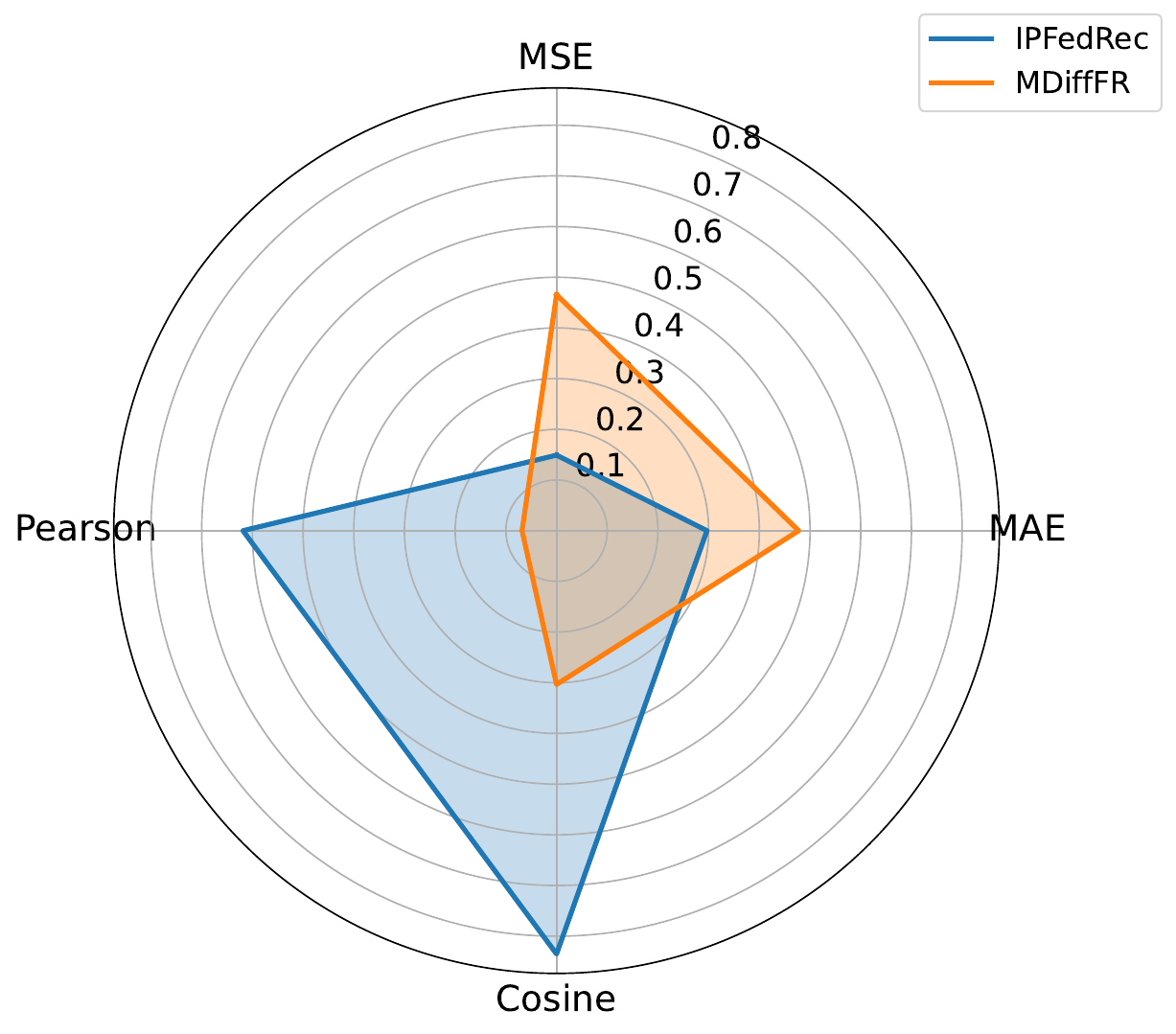}
        \caption{KU}
    \end{subfigure}
    \begin{subfigure}[t]{0.24\columnwidth}
        \includegraphics[width=\linewidth]{./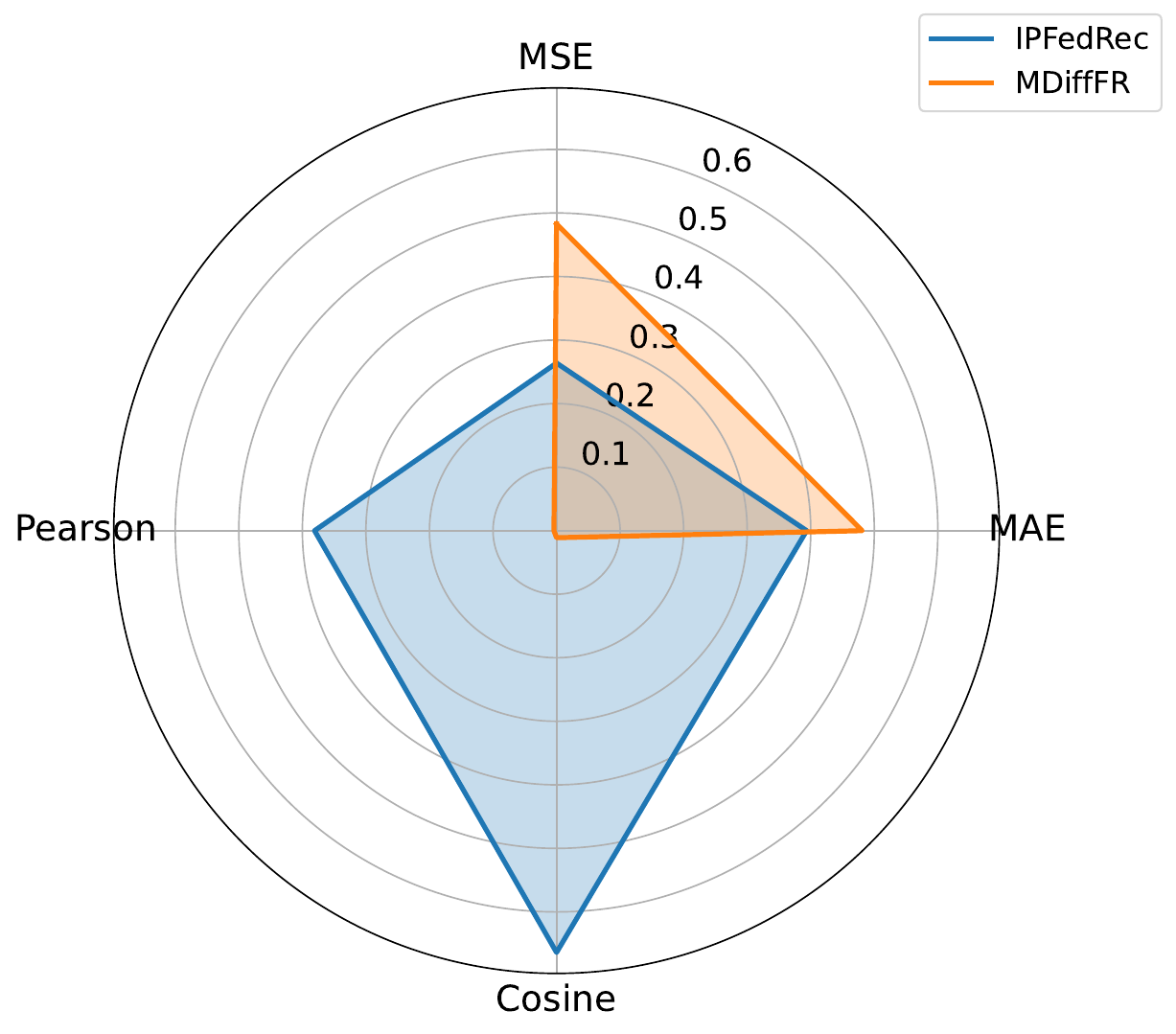}
        \caption{Food}
    \end{subfigure}
    \begin{subfigure}[t]{0.24\columnwidth}
        \includegraphics[width=\linewidth]{./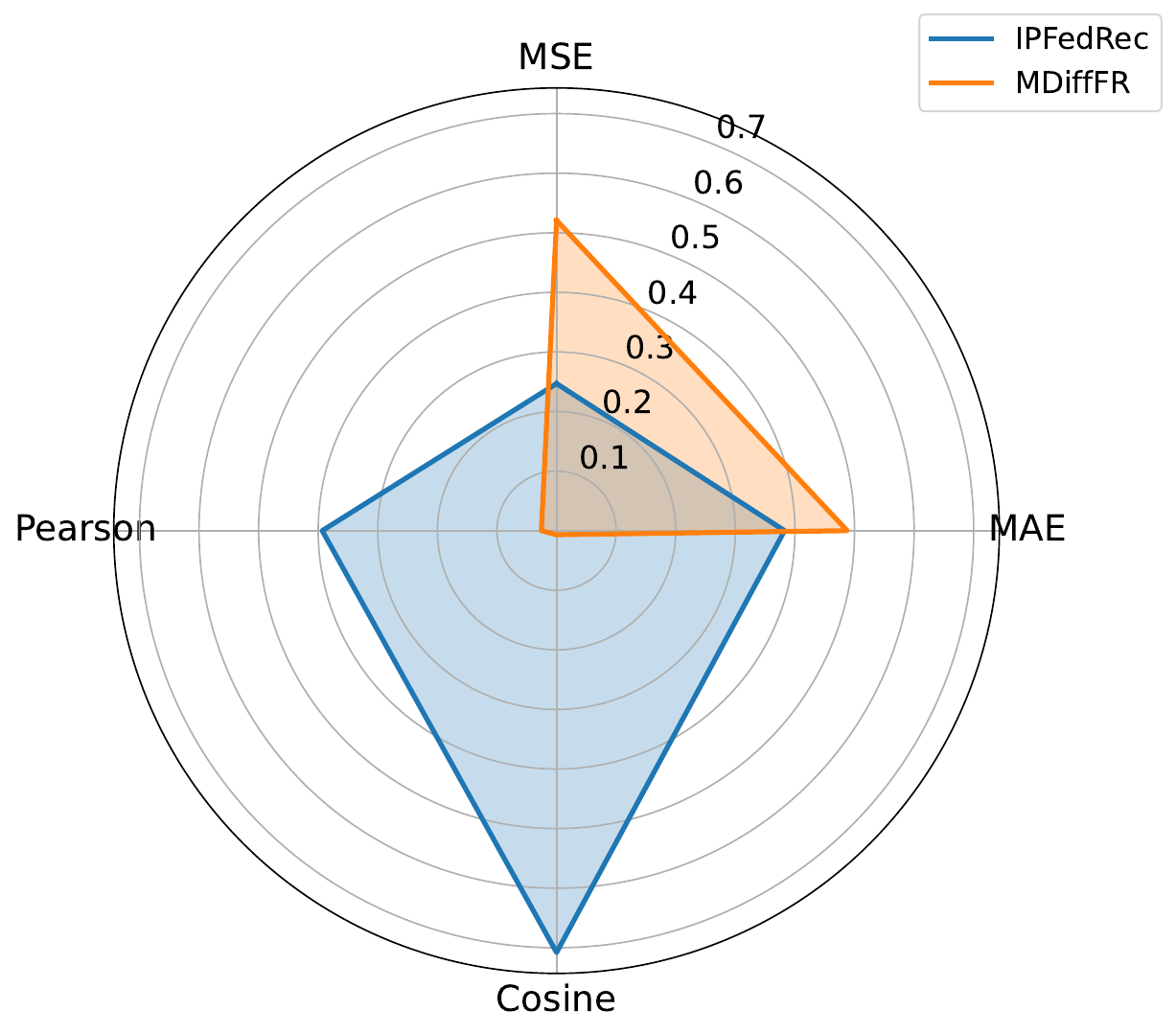}
        \caption{Dance}
    \end{subfigure}
    \begin{subfigure}[t]{0.24\columnwidth}
        \includegraphics[width=\linewidth]{./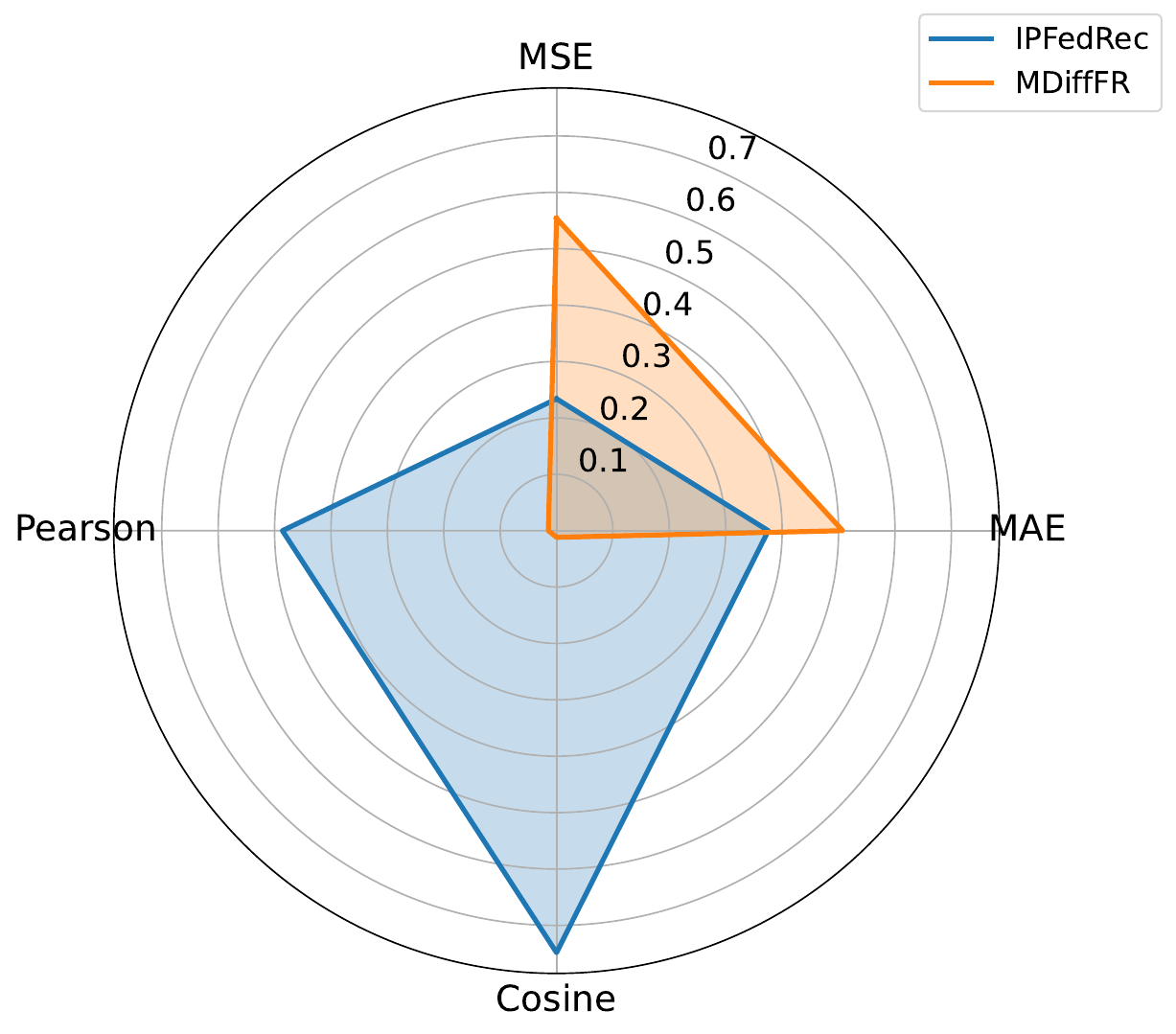}
        \caption{Movie}
    \end{subfigure}
    \end{minipage}
    \caption{Radar chart of different methods’ performance against inversion attacks over four metrics.}
    \label{fig:Radar_Chart}
\end{figure}

To further evaluate the privacy-preserving capability of MDiffFR, we analyze the structural similarity difference. Specifically, we first compute the pairwise similarities among items based on both the original modality embeddings and the reconstructed embeddings. We then measure the difference between the two similarity matrices. A value closer to zero indicates that the attacker can better reconstruct the original similarity distribution of items, implying weaker privacy protection. As shown in Fig. \ref{fig:Structural_similarity}, we randomly visualize the structural similarity differences of 20 items. The differences for IFedRec are much closer to zero, indicating poorer privacy protection. In contrast, MDiffFR exhibits more pronounced deviations, suggesting that the attacker fails to accurately reconstruct the original similarity distribution.

\begin{figure}[h]
    \centering
    \begin{minipage}{0.85\columnwidth} % 必须用 minipage 包住
    \centering
    \begin{subfigure}[t]{0.24\columnwidth}
        \includegraphics[width=\linewidth]{./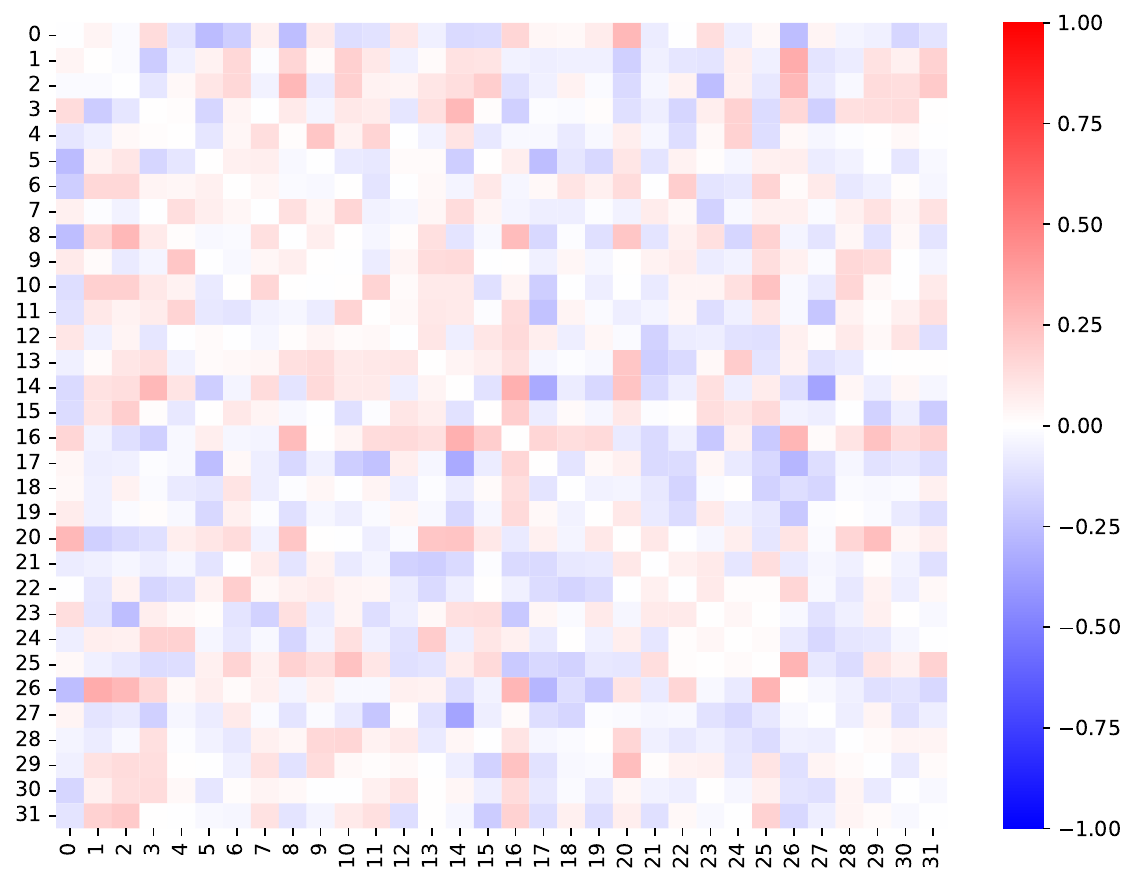}
        \caption{KU-IPFedRec}
    \end{subfigure}
    \begin{subfigure}[t]{0.24\columnwidth}
        \includegraphics[width=\linewidth]{./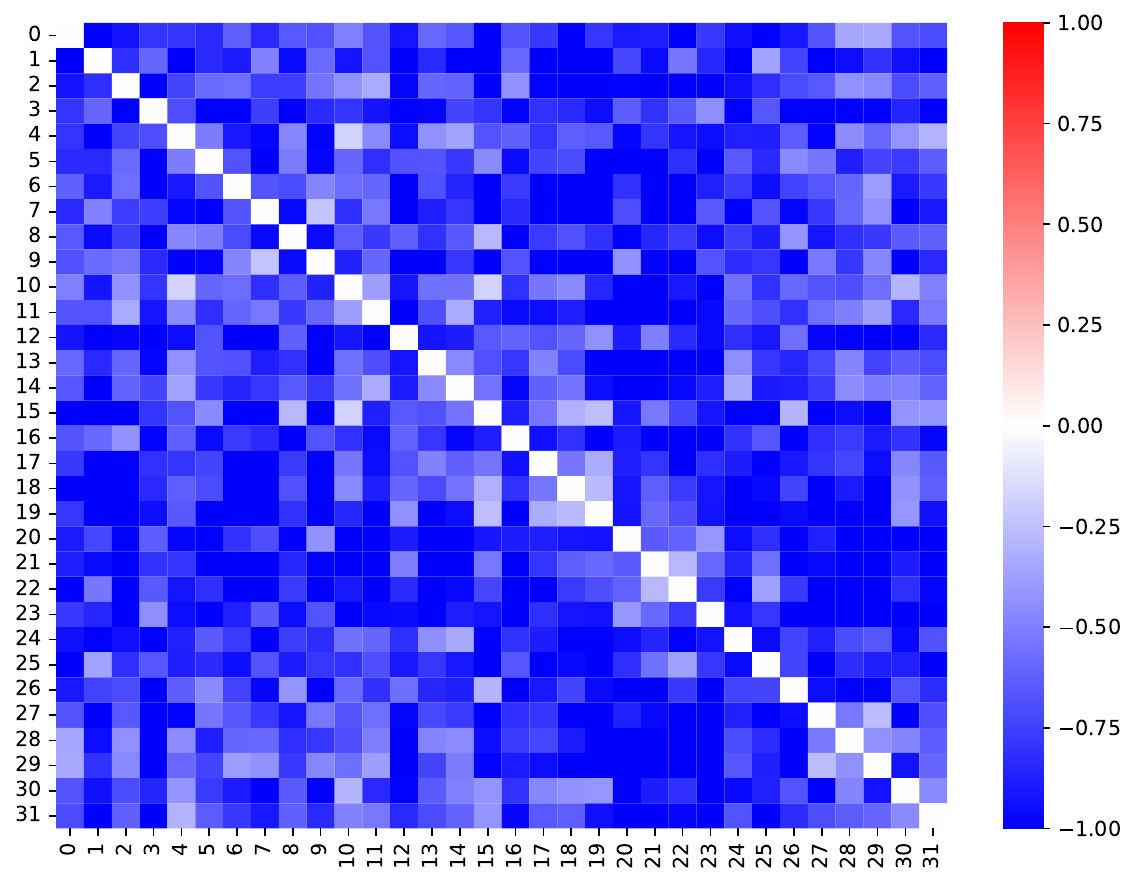}
        \caption{KU-MDiffFR}
    \end{subfigure}
    \begin{subfigure}[t]{0.24\columnwidth}
        \includegraphics[width=\linewidth]{./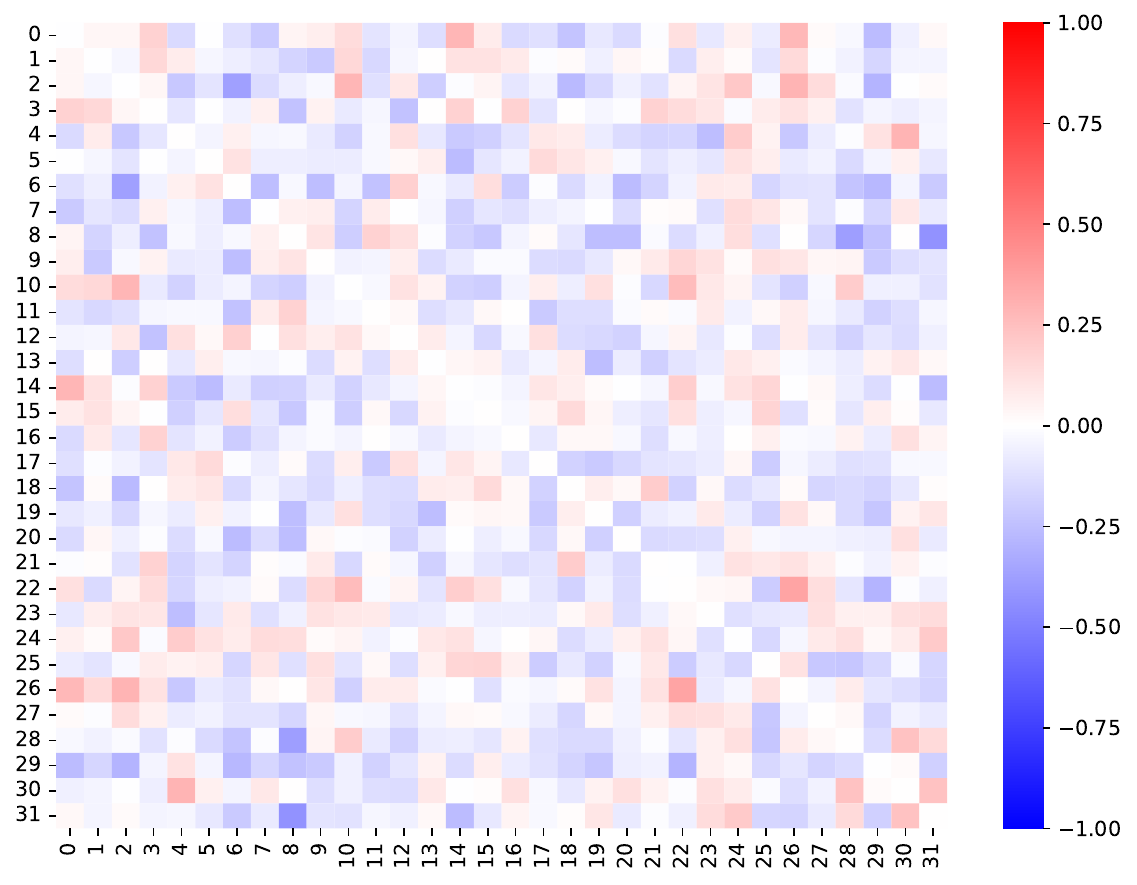}
        \caption{Food-IPFedRec}
    \end{subfigure}
    \begin{subfigure}[t]{0.24\columnwidth}
        \includegraphics[width=\linewidth]{./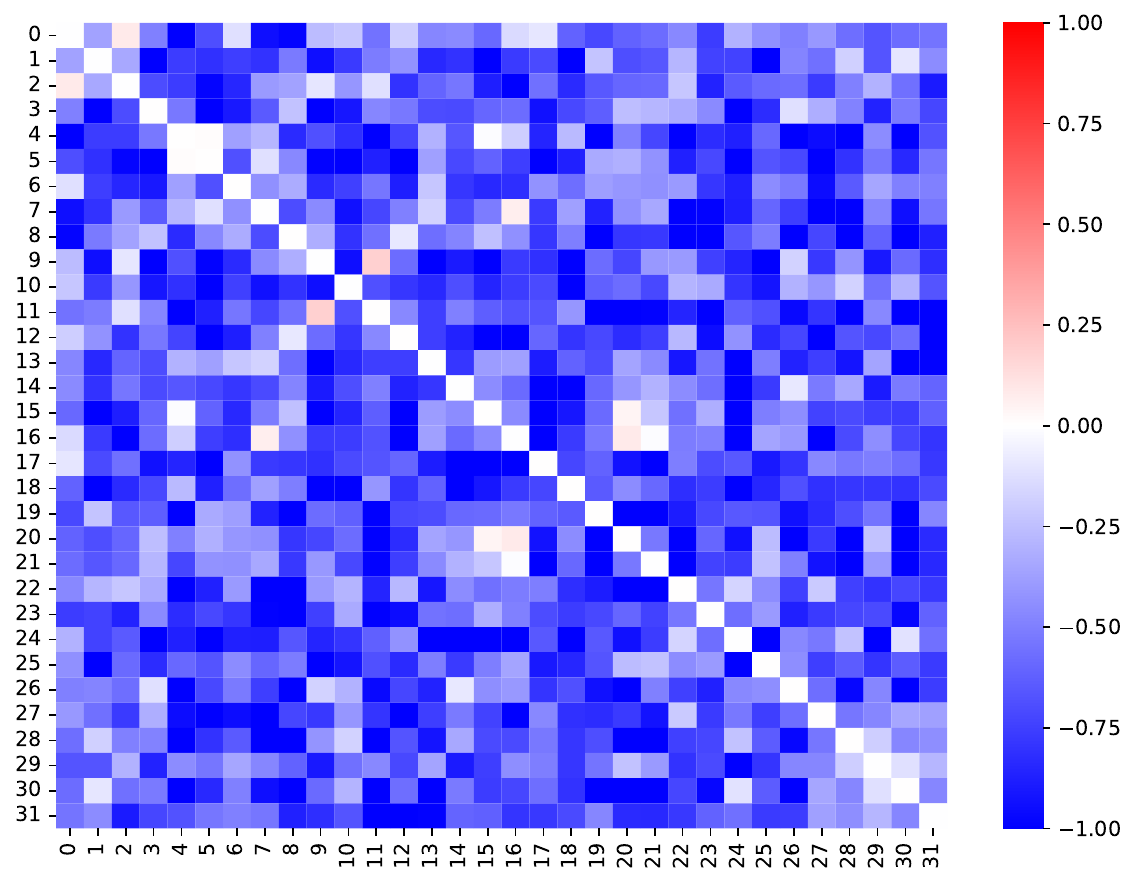}
        \caption{Food-MDiffFR}
    \end{subfigure}
    \\
    \begin{subfigure}[t]{0.24\columnwidth}
        \includegraphics[width=\linewidth]{./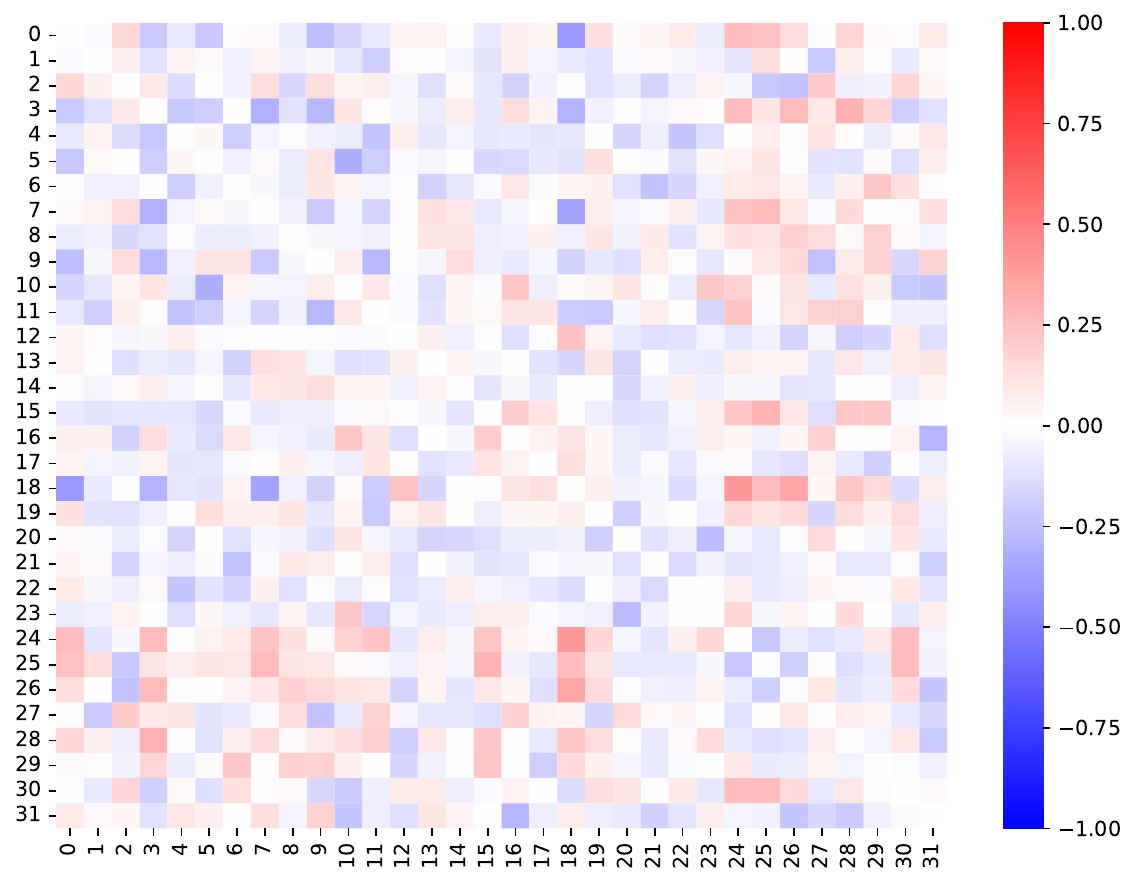}
        \caption{Dance-IPFedRec}
    \end{subfigure}
    \begin{subfigure}[t]{0.24\columnwidth}
        \includegraphics[width=\linewidth]{./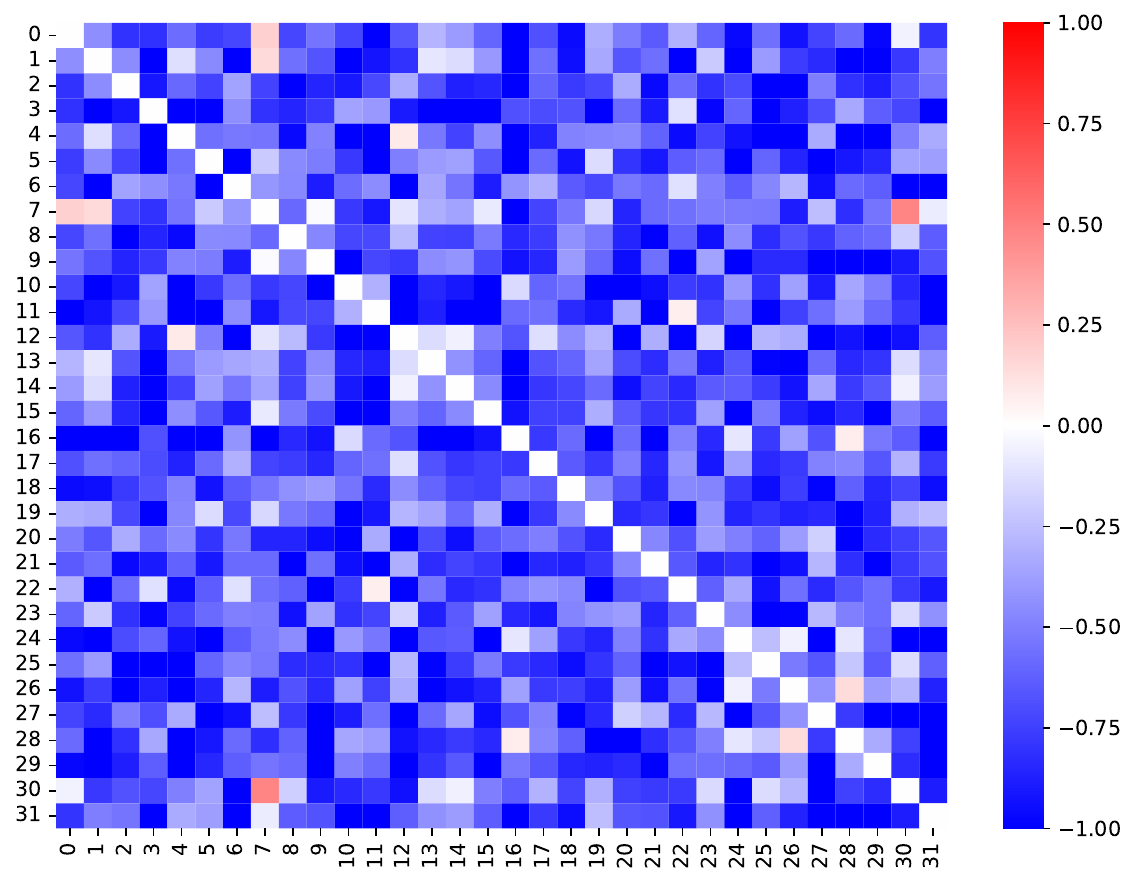}
        \caption{Dance-MDiffFR}
    \end{subfigure}
    \begin{subfigure}[t]{0.24\columnwidth}
        \includegraphics[width=\linewidth]{./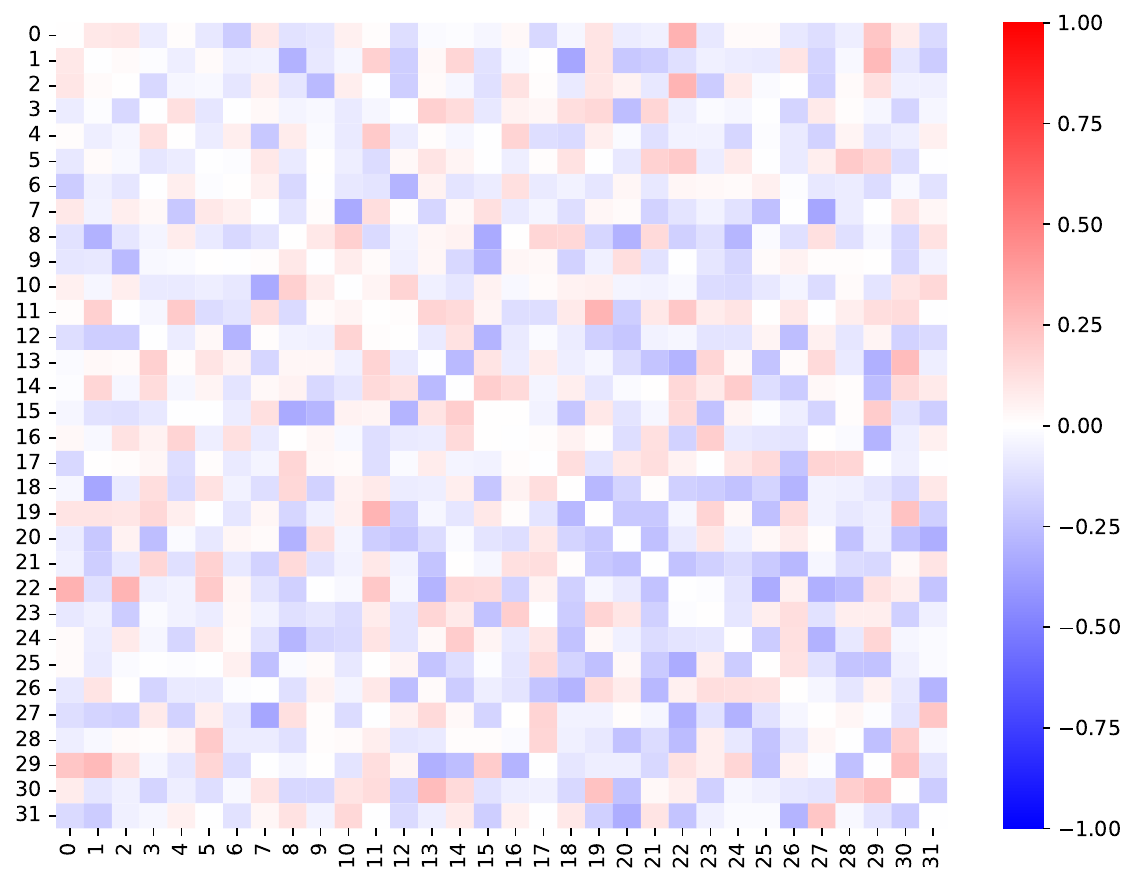}
        \caption{Movie-IPFedRec}
    \end{subfigure}
    \begin{subfigure}[t]{0.24\columnwidth}
        \includegraphics[width=\linewidth]{./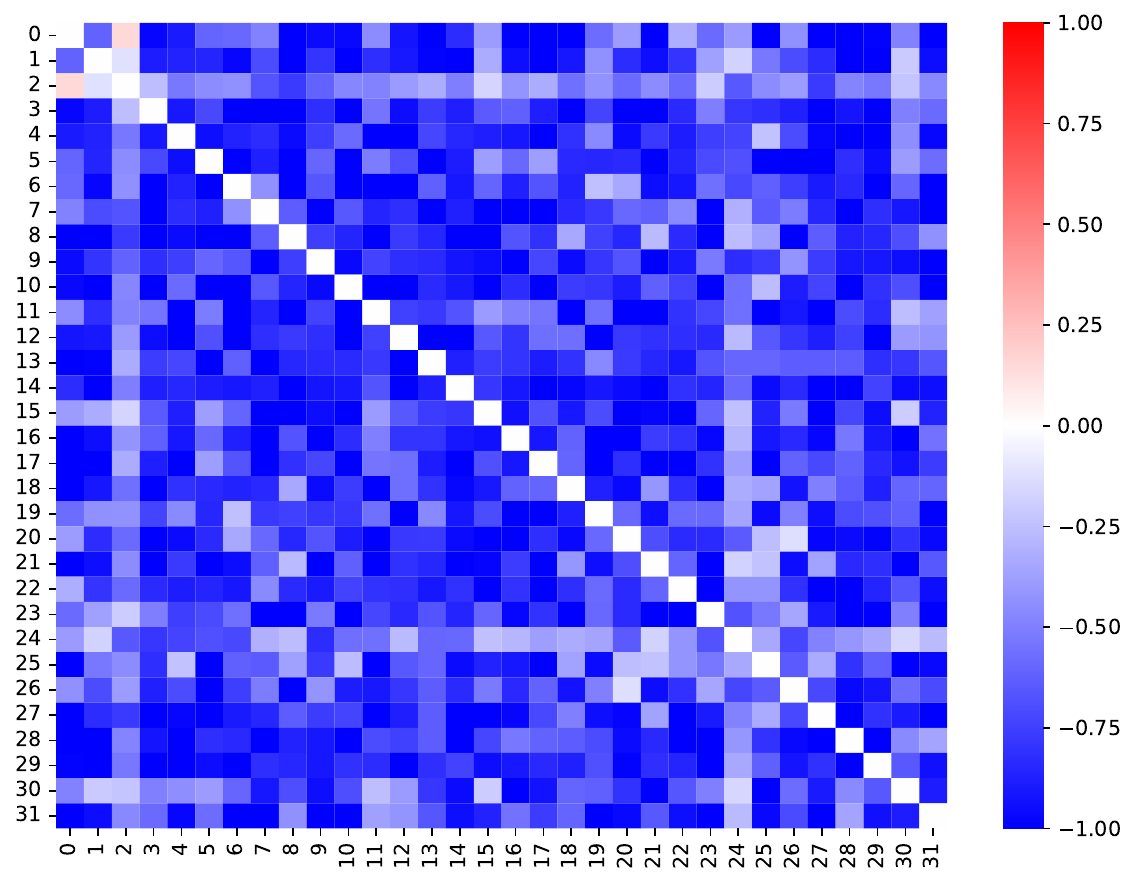}
        \caption{Movie-MDiffFR}
    \end{subfigure}
    \end{minipage}
    \caption{Structural similarity differences of generated embeddings for cold-start items.}
    \label{fig:Structural_similarity}
\end{figure}

In addition, to further enhance the differential privacy guarantees of MDiffFR, we incorporate Local Differential Privacy (LDP) into the framework to perturb the data uploaded to the server. Specifically, we add zero-mean Laplace noise $\epsilon \sim \text{Laplace}\left( 0, \delta \right)$ to the item embeddings $\bm{e}_u$ of client $u $, where $\delta$ denotes the noise intensity. A larger $\delta$ represents stronger perturbations to the embeddings, thereby providing higher privacy protection.
Unlike traditional approaches, our model generates embeddings based on noise sampled from a prior distribution, which makes it more robust to noise. Therefore, we set a higher noise intensity $\delta=[1, 10, 20, 30, 40, 50]$. As shown in Table \ref{tab:delta-food}, the experimental results indicate that as the noise intensity increases, the model's performance experiences only a slight decrease, demonstrating that differential privacy can be maintained without substantially compromising performance.

\begin{table}[!htbp]
\centering
\small
\setlength{\tabcolsep}{4pt}
\renewcommand{\arraystretch}{1.2}
\caption{Impact of the noise intensity $\delta$ for MDiffFR.}
\begin{tabular}{c|ccccccc}
\hline
\multirow{2}{*}{\textbf{Metric}} & \multicolumn{7}{c}{\textbf{Intensity} $\delta$ } \\
\cline{2-8}

{} & \textbf{0} & \textbf{1} & \textbf{10} & \textbf{20} & \textbf{30} & \textbf{40} & \textbf{50} \\
\hline
Recall    & \textbf{19.23} & 18.06 & 15.36 & 16.09 & 15.63 & 18.48 & 17.24 \\
Precision & \textbf{0.62}  & 0.56  & 0.48  & 0.51  & 0.50  & 0.59  & 0.55 \\
NDCG      & \textbf{8.08}  & 6.34  & 5.81  & 5.88  & 5.93  & 7.16  & 6.23 \\
\hline
\end{tabular}
\label{tab:delta-food}
\end{table}

\subsection{Hyper-parameter Analysis (RQ5)}
% \subsubsection{Diffusion activation.}
% During the early stage of training in FR, item embeddings have not yet fully captured local interaction knowledge and global collaborative information. Therefore, we disable the diffusion model in the initial rounds to avoid fitting the undertrained item embeddings. As shown in Fig. \ref{fig:diffusion_activation}, the diffusion model is activated starting from different training rounds. As the diffusion activation round increases, the model performance initially degrades to varying extents. However, when the diffusion model is activated from round 50, the number of diffusion training rounds is effectively reduced by half, while the performance becomes comparable to training with diffusion from the beginning. This suggests that higher-quality initial embeddings enable the diffusion model to more quickly learn data distributions that are more effective for recommendations.

% \subsubsection{Embedding size.}
MDiffFR captures the distribution of item representations to generate embeddings for cold-start items. Therefore, the embedding dimension plays a crucial role in determining its ability to accurately capture such distributions. As shown in Fig. \ref{fig:embedding_size}, we investigate the impact of item embedding dimensionality on model performance. When the embedding dimension is set to 16 or 32, the model exhibits inferior performance across all four datasets, suggesting that excessively low-dimensional embeddings are insufficient to capture the intrinsic distribution of the item representations. The model achieves its optimal or near-optimal performance on all four datasets when the embedding dimension is increased to 64. Although a minor improvement is still observed on the Food and Movie datasets with higher dimensions, further increasing the dimension to 128 leads to a twofold rise in computational and storage costs, while yielding only negligible performance gains. Consequently, such a configuration presents a suboptimal balance between efficiency and effectiveness in practical scenarios. Hence, we adopt 64-dimensional embeddings as the default configuration.

% \begin{figure}[!t]
% \centering
% % \includegraphics[width=\columnwidth]{./image/start_diffusion.pdf}
% \includegraphics[width=0.6\columnwidth]{./image/diffusion_activation.pdf}
% \caption{Effect of diffusion activation.}
% \label{fig:diffusion_activation}
% \end{figure}

% \begin{figure}[h]
% \centering
% \includegraphics[width=\columnwidth]{./image/embedding_size.pdf}
% \caption{Effect of embedding size.}
% \label{fig:embedding_size}
% \end{figure}
\begin{figure}[h]
    \centering
    \begin{subfigure}[t]{0.24\columnwidth}
        \includegraphics[width=\linewidth]{./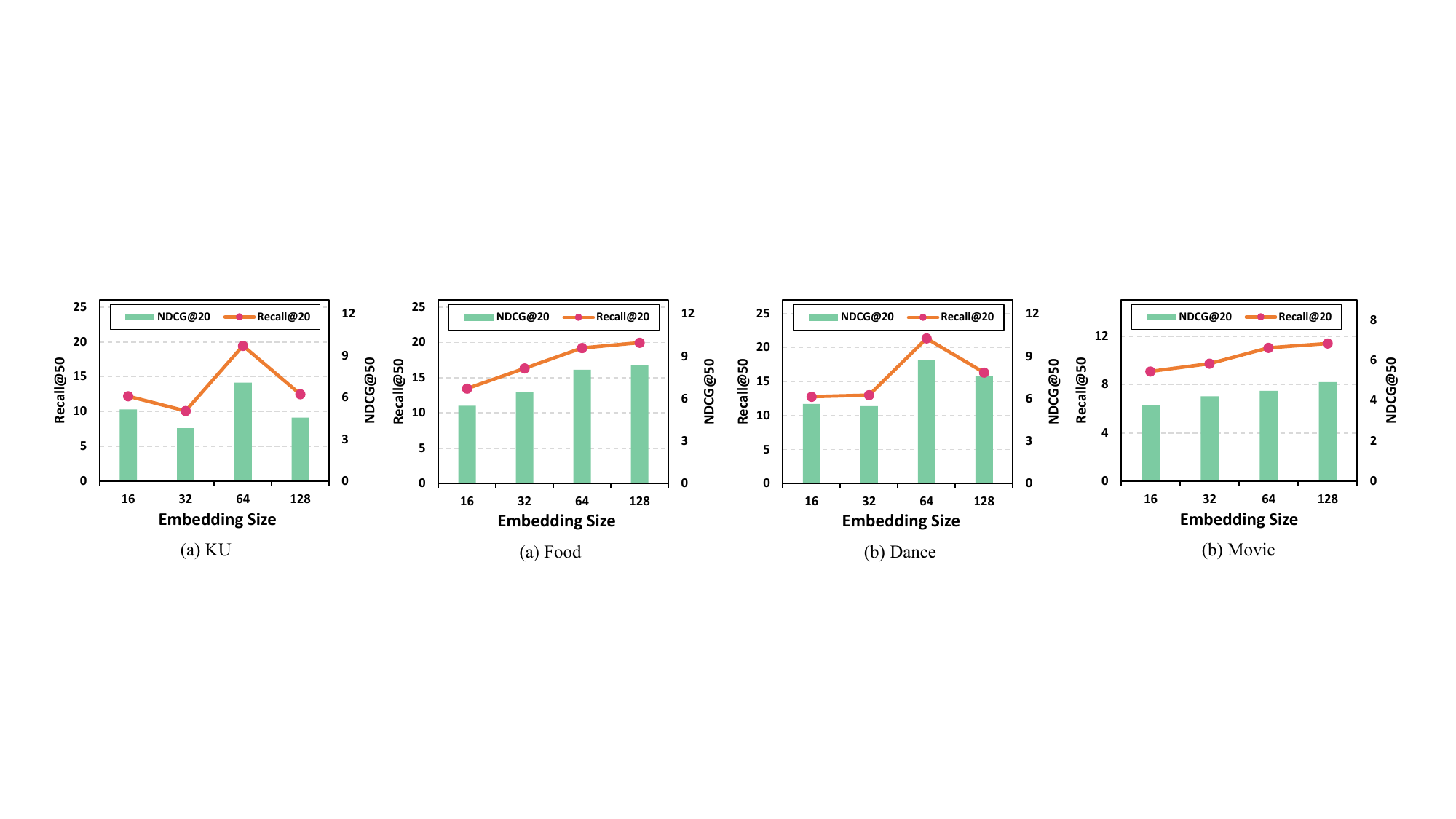}
        \caption{KU}
    \end{subfigure}
    \begin{subfigure}[t]{0.24\columnwidth}
        \includegraphics[width=\linewidth]{./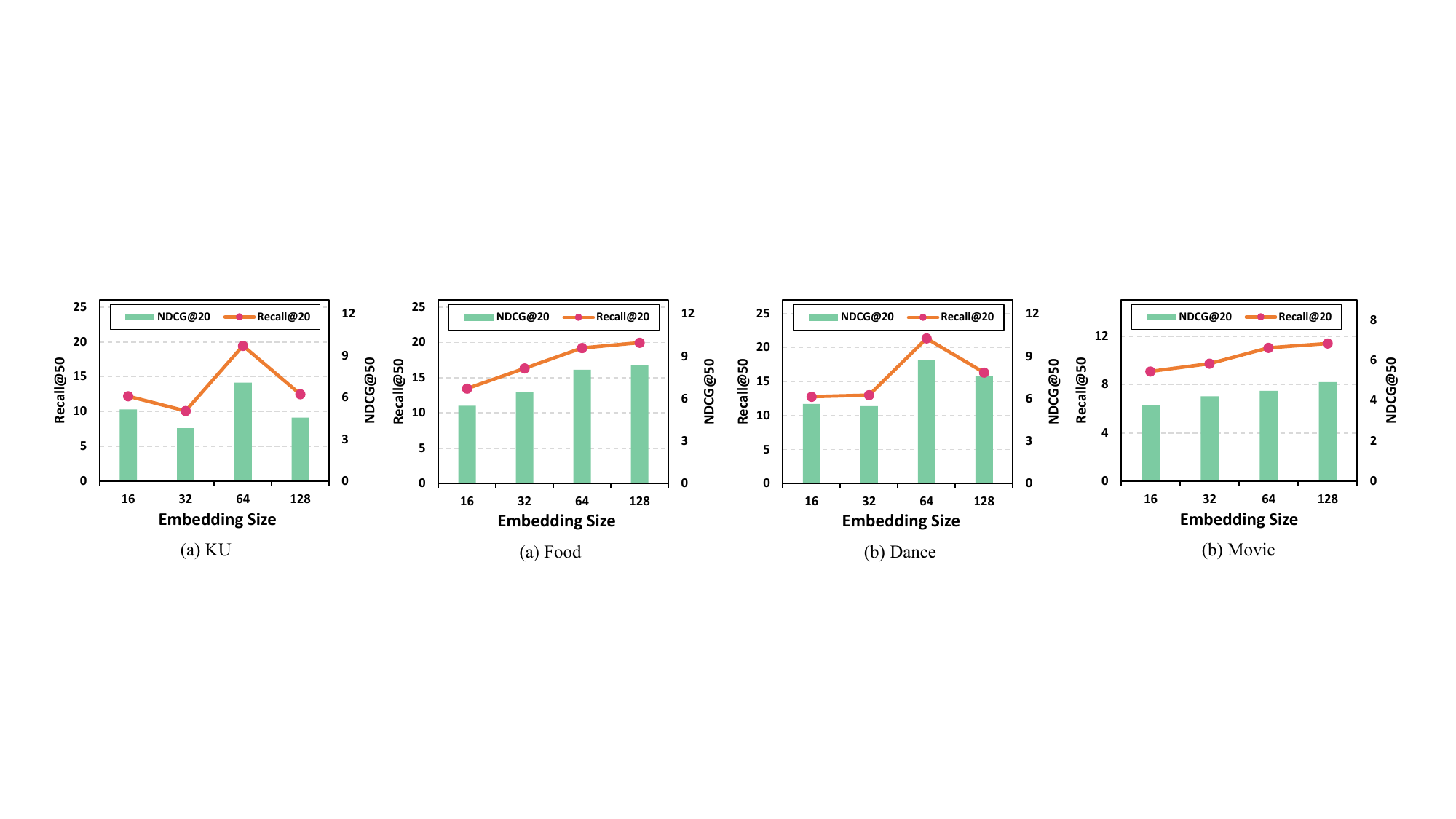}
        \caption{Food}
    \end{subfigure}
    \begin{subfigure}[t]{0.24\columnwidth}
        \includegraphics[width=\linewidth]{./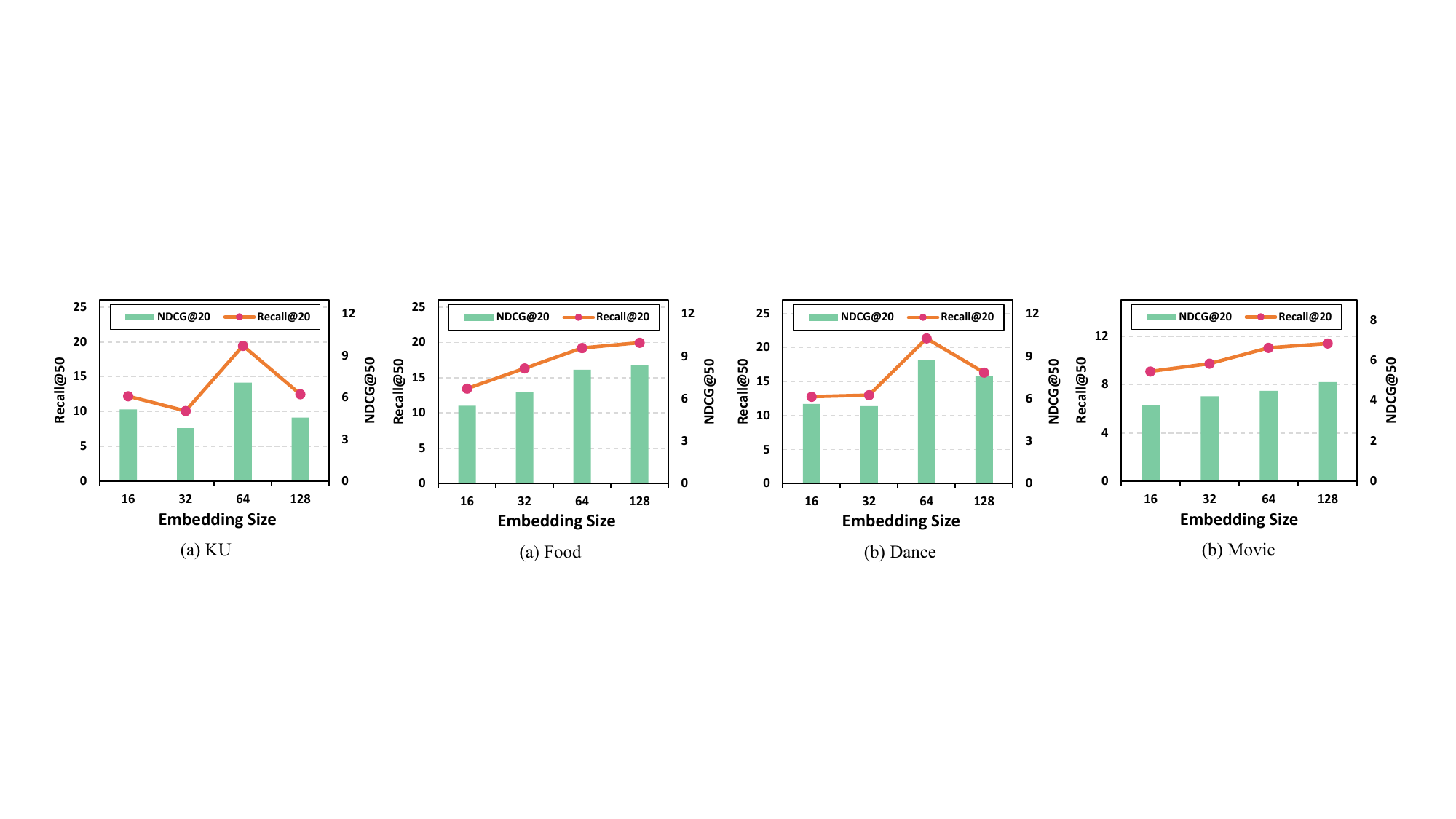}
        \caption{Dance}
    \end{subfigure}
    \begin{subfigure}[t]{0.24\columnwidth}
        \includegraphics[width=\linewidth]{./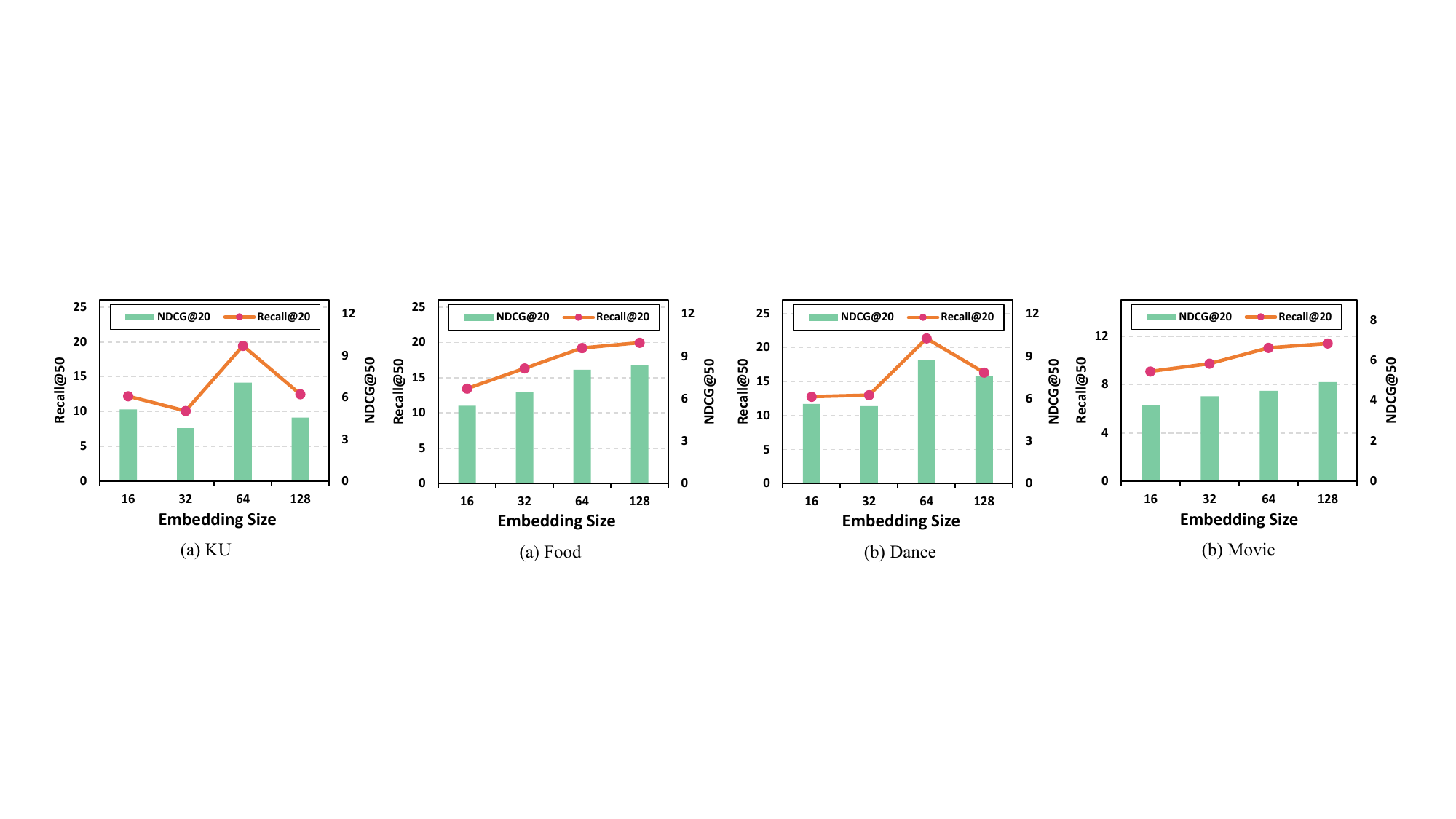}
        \caption{Movie}
    \end{subfigure}
    \caption{Effect of embedding size.}
    \label{fig:embedding_size}
\end{figure}

\section{Conclusion}
In this paper, we empirically find that existing FR methods based on the mapping paradigm tend to cause embedding misalignment when tackling the item cold-start problem, and may pose potential privacy risks due to their reliance on one-to-one mappings of item attributes. To address these issues, we propose MDiffFR, a novel generation-based method that learns the distribution of global item embeddings to generate embeddings for cold-start items. Extensive experiments show that MDiffFR not only outperforms state-of-the-art baselines and alleviates embedding misalignment, but also exhibits stronger privacy protection under adversarial inversion attacks. Meanwhile, our generation-based approach offers a promising new perspective for effectively addressing the item cold-start problem in FRs.

%%
%% The acknowledgments section is defined using the "acks" environment
%% (and NOT an unnumbered section). This ensures the proper
%% identification of the section in the article metadata, and the
%% consistent spelling of the heading.
% \begin{acks}
% To Robert, for the bagels and explaining CMYK and color spaces.
% \end{acks}

%%
%% The next two lines define the bibliography style to be used, and
%% the bibliography file.
\bibliographystyle{ACM-Reference-Format}
\bibliography{MDiffFR}

%%
%% If your work has an appendix, this is the place to put it.
% \appendix

% \section{Research Methods}

% \subsection{Part One}

% Lorem ipsum dolor sit amet, consectetur adipiscing elit. Morbi
% malesuada, quam in pulvinar varius, metus nunc fermentum urna, id
% sollicitudin purus odio sit amet enim. Aliquam ullamcorper eu ipsum
% vel mollis. Curabitur quis dictum nisl. Phasellus vel semper risus, et
% lacinia dolor. Integer ultricies commodo sem nec semper.

% \subsection{Part Two}

% Etiam commodo feugiat nisl pulvinar pellentesque. Etiam auctor sodales
% ligula, non varius nibh pulvinar semper. Suspendisse nec lectus non
% ipsum convallis congue hendrerit vitae sapien. Donec at laoreet
% eros. Vivamus non purus placerat, scelerisque diam eu, cursus
% ante. Etiam aliquam tortor auctor efficitur mattis.

% \section{Online Resources}

% Nam id fermentum dui. Suspendisse sagittis tortor a nulla mollis, in
% pulvinar ex pretium. Sed interdum orci quis metus euismod, et sagittis
% enim maximus. Vestibulum gravida massa ut felis suscipit
% congue. Quisque mattis elit a risus ultrices commodo venenatis eget
% dui. Etiam sagittis eleifend elementum.

% Nam interdum magna at lectus dignissim, ac dignissim lorem
% rhoncus. Maecenas eu arcu ac neque placerat aliquam. Nunc pulvinar
% massa et mattis lacinia.

\end{document}